\def\opt{\textsc{OPT}}%\text{$\mathcal{O}$}}
\def\mh{\mathcal{H}}
\def\nopt{\textsc{NA-OPT}}
\def\aopt{\textsc{OPT}}
\def\aalg{\textsc{ALG}}
\def\sap{\textsc{SAP}}
	\def\gla{\textsc{GLA}}
\def\alg{\textsc{R}}
\def\salg{r}
\def\optc{\opt^c}%\mathcal{O}^{c}}
\def\agrd{\text{\textsc{A-}Greedy}}%\text{$\text{Greedy}^{\textsc{Ad}}$}}
\def\grd{\text{Greedy}}
\def\ac{A}
\def\wapx{w^{apx}}
\def\aapx{a^{apx}}
\def\apx{\text{Apx-}}
\def\fc{F^{c}}
\def\gc{\h{G}} %\hat{G}}
\def\h#1{\widehat{{#1}}}%\hat{#1}}
\def\curn{\mathcal{N}}
\def\curac{\mathcal{\ac}}
\def\mv{\mathcal{D}}
\def\wpa{\textsc{RAVO}}
\def\no{\mathscr{A}}
\def\n1{\h{\mathscr{A}}}
\begin{document}
		\TITLE{Optimality of Non-Adaptive Algorithms in 
			Online Submodular Welfare Maximization with Stochastic Outcomes} %Post-Allocation Stochasticity} %: Non-Adaptive Greedy is Optimal}
	%	\TITLE{Optimality of {\color{blue}Non-Adaptivity} in 
	%	Online Submodular Welfare Maximization with Stochastic Outcomes} %Post-Allocation Stochasticity} %: Non-Adaptive Greedy is Optimal}
			%\TITLE{Online Submodular Welfare Maximization with Stochastic Outcomes} %Post-Allocation Stochasticity} %: Non-Adaptive Greedy is Optimal}
\ARTICLEAUTHORS{	\AUTHOR{	Rajan Udwani}
\AFF{Department of Industrial Engineering and Operations Research, University of California, Berkeley 
\EMAIL{rudwani@berkeley.edu}}}
%		% Enter all authors
% end of the block

\ABSTRACT{%
We generalize the problem of online submodular welfare maximization to incorporate various  stochastic elements that have gained significant attention in recent years. We show that a non-adaptive Greedy algorithm, which is oblivious to the realization of these stochastic elements, achieves the best possible competitive ratio among all polynomial-time algorithms, including adaptive ones, unless NP$=$RP. This result holds even when the objective function is not submodular but instead satisfies the weaker submodular order property. Our results unify and strengthen existing competitive ratio bounds across well-studied settings and diverse arrival models, showing that, in general, adaptivity to stochastic elements offers no advantage in terms of competitive ratio.

To establish these results, we introduce a technique that lifts known results from the deterministic setting to the generalized stochastic setting. The technique has broad applicability, enabling us to show that, in certain special cases, non-adaptive Greedy-like algorithms outperform the Greedy algorithm and achieve the optimal competitive ratio. We also apply the technique in reverse to derive new upper bounds on the performance of Greedy-like algorithms in deterministic settings by leveraging upper bounds on the performance of non-adaptive algorithms in stochastic settings.

}%

%\KEYWORDS{Submodular welfare, online optimization, non-adaptive algorithms, competitive ratio}
\maketitle

% Title page for title and abstract only.

%In the prior-free (or adversarial) world, customized algorithms with strong performance guarantees have been designed and deployed in a variety of applications %inspired by different applications 
%\cite{msvv, displayad, survey, fulfil, aouad2020online}. 
%%The algorithmic ideas behind these results have had a significant impact in practice, and 
%Even though these results are built on the same high level ideas, 
%their execution currently requires a fair degree of customization and technical expertise. This project aims to develop a new model and algorithm for prior-free online resource allocation that %applies across heterogeneous applications without needing significant customization or technical expertise
%can be implemented across heterogeneous applications without needing significant customization or technical expertise and establish it as the de facto standard. The will be transformative because it will lead to a single algorithmic result that, (1) Unifies the (state-of-the-art) algorithms and results for previously studied settings and (2) Generalizes the known art by capturing non-linear and combinatorial effects that are prevalent in emerging applications. 
% Paper body
{\color{red}
}
\section{Introduction}

In the online bipartite matching (OBM) problem \citep{kvv}, we have a bipartite graph between a set of offline vertices (resources) $I$ and a set of online vertices (arrivals) $[T]=\{1,2,\cdots,T\}$. The edges incident to an online vertex are revealed when the vertex arrives and the online algorithm must irrevocably decide which unmatched resource (if any) to match the arrival with. The algorithm may have no knowledge of future arrivals (adversarial arrival model) or may have some distributional information (stochastic arrival model). The objective is to maximize the total number of matches (in expectation). In all arrival models, the goal is to design an online algorithm with the optimal competitive ratio, i.e., the best possible worst case performance in comparison to an optimal offline algorithm with complete knowledge of the entire graph.

OBM and its generalizations have found applications in a wide range of domains. Early applications focused on online search and display adverting \citep{msvv,displayad,survey}, and with the rise of online platforms, there has been a surge of interest in exploring new settings. For example, in online assortment optimization,  each arrival is shown an assortment (a set) of resources and the arrival randomly chooses at most one resource from the set \citep{negin}. In online two-sided assortment optimization, both the resources and the arrivals randomly choose from the options shown to them \citep{aouad2020online}. Motivated by applications in the sharing economy, recent works have also explored settings with reusable resources, where a resource matched to an arrival is used for some time and can be matched to another arrival once the usage period ends \citep{ms}. These generalizations increase the complexity of the problem but model realistic scenarios that arise in many online platforms and marketplaces. %\citep{negin,aouad2020online,saban2,ms}. %applications like personalized recommendations, online dating, volunteer matching, and cloud computing.

Many of these new settings share some key features: the algorithm selects an action from a set of available actions at each arrival, the selected action has a \emph{stochastic outcome} that affects one or more resources, and the objective is a \emph{non-linear} function of the outcomes. For instance, in online assortment optimization,
the feasible actions are the assortments the algorithm can show. After an assortment is selected, the arrival randomly chooses at most one resource, leading to a stochastic outcome. While the objective is linear in the number of times each resource is chosen (subject to resource capacity), the objective becomes non-linear in two-sided assortment optimization due to random selections by both arrivals and resources. Similarly, in online allocation of reusable resources, the dynamics of reusability can generally only be captured through a non-linear objective function.

While these features have been investigated within specific contexts and arrival models, a unified formulation that captures and strengthens algorithmic results across these diverse settings remains elusive. In the absence of such a framework, it is often unclear whether established results extend to new settings that modify, combine, or generalize existing problems. Furthermore, several fundamental questions remain unresolved even for specific scenarios, such as the competitive ratio of simple greedy algorithms across different arrival models, the potential for stronger guarantees under stochastic arrivals compared to adversarial ones, and the performance limits when realized outcomes are unobservable to the algorithm. In cases where these questions have been answered, the results are typically established with custom analyses, often making it unclear if the techniques and results can be generalized\footnote{For example, the analyses of the Greedy algorithm in settings studied by \cite{chan2009stochastic}, \cite{mehta}, and \cite{ms} rely on different ideas and benchmarks. % the primal-dual method \citep{buchbind, devanur}. 
 	Even when the analysis employs similar techniques such as the primal-dual method \citep{buchbind, devanur}, addressing various combinations of actions and stochastic outcomes seems to require novel algorithms and analytical techniques  \citep{brubach2, borodin2022prophet}.}.

This paper introduces a general framework that unifies a wide range of contemporary settings and settles the question of the optimal competitive ratio achievable in this model across a spectrum of arrival models. We establish that non-adaptive algorithms that remain oblivious to the realization of stochastic outcomes, are sufficient to achieve optimal performance guarantees relative to an adaptive offline benchmark. This implies that the added complexity of adaptivity does not improve the worst-case performance bounds, thereby simplifying the algorithmic requirements for optimality while simultaneously strengthening the state of the art for several specific scenarios. From a practical perspective, our results imply a (perhaps surprising) resilience of non-adaptivity, which is useful in applications where the outcome of an action may be revealed only after a considerable delay or may not be observable at all (see \cite{podcast, feng2025simple, barrientos2025online} for concrete examples). 
%In this paper, we introduce a general framework that captures a wide range of contemporary settings and admits algorithms with strong theoretical guarantees across different arrival models, while also strengthening the state of the art in several specific scenarios. We further investigate the inherent limitations of \emph{non-adaptive algorithms} that are oblivious to the realization of stochastic outcomes. One of our central objectives is to characterize the optimal performance guarantees achievable by non-adaptive algorithms relative to an \emph{optimal adaptive offline benchmark}, across various arrival models. 
%
%%The performance of non-adaptive algorithms is a significant theoretical topic in stochastic optimization. 
%From a practical perspective, non-adaptivity is crucial in real-world online resource allocation, particularly in applications where the outcome of an action may be revealed only after a considerable delay or may not be observable at all (see \cite{podcast, feng2025simple, barrientos2025online} for concrete examples).

\subsection{Previous Work}
%There is vast and growing literature on online resource allocation. Here, 
We start by reviewing previous work most relevant to our goal of developing a general model. Later, in Section \ref{sec:prelimprior}, we discuss prior work in various other settings that we aim to capture as special cases of our formulation. 
{\color{red}
}
\paragraph{Settings with Deterministic Outcomes.} The problem of online submodular welfare maximization (OSW) generalizes several settings with deterministic outcomes, including OBM, the Adwords problem \citep{msvv}, edge weighted matching with free disposal \citep{displayad}, and problems with a concave objective function \citep{concave}. We formally introduce the formulation in Section \ref{sec:newmodels}, but briefly, in this setting, each resource can be matched to multiple requests and the total value from matching resource $i\in I$ to a set of requests is given by a monotone submodular set function $f_i:2^{[T]} \to \mathbb{R}_+$. 

From an algorithmic perspective, OSW unifies several key theoretical results across the aforementioned settings.  Specifically, \cite{kapralov} showed that the Greedy algorithm that maximizes the marginal increase in objective value at each arrival, is $0.5$-competitive in the adversarial arrival model\footnote{As noted in \cite{kapralov}, the result for adversarial arrivals was originally established in earlier works by \cite{nemhauser1978analysis, lehmann2006combinatorial}} and $(1-1/e)$-competitive in the Unknown IID (UIID) model, where the arrival sequence is generated from independent and identically distributed samples drawn from an unknown probability distribution. Furthermore, they showed that, assuming NP$\neq$RP \footnote{In simple terms, this means that problems that can be verified quickly (NP) are not necessarily solvable efficiently with randomized algorithms (RP).}, no polynomial-time online (randomized) algorithm for OSW can achieve a better competitive ratio in the adversarial or the Known IID (KIID) models. In other words, under this widely accepted conjecture, these competitive ratio results are optimal for polynomial-time algorithms. More recently, \cite{rom1} and \cite{rom2} consider (a generalization of) OSW in the Random Order (RO) arrival model, where the adversary selects the problem instance but the online algorithm receives the arrivals in a uniformly random order. The state-of-the-art result in this model, given by \cite{rom2}, is that Greedy is at least $0.5096$-competitive in the RO model. Finding the optimal competitive ratio guarantee in the RO model remains an open problem.

Unfortunately, the OSW problem does not encompass the settings we are interested in. In particular, it cannot model environments with stochastic outcomes, nor problems such as online matching with reusable resources, where we show that the resulting objective function fails to be submodular.

\paragraph{Adaptive Algorithms for Stochastic Outcomes and Adversarial Arrivals.} \cite{chan2009stochastic} formulated a novel framework for stochastic inventory depletion problems and showed that an adaptive generalization of the Greedy algorithm (\agrd) is 0.5-competitive for any adversarial arrival setting that fits within the framework. Their framework captures a variety of settings including online two-sided (and one-sided) assortment optimization. Specifically, \cite{aouad2020online} applied the framework from \cite{chan2009stochastic} to prove that the \agrd\ algorithm is 0.5-competitive for online two-sided assortment optimization in the adversarial model. 

However, this framework does not provide insight into the competitive ratio of online algorithms in the UIID and RO models, nor does it address the performance of non-adaptive algorithms in any arrival model. Furthermore, unlike OSW, determining whether a given setting is included in the framework can be non-trivial, as inclusion depends on properties of the optimal offline solution (see \cite{aouad2020online} and \cite{sumida2024dynamic} for examples). %Also, \cite{aouad2020online} use the depletion framework for adversarial arrivals but they generalize the result of \cite{kapralov} to show that \agrd\ is $(1-1/e)$-competitive for IID arrivals.}.
%it is unclear if the stochastic depletion problem admits an algorithm with better competitive ratio in the IID and the RO arrival models. Similarly, it is  to analyze the competitive ratio of non-adaptive algorithms.  Plus, it can be non-trivial to figure this out, and proof requires the right point of view (e.g., reusability, where we need to look at the time-expanded resource set, and cannot capture stochastic reusability).%we note that applying the framework to a new problem can it can be non-trivial to figure this out, and proof requires the right point of view (e.g., reusability, where we need to look at the time-expanded resource set, and cannot capture stochastic reusability).
%\cite{chan2009stochastic} - does not discuss non-adaptivity. Does not give anything better than 0.5 in IID and ROM. Mapping to settings of interest given by \cite{aouad2020online}. Could also say that recently \cite{sumida2024dynamic} show it can capture recovering rewards with a technical condition (subadditivity) that excludes reusable resources. Also only applies to deterministic durations.
\smallskip

\paragraph{Non-Adaptive Algorithms for Stochastic Outcomes.} 
In the realm of offline submodular maximization, \cite{asadpour2016maximizing} introduced a setting with stochastic outcomes and showed that a non-adaptive algorithm has the optimal approximation guarantee of $(1-1/e)$ for this setting. This means that, given an oracle for function values, no polynomial-time algorithm can achieve a better approximation ratio, even among adaptive algorithms. In the online setting, the optimal competitive ratio for non-adaptive algorithms remains an open problem in most settings with stochastic outcomes, with a few exceptions that we discuss in Section \ref{sec:prelimprior}. 

One might consider an online extension of the formulation in \cite{asadpour2016maximizing}. However, in their model, the outcome of each action is a scalar random variable, which prevents the framework from capturing settings such as online assortment optimization, where actions naturally yield vector-valued outcomes (see Section~\ref{sec:prelim} and also Remark~\ref{scalar} in Appendix~\ref{appx:prelim}). Moreover, even if their results could be generalized to handle vector outcomes, their analysis relies on a careful extension of techniques tailored to a specific algorithm for deterministic submodular maximization.

In contrast, our goal is to capture a broad range of settings across different arrival models, potentially involving objectives that are not submodular. Achieving this level of generality requires a fundamentally different approach. To this end, we develop a black-box framework that derives new guarantees for settings with stochastic outcomes by \emph{systematically lifting existing results} from corresponding deterministic settings.

 Our work is also related to the literature on stochastic probing, which studies offline problems where a decision-maker sequentially probes elements and must irrevocably select them subject to feasibility constraints. Probing reveals stochastic values (with known distributions) and may incur costs. Prior work has studied such problems under submodular objectives with vector-valued elements \citep{bradac2019near} and developed reductions that convert cost-free Greedy probing algorithms into costly settings while preserving approximation guarantees \citep{singla2018price}. While these works also involve stochasticity and reductions, they do not address reductions between stochastic and deterministic formulations or online arrival models, and thus differ fundamentally from our setting. At a high level, our approach is also reminiscent of \cite{sosa2}, which uses stochastic coupling to reduce the analysis of adaptive algorithms for online matching with stochastic rewards to deterministic online bipartite matching; however, that work focuses on a specific matching problem, whereas our framework applies more broadly and yields qualitatively different results using different techniques.

\subsection{Our Contributions} 

\paragraph{New Models.} We introduce the problem of online submodular welfare maximization with stochastic outcomes (OSW-SO), a natural generalization of the standard OSW problem that encompasses a broad array of stochastic settings. Extending this framework further, we propose the online \emph{submodular order} welfare maximization with stochastic outcomes (OSOW-SO) setting, where the objective function satisfies a weakened notion of submodularity called (weak) submodular order. We show that this formulation captures settings with reusable resources, where the resulting objective can be non-submodular. This is a new application domain for submodular order functions, which were previously introduced in the context of offline assortment optimization..  

\paragraph{Optimality of Non-adaptive Greedy.} A primary contribution of this work is establishing the optimality of non-adaptive policies in the presence of stochastic outcomes. For the standard OSW problem, the Greedy algorithm is known to achieve competitive ratios of  $0.5$, at least $0.5096$, and $(1-1/e)$ under adversarial, RO, and UIID arrival models, respectively. We show that \emph{non-adaptive} Greedy algorithm achieves the same guarantees for OSW-SO, measured against an \emph{adaptive} offline benchmark.  
Since OSW-SO generalizes OSW, these guarantees are tight among all polynomial-time algorithms in the adversarial and KIID arrival models, unless NP$=$RP. Our results reveal that, in these general settings, adaptivity to stochastic outcomes offers no worst-case theoretical advantage. Furthermore, our analysis strictly improves upon existing competitive ratios for several established special cases, as detailed in Table~\ref{tab}.  

We establish an analogous result in the more general OSOW-SO setting, where we show that non-adaptive Greedy is $0.5$-competitive in the adversarial arrival model. %From the upper bound for OSW, it follows that this is the optimal competitive ratio guarantee for OSOW-SO. 
In stochastic arrival models, OSOW-SO reduces to OSW-SO.

\paragraph{Beyond Greedy: Performance Limits and Guarantees.}  In several special cases of OSW (deterministic outcomes),  Greedy-like algorithms surpass Greedy and achieve the optimal competitive ratio of $(1-1/e)$ in the adversarial model. 
We show that the \emph{non-adaptive} counterparts of these algorithms retain the same guarantees under stochastic outcomes, yielding new $(1-1/e)$ competitive ratio results for special cases of OSW-SO (see Table \ref{tab}). We also obtain complementary impossibility results in the \emph{reverse} direction: leveraging known limitations of non-adaptive algorithms, we show that a broad family of Greedy-like algorithms cannot achieve a competitive ratio exceeding $0.5$ for a generalization of OSW even when computational efficiency is ignored (e.g., when the algorithm has access to an oracle for NP-hard problems).

\paragraph{A Black-box Technique.} Central to our analysis is a black-box reduction technique that "lifts" results from deterministic settings to those with stochastic outcomes. The idea behind this method is as follows: given an instance of OSOW-SO, we construct an instance of OSOW (with deterministic outcomes), such that the adaptive offline solution in the original instance is a feasible solution in the new instance. Consequently, the offline optimum in the deterministic instance \emph{dominates} the adaptive offline optimum in the original stochastic setting. At the same time, the two instances are indistinguishable to non-adaptive Greedy algorithms, and the value achieved by Greedy is \emph{invariant} across the two constructions. Together, these properties—\emph{dominance} of the offline benchmark and \emph{invariance} of the algorithm’s performance—allow us to lower bound the competitive ratio in stochastic outcome settings using the corresponding competitive ratio established for deterministic outcomes. 

\paragraph{Non-monotone Objectives.}
Finally, we explore the limits of our approach by relaxing the monotonicity assumption. While general non-monotone submodular functions present significant technical challenges, we generalize a result of \cite{nuti} to show that an \emph{adaptive} algorithm achieves the optimal competitive ratio $0.25$ for online non-monotone submodular welfare maximization with stochastic outcomes under adversarial arrivals. Understanding the power of non-adaptive algorithms in the non-monotone regime remains a compelling open question.

%Finally, we attempt to extend this reduction technique to (non-negative) non-monotone submodular objective functions but encounter non-trivial technical challenges. 
%Nonetheless, we generalize a recent result of \cite{nuti} and show that an \emph{adaptive} algorithm achieves the optimal possible competitive ratio of $0.25$ for online non-monotone submodular welfare maximization with stochastic outcomes and adversarial arrivals. Evaluating the performance of non-adaptive algorithms for non-monotone functions remains a challenging open problem.

\subsubsection*{Outline:} In Section \ref{sec:newmodels}, we introduce the new models and describe both the adaptive and non-adaptive Greedy algorithms, along with the various offline benchmarks used for comparison. Section \ref{sec:prelim} provides a formal discussion of specific settings captured by these models, reviewing prior work and highlighting the new results we derive for these settings. In Section \ref{sec:main}, we present our main results for the Greedy algorithm and explain the reduction technique. Section \ref{sec:gla} extends this technique to other algorithms. In Section \ref{sec:onsw}, we explore the setting of non-monotone submodular functions. Finally, we conclude with a summary of our findings in Section \ref{sec:conclusion}.
%{\color{red} Result table here?}

\section{Generalizing OSW} %: Formulations with Stochastic Outcomes}
\label{sec:newmodels}
In this section, we introduce formulations of online submodular maximization with stochastic outcomes. We begin with notation and a generalized version of OSW. Let $F$ be a set function on ground set $\curac$. For $a\in\curac$ and $X\subseteq\curac$, define the marginal value
\[
F(a\mid X)=F(X\cup\{a\})-F(X).
\]
We typically use the following shorthand for a singleton set $\{e\}\subseteq T$: $S\cup e$ for $S\cup \{e\}$, $F(e)$ for $F(\{e\})$, $F(e\mid S)$ for $F(\{e\}\mid S)$, and $F(S\mid e)$ for $F(S\mid \{e\})$. The function $F$ is monotone if $F(B)\le F(A)$ for all $B\subseteq A$, and submodular if $F(e\mid A)\le F(e\mid B)$ for all $B\subseteq A$ and $e\notin A$. We say that $F$ is monotone submodular if it is both monotone and submodular.

\paragraph{OSW (Generalized).}
Let $\curac$ be the ground set of \emph{actions}, partitioned into disjoint sets $\{\ac_t\}_{t=1}^T$ revealed online. At each arrival $t$, the algorithm irrevocably selects exactly one action from $\ac_t$ to maximize $F(S)$, where  $S$ is the set of chosen actions. The algorithm may query $F(S)$ for any $S\subseteq\cup_{\tau\le t}\ac_\tau$.  We assume that the function $F:2^{\curac}\to \mathbb{R}_{+}$, where $\mathbb{R}_{+}$ is the set of non-negative real values, is monotone submodular with $F(\emptyset)=0$, and that each $\ac_t$ contains a null action $0_t$ satisfying $F(0_t\mid X)=0$ for all $X$. This generalized setting is also referred to in the literature as Online Submodular Maximization (OSM) \citep{rom2}. The original OSW formulation proposed by  \citet{kapralov} is a special case of OSM in which there is a set of resources $I$, $F=\sum_{i\in I} F_i$, and $\ac_t=\{(i,t)\}_{i\in I}$. Throughout this paper, we use the term OSW to refer to this generalized OSM formulation unless a distinction is explicitly required (such as in Section \ref{sec:upb}).

{\color{black} %Result table?

\subsection{Online Submodular Welfare Maximization with Stochastic Outcomes (OSW-SO)}
  
	In OSW-SO, actions have independent stochastic outcomes, and the objective is defined over realized outcomes. Let $N$ denote the ground set of outcomes corresponding to the action set $\curac$. Upon arrival $t$, the algorithm observes the outcome distributions of actions in $\ac_t$ but only observes the realized outcome of a chosen action after selection. The goal is to maximize expected reward defined by a monotone submodular function $f:2^N\to\mathbb{R}_+$.

%	In OSW-SO, each action has a random \emph{outcome}, independent of the outcomes of other actions, and the objective function is defined over the set of \emph{realized} outcomes. Let $N$ denote the ground set of all possible outcomes (the outcome set) corresponding to the set $\curac$ of actions (the action set). At arrival $t\in [T]$, the online algorithm knows the distribution of outcomes for the actions in $\ac_t$ but the outcome of an action $a\in \ac_t$ is only revealed after the action is selected.  The goal is to choose a subset of actions so as to maximize the expected reward defined by a function $f:2^N\to \mathbb{R}_{+}$. 
\smallskip
	
	\noindent We next describe each component of the formulation in more detail.

	\paragraph{Outcomes.} Let $N(a)$ denote the set of all possible outcomes (the outcome set) of action $a\in \curac$. Let $p_{e}$ denote the probability that $a$ \emph{maps to} $e\in N(a)$, i.e., the probability that $e$ is the outcome of action $a$. The outcome of each action is \emph{independent} of other events in the instance. Since each action has exactly one outcome, we have 
	\[\sum_{e\in N(a)}p_e = 1.\] 
  Furthermore, the outcome sets of different actions are disjoint, i.e., $N(a)\cap N(a')=\emptyset$ for distinct actions $a,a'\in \curac$. For brevity, let $N(S)=\cup_{a\in S} N(a)$ denote the set of possible outcomes for the action set $S$, and note that $N=N(\curac)$. We use $N_t$ as a shorthand for $N(\ac_t)$.
    \smallskip

   \paragraph{Realized Mapping.} Let $\curn$ denote the set of all possible mappings from actions to outcomes, defined as
	\[\curn=\{P\,\mid\, P\subseteq N,\,\, |N(a)\cap P|= 1\,\, \forall a\in \curac\}.\]  
Each $P\in \curn$ corresponds to an injective mapping from $\curac$ to $N$, where each action $a\in \curac$ is mapped to a unique outcome $e\in N(a)$. The injectivity follows from the disjointedness of the outcome sets $N(a)$. We refer to a mapping $P\in \curn$ as a \emph{realized mapping} when every outcome in $P$ is realized, which occurs with probability (w.p.) $\prod_{e\in P} p_e$. We define the function $\gamma:2^N\to [0,1]$ by
	\[\gamma(P)=\underset{e\in P}{\Pi}\, p_{e}.\] 
	For any subset of actions $S\subseteq \curac$, let $\curn(S)$ denote the collection of all realizable outcome sets corresponding to $S$, namely,
	\[\curn(S)=\{P\cap N(S)\mid P\in \curn\}.\]
	Note that $\curn(\curac)=\curn$. From the definitions, the following identities hold for any $S\subseteq \curac$ and $e\in N(S)$:
	\[\sum_{P\in \curn(S)} \gamma(P)=1\quad \text{ and} \quad \sum_{P\in \curn(S)\mid P\ni e} \gamma(P)=p_e.\] 
 The first identity states that some outcome set in $\curn(S)$ is realized, while the second identity states that a given outcome $e\in N$ appears in the realized outcome set with probability $p_e$.
	\smallskip
	
	\paragraph{Objective.} 	The objective value (or reward) associated with a set of outcomes is given by a monotone submodular function $f:2^N\to \mathbb{R}_{+}$, with $f(\emptyset)=0$. This outcome level objective induces a reward over the set of actions.  Specifically, for a realized mapping $P\in \curn$, the total reward obtained by selecting a set of actions $S\subseteq \curac$ is $f(N(S)\cap P)$, where $N(S)\cap P$ is the set of outcomes realized by actions in $S$. Let $F:2^{\curac}\to \mathbb{R}_{+}$ denote the expected total reward of an action set $S$. Then,
	\begin{equation}\label{defF}	F(S)=\sum_{P\in \curn} \gamma(P)\,f\left(N(S)\cap P\right)\quad \forall S\subseteq \curac.
	\end{equation} 
We assume access to value oracles for both $f$ and $F$.  It is immediate that $F$ is monotone submodular whenever $f$ is monotone submodular. %In fact, we will establish a stronger result in Lemma \ref{ftoF2}.  
	\smallskip
	
	\paragraph{Null Actions.}
	Each $\ac_t$ contains a null action $0_t$ with $N(0_t)=\{0_t\}$ and $f(0_t\mid X)=0$ for all $X$. This assumption is without loss of generality under monotonicity.
		
%	\paragraph{Null Actions.} Similar to the OSW formulation, we assume that for each $t\in[T]$, there is a ``null" action $0_t\in \ac_t$ such that $F(0_t\mid S)=0$ for all  $S\subseteq\curac$. For completeness, let $N(0_t)=\{0_t\}$ and $f(0_t\mid X)=0\,\, \forall X\subseteq N$. %The null action $0_t$ represents the option to ignore arrival $t$. 
%	The assumption is without loss of generality when $f$ and $F$ are monotone because it is always better to choose an action $a_t\neq 0_t$ at every arrival $t\in[T]$. 
	\smallskip
	
	\paragraph{Adversarial Arrivals.} 
	An instance consists of the sequence of action sets $\{\ac_t\}_{t\in[T]}$, outcome sets $\{N_t\}_{t\in[T]}$, probabilities $\{p_e\}_{e\in N}$, and functions $f$ and $F$, chosen by an oblivious adversary\footnote{Specifically, an oblivious adversary who does not have access to the random bits sampled by a randomized online algorithm.} and revealed online. At arrival $t$, the algorithm observes $\ac_t$, $N_t$, and $\{p_e\}_{e\in N_t}$, may query value oracles on previously revealed actions or outcomes, and must irrevocably select exactly one action from $\ac_t$. Adaptive algorithms observe realized outcomes immediately; non-adaptive algorithms do not. 	This distinction is discussed in more detail in Section \ref{sec:adapnonadap}.
	
	Competitive ratio definitions and extensions to stochastic arrival models (RO and UIID/KIID arrival models) are deferred to Sections~\ref{sec:offline} and~\ref{sec:stochastic}.
	%An instance $G$ of OSW-SO is defined by a sequence of disjoint action sets $\ac_1,\cdots,\ac_T$ and their corresponding outcome sets $N_1,\cdots,N_T,$ along with probabilities $\{p_e\}_{e\in N}$ and monotone submodular functions $f$ and $F$. The problem instance is determined by an adversary\footnote{Specifically, an oblivious adversary who does not have access to the random bits sampled by a randomized online algorithm.} and revealed incrementally to the online algorithm. At arrival $t$, the algorithm is presented with the action set $\ac_t$, the associated outcome set $N_t$, and the probabilities $\{p_e\}_{e\in N_t}$. The algorithm may query value oracles for $f$ and $F$ to evaluate the objective on any subset of $\underset{\tau\leq t}{\cup} N_\tau$ and $ \cup _{\tau\leq t} \ac_{\tau},$ respectively. However, it must make an irrevocable decision by selecting \emph{exactly one} action from $\ac_t$. If the algorithm is adaptive, the realized outcome of the chosen action is revealed immediately after selection. In contrast, non-adaptive algorithms do not observe the realized outcomes of any actions in $\curac$. 
%We formally define the competitive ratio in this model in Section~\ref{sec:offline}. Our results extend to several stochastic arrival models, including the RO and UIID/KIID models, which we introduce in Section~\ref{sec:stochastic}. 
Before proceeding, we make a few important remarks regarding the OSW-SO formulation.
\begin{remark}[Overlapping Sets]
In the OSW-SO formulation, we assume that distinct actions have disjoint outcome sets and that distinct arrivals have disjoint action sets. These assumptions are made without loss of generality and solely for technical convenience in the analysis. Specifically, suppose two actions $a$ and $a'$ share a common outcome $e$. We can modify the instance by removing $e$ from $N(a')$ and replacing it with a new ``copy" $e'$, where $e'\in N(a')$, $p_{e'} = p_e$, and $f(e' \mid X) = f(e \mid X)$ for all $X \subseteq N.$
This transformation preserves the cardinality of $N(a')$ as well as the monotonicity and submodularity of $f$, and therefore allows us to assume that outcome sets of distinct actions are disjoint. An analogous duplication argument applies to actions associated with different arrivals, allowing us to assume that the action sets are disjoint across arrivals. Importantly, these transformations do not affect the behavior or running time of the algorithms under consideration, as the disjointness assumptions are used only for analytical purposes and are not required during algorithm execution. In particular, all algorithms can be applied directly to the original instance.
%Importantly, preserving both the number of feasible actions per arrival and the number of possible outcomes per action guarantees that the running time of our algorithms remains unaffected. We discuss the computational complexity in more detail in Remark~\ref{compremark}.} %We defer further discussion on this topic to Remark \ref{compremark} and Appendix \ref{appx:sap}.}
\end{remark}
\begin{remark}[Deterministic Outcomes and Induced Instance]
	In the absence of stochastic outcomes (i.e., in the deterministic setting), an instance $G$ of OSW-SO reduces to an instance of OSW with objective function $F$ on the set of actions $\curac$. Specifically, when $p_e\in\{0,1\}\,\, \forall e\in N$, each action has a fixed (deterministic) outcome because there is a unique mapping $P\in \curn$ such that $\gamma(P)=1$.  In this case,  it is sufficient to focus solely on the set of actions $\curac$ and the objective function $F$. Notably, every instance of OSW-SO naturally induces an instance of OSW with the objective function $F$ defined in \eqref{defF}.
	\end{remark}
We next discuss generalizations of OSW-SO under different structural assumptions on $f$ (and the induced function $F$). 
	% to capture the various problem settings discussed in Section \ref{}.  
	%	Observe that, $\gamma(P)$ is the probability that the set of outcomes $P$ is realized. % for $P$ such that 
	%In the deterministic setting, choosing set $S\subseteq \ac$ has total value $F(S)$ and our goal is to maximize $F(S)$. In the stochastic setting, 
	%\smallskip
	\subsubsection{Submodular Order Objective Function.}  
	To capture settings where the objective is not submodular, such as online allocation of reusable resources (see Section \ref{sec:prelim}), we introduce a natural generalization of OSW where the objective $f$ belongs to the broader family of submodular order functions. We begin by introducing some key notation.

\paragraph{Submodular Order \citep{SOF}.}
Let $\pi$ be a total order over $N$. For sets $A,C\subseteq N$, write $C\succ_\pi A$ if every element of $C$ appears after every element of $A$ in $\pi$. For $B\subset A$, we say that the sets are \emph{$\pi$-nested} if $A\setminus B\succ_\pi B$. The order $\pi$ is a \emph{(weak) submodular order}\footnote{This is the more general of the two notions of submodular order introduced in \citet{SOF}. The \emph{strong} submodular order property, which requires that $f(C\mid A)\leq f(C\mid B)$ for \emph{every} $B\subseteq A$, cannot be used to capture allocation of reusable resources.} for $f$ if for all $\pi$-nested $B\subset A$ and all $C\succ_\pi A$,
\[
f(C\mid A)\le f(C\mid B).
\]
We assume that $f(\emptyset\mid A)=0\,\, \forall A\subseteq N$. If every total order $\pi$ is a submodular order for $f$, then $f$ is submodular.

%	\paragraph{Submodular Order (\cite{SOF}).} 	Let $\pi$ denote a total order over the ground set $N$. 
%For two sets $A,C\subseteq N$, we say that $C$ \emph{succeeds} $A$ under $\pi$ (denoted $C\succ_{\pi} A$) if every element of $C$ appears after every element of $A$ in the order $\pi$. For sets $B\subset A$, we say that $B$ is \emph{$\pi$-nested} in $A$ if $A\setminus B \succ_{\pi} B$. Intuitively, when $A$ and $B$ are viewed as ordered sequences under $\pi$, $B$ is a prefix of $A$.  The order $\pi$ is said to be a (weak) \emph{submodular order} \footnote{This is the more general of the two notions of submodular order introduced in \citet{SOF}. The strong submodular order property, which requires that $f(C\mid A)\leq f(C\mid B)$ for \emph{every} $B\subseteq A$, cannot be used to capture allocation of reusable resources.} for function $f$ if for all $\pi$-nested sets $B\subset A$ and all sets $C\succ_{\pi} A$, the following diminishing returns condition holds:
%	\[f(C\mid A)\leq f(C\mid B)\]
%	As a special case, if every total order over $N$ is a submodular order for $f$, then $f$ is a submodular function.
	\smallskip

{\color{black}	\paragraph{Online Submodular Order Welfare Maximization with Stochastic Outcomes (OSOW-SO).} This is a generalization of OSW-SO where $f$ is monotone and has a (weak) submodular order $\pi$ over $N$ that satisfies the following \emph{arrival-consistency} property:
\[	\text{ Order $\pi$ is arrival consistent if $N_{t}\succ_{\pi} N_{t-1}$ for all  $t\in [T]$.}\]

An analogous definition applies to the action space: an order $\pi_{\curac}$ over $\curac$ is arrival-consistent if $A_t \succ_{\pi_{\curac}} A_{t-1}$ for all $t\in[T]$. This property is central to our competitive ratio analysis (see Section~\ref{sec:osow}). Notably, the online algorithms developed for OSOW-SO are order-oblivious, requiring no prior knowledge of the submodular order $\pi$; its existence is leveraged only in the theoretical analysis. Note that every instance of OSW-SO is also an instance of OSOW-SO because any order over $N$ is a (weak) submodular order for a submodular function $f$. For brevity, we hereafter refer to "(weak) submodular order" simply as "submodular order." Example~\ref{adapeg} provides a concrete illustration of this formulation.

\paragraph{Online Submodular Order Welfare Maximization (OSOW).} OSOW is a further generalization of OSW in which the objective function $F$ over ground set $\curac$ is a sum of finitely many monotone functions, each admitting an arrival-consistent submodular order. The specific submodular order may differ across the summands. Intuitively, this flexibility is natural because the relative ordering of actions corresponding to the same arrival is immaterial: any feasible solution selects one action per arrival. Notably, we do \emph{not} require $F$ itself to admit a submodular order, which would be a stronger assumption. 

Every instance of OSOW-SO naturally induces an instance of OSOW via the objective function $F$ defined in \eqref{defF}. Specifically, for every mapping $P\in\curn$, the function $\gamma(P)\,f(N(\cdot)\cap P)$ is monotone and admits an arrival-consistent submodular order over $\curac$. A formal proof is included in Appendix~\ref{appx:newmodels} (Lemma~\ref{ftoF2}). Consequently, $F$ is a sum of monotone functions with arrival-consistent submodular orders, as required by the OSOW formulation.}

\begin{remark}All our results for OSOW-SO continue to hold under the less restrictive assumption that $f$ is a sum of finitely many monotone functions, each admitting an arrival-consistent (Weak) submodular order over $N$. While we use the simpler definition of OSOW-SO for clarity of exposition, this more general formulation ensures that OSOW remains a special case of OSOW-SO.
	\end{remark}

\subsubsection{Non-Monotone Objective Function.}  
To capture settings with non-monotone objectives, we relax the assumption that the functions $f$ and $F$ are monotone and consider the problem of online \emph{non-monotone} submodular welfare maximization problem with stochastic outcomes or ONSW-SO. %In this setting, the functions $f$ and $F$ are non-negative and submodular but we no longer require them to be monotone. 
We use ONSW to denote the special case of ONSW-SO with deterministic outcomes. We describe the formulation in more detail and discuss prior work on ONSW in Section \ref{sec:onsw}. Until then, we assume that $f$ and $F$ are monotone functions.

	\subsection {Adaptive and Non-Adaptive Algorithms} \label{sec:adapnonadap}
	We only consider non-anticipatory algorithms, meaning that the algorithm can only observe the realization of an action after it irrevocably selected the action. Given a non-anticipatory online algorithm $\aalg$ for OSOW-SO, we classify it as either non-adaptive or adaptive.
	
	\paragraph{Non-Adaptive Algorithm.} \aalg\ is non-adaptive if its output is independent of the realized mapping. % is oblivious to the realized outcomes.	
	Formally, \aalg\ is non-adaptive if, for each instance, it selects the set same set of actions for every action to outcome mapping. %In other words, \aalg 's output is independent of the realized mapping. 
	Any algorithm designed for OSOW can be directly used as a non-adaptive algorithm for OSOW-SO by applying it to the induced instance of OSOW. %OSOW-SO induces an instance of OSOW with objective function $F$ over ground set $\curac$. 
	
	\paragraph{Adaptive Algorithm.} An algorithm is called adaptive if it is \emph{not} non-adaptive. The output of an adaptive algorithm can change based on the realized mapping.	We illustrate the distinction with an example below.
	
	\begin{example}\label{adapeg}
			Consider an instance of OSOW-SO, with two arrivals $\{1,2\}$ and action sets $\ac_1=\{a_{1}\}$ and $\ac_2=\{a'_{1},a_{2}\}$. Every action has two possible outcomes, \emph{success} or \emph{failure}, and let $e_1,e'_1,e_2$ denote the successful outcomes of actions $a_1,a'_1,a_2$ respectively. The probabilities of success for each action are as follows: $2p_{e_1}=p_{e'_1}=p_{e_2}=1$. Let the reward function $f$ be defined as follows: 
			\[f(e_1)=f(e'_1)=4f(e_2)=1, f(\{e'_1,e_1\})=f(\{e'_1,e_2\})=1, f(\{e_1,e_2\})=f(\{e_1,e'_1,e_2\})=1.25.\] Actions that result in failure have no value and can be ignored. While $f$ is not submodular -- since $f(e_2\mid e'_1)<f(e_2\mid\{e_1,e'_1\})$ -- it can be verified that $f$ is monotone and has a arrival-consistent submodular order $\pi=\{e_1,e_2,e'_1\}$. %This order also induces an arrival-consistent order $\pi_{\curac}=\{a_1,a_2,a'_1\}$ over $\curac$.
		
		Now, consider a non-adaptive algorithm that chooses action $a_1$ at arrival 1 and $a'_1$ at arrival 2. This solution has expected total reward \[F(\{a_1,a'_1\})=p_{e_1}f(e_1)+p_{e'_1}(1-p_{e_1})f(e'_1)=1,\] 
		where we use the fact that $f(e'_1\mid e_1)=0$. However, we can do better with an adaptive strategy of picking action $a_2$ at arrival 2 if $a_1$ succeeds and picking action $a'_1$ if $a_1$ fails. The adaptive algorithm has expected total reward \[p_{e_1}[f(e_1)+p_{e_2}f(e_2\mid \{e_1\})]+p_{e'_1}(1-p_{e_1})f(e'_1)=1.125.\]
		\end{example}
	
	Next, we discuss the adaptive and non-adaptive algorithms that, as we show later, achieve the optimal performance guarantees for both OSW-SO and OSOW-SO. 
 
	\subsubsection*{Adaptive and Non-Adaptive Greedy Algorithms.}
%	{\color{red} Consider redefining adaptive as inclusive of non-adaptive because of adaptive algorithms that use the realization in trivial ways. In any case, in result table it should not seem like results for adaptive beat non-adaptive}
	Consider arrival $t\in [T]$ and let $\alg_t$ denote the set of actions chosen prior to $t$. Let $P_t=N(\alg_t)\cap P$ denote the set of realized outcomes of $\alg_t$. The Adaptive-Greedy algorithm, denoted as \agrd, selects the following action at $t$,
	\[\agrd: \quad \salg'_t=\argmax_{a\in \ac_t} \sum_{e\in N(a)}p_e\,f(e\mid P_t).  \]
	Action $\salg'_t$ maximizes the expected marginal increase in the total reward, conditioned on the set of realized outcomes $P_t$.  On the other hand, the non-adaptive Greedy algorithm, or simply \grd, selects the action that maximizes the marginal value of the expected total reward function $F$,
	\[\text{ Greedy: }\quad \salg_t=\argmax_{a\in \ac_t}\,\, F(a\mid \alg_t ). \]
	Clearly, \grd\ is oblivious to the set of realized outcomes.  % (it \emph{does not adapt} to them).  
	Both Greedy and \agrd\ are well-defined for any function $f$ (and $F$) and do not require the knowledge of the submodular order of $f$ (or $F$). Additionally, when the outcomes are deterministic, \agrd\ and Greedy are identical. We examine the \grd\ algorithm in some well-studied special cases of OSW-SO in Appendix \ref{appx:sapeg}.
	
		 %Non-adaptive online algorithms are oblivious to realized outcomes and, in general, cannot find the action $\salg'_t$. 
	\begin{remark}[Computing $\salg'_t$ and $\salg_t$]\label{compremark}
For our main results, we assume access to value oracles for both $f$ and $F$. With an oracle for $f$, one can compute $\salg'_t$ in at most $O\left(|\ac_t|\times \max_{a\in\ac_t} |N(a)|\right)=O(|\ac_t|\times |N_t|)$  time (which is polynomial), where $|\cdot|$ denotes the cardinality of a set. Similarly, with an oracle for $F$, $\salg_t$ can also be computed in %{\color{blue}
	at most $O(|A_t|)$ time. However, in certain applications, we encounter the following challenges: $(i)$ The set $\ac_t$ is implicitly defined and $|\ac_t|$ can be exponentially large in the size of the instance, making it difficult to find the optimal action, and $(ii)$ Computing the exact value of $F(S)$ becomes challenging when $S$ contains more than a few elements. To address the first challenge, we find an approximately optimal action in $\ac_t$, rather than the optimal action. For the second challenge, we use a sample average approximation of $F(S)$ instead of computing the exact value. We discuss these approximations in more detail in Appendix \ref{appx:sap}. 
		\end{remark}
\begin{remark}[\agrd\ does not dominate]
		While it might seem intuitive that \agrd\ (weakly) dominates Greedy in every instance with stochastic outcomes, we demonstrate below that there are instances where Greedy has an expected reward up to twice as large as \agrd.

	\begin{example}\label{algequiv}\label{noneg}
	Consider an instance of OSW-SO, with three arrivals $\{1,2,3\}$ and action sets $\ac_1=\{a_1\}$, $\ac_2=\{a'_{1},a_2\}$ and $\ac_3=\{a''_{1}\}$. Each action has two possible outcomes, \emph{success} or \emph{failure}, and let $e_1,e'_1,e_2,e''_1$ represent the successful outcomes of actions $a_1,a'_1,a_2,a''_1$ respectively. Let $\epsilon$ be a small non-negative real value. The success probabilities for each action are as follows: $p_{e_1}=\epsilon$ and $p_{e'_1}=p_{e_2}=p_{e''_1}=1$. The reward values are defined as,  \[f(e_1)=f(e'_1)=f(e''_1)=f(\{e_1,e'_1,e''_1\})=1+\epsilon,\quad f(e_2)=1, \quad f(e_2\mid \{e_1,e''_1\})=1.\] 
	Actions that result in failure have no value and can be ignored. It can be verified that there is a monotone submodular function $f$ with these values. In fact, this is an instance of online matching with stochastic rewards (see Section \ref{sec:prelim}), where outcomes $e_1,e'_1, e''_1$ represent successful matches to vertex 1 (with reward $1+\epsilon$) and outcome $e_2$ represents a successful match to vertex 2 (with reward 1).

Now, let's compare the performance of \agrd\ and \grd. At arrival 1, \agrd\ chooses action $a_1$. If the action succeeds, it chooses actions $a_2$ and $a''_1$, resulting in total award $2+\epsilon$. If action $a_1$ fails, it chooses actions $a'_1$ and $a''_1$, resulting in total award $1+\epsilon$. The expected total reward of \agrd\ is,  $\epsilon (2+\epsilon)+(1-\epsilon)(1+\epsilon)= 1+2\epsilon.$ In contrast, \grd\ chooses actions $a_1,a_2,$ and $a''_1$ with expected total reward $F(\{a_1,a_2,a''_1\})=2+\epsilon.$ Note that \grd\ chooses $a_2$ because $F(a_2\mid a_1)=1$, while $F(a'_1\mid a_1)=1-\epsilon^2$. For $\epsilon\to 0$, \grd\ has has an expected reward that is twice that of \agrd.
	\end{example}
\end{remark}

	\subsection{Offline Benchmark and Competitive Ratio} \label{sec:offline}
%We compare the performance of online algorithms against an optimal offline benchmark, denoted by $\opt$. In the offline setting, the entire instance $G$ is known in advance, while the realized mapping remains unknown \emph{a priori} \footnote{Online algorithms perform poorly when compared against an offline benchmark that has prior knowledge of the realized mapping.}. The benchmark $\opt$ selects exactly one action from each set in the partition $\{\mathcal{A}_1,\ldots,\mathcal{A}_T\}$ and may visit these sets in an arbitrary order to make its selections. Importantly, $\opt$ is non-anticipatory: it observes the outcome of an action only after (irrevocably) selecting that action. Both the choice of actions and the order in which $\opt$ visits the sets $\{\mathcal{A}_t\}_{t \in [T]}$ may adapt to the outcomes observed so far. Notably, $\opt$ can be formulated as a stochastic dynamic program.
%
%This benchmark is strictly stronger than the commonly used \emph{clairvoyant} benchmark in prior work \citep{chan2009stochastic}, in which the offline algorithm is constrained to visit the sets $\{\mathcal{A}_t\}_{t \in [T]}$ in their arrival order.  Our results continue to hold against even stronger benchmarks in special cases, such as stochastic rewards with patience (see Section~\ref{sec:prelim} and Appendix \ref{appx:patience}).

We evaluate online algorithms against an optimal offline benchmark $\opt$. In the offline setting, the full instance $G$ is known in advance, but the realized mapping is not known \emph{a priori}\footnote{Comparisons against benchmarks with prior knowledge of realizations are standardly uninformative.}. The benchmark $\opt$ selects exactly one action from each set in $\{A_1,\ldots,A_T\}$ and may visit these sets in any order. It is non-anticipatory, observing the outcome of an action only after irrevocably selecting it, and may adapt both its action choices and visitation order based on observed outcomes. Equivalently, $\opt$ can be formulated as a stochastic dynamic program.

This benchmark strictly dominates the clairvoyant benchmark of \citet{chan2009stochastic}, which requires actions to be selected in arrival order. In certain settings, such as stochastic rewards with patience, our results hold against even stronger offline benchmarks (see Section~\ref{sec:prelim} and Appendix~\ref{appx:patience}).

%
 %We use $\opt(G)$ to denote the set of actions output by \opt, as well as, the objective value of \opt\ on instance $G$. 
 %$S$ is a function of $P$ because %Let \opt\ denote the adaptive offline solution. Similar to an online algorithm, \aopt\ can choose at most one action for each arrival and observes the outcome of an action after it is chosen. However, \aopt\ can choose the order in which arrivals are visited/probed (for choosing an action) based on the realized outcomes and the set of actions $\curac$, set of all realizable outcomes $\curn$, and probabilities $\{p_e\}_{e\in N}$. Let $\opt(G)$ denote the objective value of \opt\ on instance $G$.  

%{\color{red} \opt\ is a dynamic program} 

%The output $\opt(G)$ is stochastic and depends on the realized mapping $P\in\curn$ because \opt\ chooses actions \emph{adaptively}, i.e., the order in which arrivals are visited/probed and the actions that are chosen at each arrival are a function of the observed portion of the realized mapping $P$. Specifically, after choosing an action from some set $A_t$, \opt\ observes the realized outcome of this action and then chooses the next action from a different set $A_\tau$. 

%Note that, \opt\ is \emph{non-anticipatory}, i.e., 
%\opt\ observes the outcome of an action after the action is chosen  

\paragraph{Competitive Ratio.} Let $\opt(G)$ and $\aalg(G)$ denote the expected total reward of \opt\ and \aalg\ on instance $G$. %We want to compare online algorithms with \opt. 
In the adversarial arrival model, we evaluate the performance of \aalg\ by analyzing its competitive ratio, %comparing it with \opt\ in the worst case. % Let $\aalg(G)$ denote the objective value of an online algorithm \aalg\ on instance $G$, and 
%Formally, let $\mathcal{G}$ denote the set of all instances. We are interested in the competitive ratio of \aalg, %In the adversarial arrival model, the competitive ratio of \aalg\ is,
\[\inf_{G\in \mathcal{G}}\frac{\text{\aalg}(G)}{\opt(G)},\]
here $\mathcal{G}$ denotes the set of all instances. We say that $\aalg$ is $\beta$-competitive against \opt\ if the competitive ratio is $\beta$. We will define the competitive ratio in other arrival models in Section \ref{sec:stochastic}. %As we show later, it is relatively easy to find the competitive ratio of \agrd. However, 
%Comparing (non-adaptive) Greedy with (adaptive) \opt\ appears to be significantly more challenging. 
To make it easy to compare an online algorithm with \opt, we use a deterministic upper bound on $\opt(G)$.

\subsubsection{Upper Bound on $\aopt(G)$.} Let $\Delta(\curac)$ represent the probability simplex over the set of actions $\curac$. Specifically, a vector $Y=(y_a)_{a\in \curac}\in \Delta(\curac)$ if and only if 
\[ 
\sum_{a\in \ac_t} y_a= 1\quad \forall t\in [T],\quad y_a\geq 0\quad \forall a\in \curac.\] %\right\},\] 
We also define the set:
\[  \mathcal{X}=\{X\,\mid\, X\subseteq N,\,|X\cap N_t|=1\,\, \forall t\in [T]\},\] 
which is the set of all possible outcome sets of feasible action sets (one action per arrival). 
Using these definitions, we introduce a new function $F^c:\Delta(\curac)\to \mathbb{R}_{+}$, 
\[\fc(Y)=\max\left\{\sum_{X\in \mathcal{X}}\alpha(X) f(X) \mid \sum_{X\in \mathcal{X}} \alpha(X)= 1,\,\, \sum_{X\in \mathcal{X}\mid X\ni e} \alpha({X})\,  = \, p_e\,y_a\,\, \forall e\in N(a),\, a\in \curac\right\}.\] 
Here, $\fc(Y)$ is the maximum expected value of $f(X)$, where $X$ is a random set from $\mathcal{X}$ and the maximum is taken over all probability distributions such that each element  $e\in N(a)$ appears in $X$ with probability $p_e\, y_a$ (not necessarily independently). We now define the following quantity as an upper bound on $\opt(G)$:
\[\optc(G)\,\,=\,\,  \max_{Y\in\Delta(\curac)} F^c(Y).\]
%We show that $\optc(G)$ upper bounds $\aopt(G)$ and the proof is included in Appendix \ref{appx}.  
\begin{lemma}\label{concdom}
For every instance $G$ of OSOW-SO (and OSW-SO) we have $\optc(G)\geq \opt(G).$
\end{lemma}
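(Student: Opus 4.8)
The plan is to extract a feasible fractional point $Y\in\Delta(N)$ from an optimal adaptive non-anticipative offline policy and show that $\sum_{i\in I}\fc(Y)$ already dominates $\aopt(G)$; since $\optc(G)=\max_{Y'\in\Delta(N)}\sum_{i\in I}\fc(Y')$, this proves the lemma. First I would fix an optimal policy for $\aopt(G)$ (without loss of generality deterministic). For a configuration $e=(R,t)\in N$, let $y_e$ be the probability, over the realizations of \pas, that the policy selects $e$ at arrival $t$, and set $Y=(y_e)_{e\in N}$. Because the policy selects at most one configuration per arrival, $\sum_{e\in N_t}y_e\le 1$ for every $t\in T$, so $Y\in\Delta(N)$.

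Next I would identify the distribution that $\aopt$ induces on each resource. For $i\in I$, let $A_i=\{e\in N_i:\ e\text{ is selected by the policy and }i\text{ is active in }e\}$. Since $f_i(S)=f_i(S\cap N_i)$, the reward collected from $i$ equals $f_i(A_i)$, hence $\aopt(G)=\sum_{i\in I}\mathbb{E}[f_i(A_i)]$. Let $\alpha_i(X)=\Pr[A_i=X]$ for $X\subseteq N$; this is a probability distribution supported on subsets of $N_i$, so $\sum_{X\subseteq N}\alpha_i(X)=1$ and $\sum_{X\subseteq N}\alpha_i(X)f_i(X)=\mathbb{E}[f_i(A_i)]$. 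It then remains to verify the marginal constraints of the program defining $\fc(Y)$, namely $\sum_{N\supseteq X\ni e}\alpha_i(X)=\Pr[e\in A_i]\le p_e\,y_e$ for every $e\in N$ (trivial when $e\notin N_i$, as both sides are $0$ there).

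The crux is showing $\Pr[e\in A_i]=y_e\,p_e$ for $e=(R,t)\in N_i$. The event that $i$ is active in $e$ is a function of the \pas realizations at arrival $t$ alone and has probability $p_e$ by definition of the marginal activation probability. The event that the policy selects $e$ at $t$ is, by non-anticipativity, determined before the realization at $t$ is revealed, and depends only on realizations at arrivals the policy has already processed, none of which can be $t$; inducting on the processing order, this event is measurable with respect to the \pas realizations at arrivals other than $t$. Since \pas realizations are independent across arrivals, the two events are independent, so $\Pr[e\in A_i]=y_e\,p_e$. Thus $\alpha_i$ is feasible for the maximization in $\fc(Y)$, giving $\fc(Y)\ge\mathbb{E}[f_i(A_i)]$; summing over $i\in I$ yields $\optc(G)\ge\sum_{i\in I}\fc(Y)\ge\sum_{i\in I}\mathbb{E}[f_i(A_i)]=\aopt(G)$.

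I expect the main obstacle to be precisely this independence argument: an adaptive offline may, in principle, choose the order in which it processes arrivals (and that order may itself be random), so a priori ``selecting $e$ at $t$'' could be entangled with the realization at $t$; the point is that non-anticipativity rules this out, and combined with cross-arrival independence it pins down the relevant marginals exactly. The arbitrary correlation permitted among resources $i,j\in R$ active in a single configuration $(R,t)$ causes no trouble, since each program $\fc(Y)$ constrains only the per-element marginals $p_e$ while its objective is evaluated against the true joint law of $A_i$ — no control over the joint distribution across resources is needed. Note also that this reduction uses neither monotonicity nor the submodular-order property of the $f_i$; it is purely about \pas.
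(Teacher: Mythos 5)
Your proof is correct and follows essentially the same route as the paper: you extract the per-configuration selection marginals $y_e$ from the adaptive policy, observe that the induced distribution of the active set $A_i$ for each resource $i$ is feasible for the concave-closure program, and use non-anticipativity together with cross-arrival independence to establish $\Pr[e\in A_i]=p_e y_e$. You are somewhat more explicit than the paper about the per-resource distributions $\alpha_i$ and about why the (possibly random, adaptive) processing order does not spoil the independence, but the construction and the key step are the same.
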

We include the proof in Appendix \ref{appx:newmodels}. The key idea is to show that the output of \opt\ induces a probability distribution over $\cal{X}$ that is a feasible solution for the optimization problem in $\optc$ and its objective value is equal to $\opt(G)$. The definition of $F^c(\cdot)$ is inspired by the concave closure of set functions \citep{vondrak2007submodularity}. A similar deterministic upper bound for the offline stochastic submodular maximization problem was proposed by \cite{asadpour2016maximizing}. 

\subsubsection{Non-Adaptive Offline Benchmark.} In the absence of stochastic outcomes, there is a deterministic mapping from $\curac$ to $N$ and the value of the optimal offline solution, denoted by \nopt, is given by,
\[\nopt(G)= \max_{a_t\in \ac_t\,\,\forall t\in [T]}\,\, F\left(\cup_{t\in [T]}\{a_{t}\}\right).\]
When there are no stochastic outcomes, we have $\opt(G)=\nopt(G)$. Recall that each instance of OSOW-SO induces an instance of OSOW with action set $\curac$ and objective function $F$. If $G$ is an instance of OSOW-SO, then $\nopt(G)$ is the value of the optimal \emph{non-adaptive} offline solution for OSOW-SO. 
Therefore, we have the relationship: 
\[\opt(G)\geq \nopt(G).\]
Here, ``NA" in \nopt\ stands for ``Non-Adaptive".
%with equality in the absence of stochastic outcomes. 
%\noindent \emph{Equivalence of Non-Adaptive OSOW with \pas\ and OSOW without \pas:} 
%\begin{remark}\label{naequiv} %$\nopt(G)$ is the optimal non-adaptive solution to the offline problem (with \pas). In fact, non-adaptive 
	 %When restricted to non-adaptive solutions, the OSOW-SO is equivalent to  %To see this, observe that when restricted to non-adaptive solutions, the instance 
%	Specifically, any instance $G$ of OSOW-SO is equivalent to an instance of OSOW with objective function $F$ %$\curac$ is the set of actions, 
%the set of outcomes $N'$ is a copy of the set of actions $\curac$ and every action $a\in \curac$ deterministically maps to its own copy in $N'$. The objective in $G'$ is given by the function $F$. It follows that every feasible solution to $G'$ is a feasible non-adaptive solution to $G$ with the same objective value, and vice versa. %$\nopt(G)=\nopt(G')$, which is the optimal offline solution for $G'$. 

%\end{remark}

}

\section{New Results for Well-Studied Settings}\label{sec:prelim}

We study several well-known online allocation settings that fall outside the classical OSW framework but are captured by our OSOW-SO formulation. Using the algorithmic techniques developed for the general model, we derive new guarantees for each setting. We focus on the key features of each setting and defer details to Appendix~\ref{appx:prelim}, where we also formally show that all settings discussed here are special cases of OSOW-SO.

\subsubsection*{Online Matching with Stochastic Rewards.}
The model consists of a bipartite graph between resources $I$ and arrivals $[T]$. Upon arrival $t$, the algorithm selects an incident edge $(i,t)$, which succeeds independently with probability $p_{i,t}$. Each successful match of resource $i$ yields reward $r_i$, up to a capacity $c_i$, after which additional successes generate no reward. Resources may be matched arbitrarily many times\footnote{From an algorithmic viewpoint, this is equivalent to the setting where resource $i\in I$ can be successfully matched to at most $c_i$ arrivals and arrivals may be discarded without choosing any action.}, and the objective is to maximize expected total reward.

\subsubsection*{Stochastic Rewards with Patience.}
Each arrival $t$ is endowed with a patience parameter $k_t$, possibly random with a known distribution. The algorithm may attempt to match $t$ sequentially to distinct resources until a success occurs or $k_t$ attempts fail, subject to feasibility and ordering constraints. Outcomes are independent across attempts, and each feasible sequence of attempts constitutes an action whose outcome is a binary vector indicating the first successful resource (if any). Our guarantees hold against the stronger offline benchmark of \citet{borodin2022prophet}, which may interleave attempts across arrivals (Appendix~\ref{appx:patience}).

\subsubsection*{Online Assortment Optimization.}
Upon arrival $t$, a choice model $\phi_t$ is revealed, and the algorithm offers an assortment $U\subseteq I$. The arrival selects at most one resource $i\in U$ with probability $\phi_t(i,U)$. Each resource $i$ yields reward $r_i$ for its first $c_i$ selections\footnote{Under the weak substitutability assumption of \citet{negin}, this formulation is equivalent to models that restrict assortments to available resources, since any algorithm that offers unavailable resources can be transformed into an equivalent (adaptive) algorithm that only shows available resources to each arrival. See the Probability Matching algorithm in \cite{reuse} for details.}. Outcomes are vector-valued, and the objective is to maximize expected total reward. The offline version of this problem cannot be captured by the framework of \citet{asadpour2016maximizing} due to its inherently scalar-valued outcomes (see Appendix~\ref{appx:prelim}). 

\subsubsection*{Two-Sided Assortment Optimization.}
After arrivals choose resources, each resource $i$ independently selects at most one arrival from those that chose it according to a known choice model $\phi_i$. A reward $r_i$ is earned if such a selection occurs. As shown by \citet{aouad2020online}, under mild assumptions on $\phi_i$, the resulting objective is monotone submodular.

\subsubsection*{Online Matching with Reusable Resources.}
This model generalizes classic online bipartite matching by allowing resources to be rented for a fixed deterministic duration, after which they return and may be matched again. Each rental yields the same fixed reward. Although the resulting objective is not submodular, it admits an arrival-consistent submodular order (Appendix~\ref{appx:reuse}).

\subsection{Previous Work}\label{sec:prelimprior}

For non-adaptive algorithms, \citet{mehta} showed that Greedy is $0.5$-competitive for online matching with stochastic rewards, and \citet{deb} proved this ratio is optimal for non-adaptive algorithms. No algorithm can asymptotically exceed $(1-1/e)$, even in classical OBM with large capacities \citep{pruhs}. For online assortment optimization with reusable resources, \citet{reuse} obtained a $(1-1/e)-\delta(c_{\min})$ competitive non-adaptive algorithm, where $\delta(c_{\min})\to 0$ as the minimum resource capacity $c_{\min}=\min_{i\in I} c_i\to\infty$. LP-based non-adaptive algorithms achieve a $(1-1/e)$-competitive ratio in KIID models for stochastic rewards with patience and online assortment optimization \citep{brubach2,borodin2022prophet}. Next, we discuss known results for adaptive algorithms. 

For adaptive algorithms under adversarial arrivals, \citet{chan2009stochastic} established that \agrd\ is $0.5$-competitive for a framework encompassing stochastic rewards, assortment optimization, and two-sided assortments \citep{aouad2020online}. The same guarantee was shown for stochastic rewards with patience by \citet{brubach2} and \citet{borodin2022prophet}, and for assortments with reusable resources by \citet{ms}. In large capacity regimes, \citet{negin} showed that a Balance algorithm (defined in Section \ref{sec:gla}) is  $(1-1/e)$-competitive for assortment optimization, while \citet{aouad2020online} proved that no algorithm can exceed $0.5$ for two-sided assortments.

In the Random Order (RO) model, any algorithm that is $\alpha$-competitive under adversarial arrivals remains $\alpha$-competitive. \citet{borodin2020greedy} showed that \agrd\ achieves $(1-1/e)$ for stochastic rewards, while $0.5$ remained the best known ratio for assortments, two-sided assortments, and patience. In the large capacity regime, Balance attains an asymptotic guarantee of $0.75$ for assortments in the RO model \citep{negin}. Since the UIID model is weaker than RO, all RO guarantees carry over (see Appendix \ref{appx:ro2iid}). \cite{aouad2020online} showed that the \agrd\ algorithm is $(1-1/e)$-competitive for two-sided assortment optimization in the UIID model\footnote{Adaptive and non-adaptive approximations have also been studied for offline two-sided assortment optimization. See \cite{torrico2020multi,housni2024two} for more details.}, which remains the best known guarantee even for simpler settings such as stochastic rewards and online assortments. %online matching with stochastic rewards and online assortment optimization. 
For stochastic rewards with patience, no competitive ratio better than  $0.5$ is known in the UIID model.  %{\color{red} check Calums papers.}  \citep{aouad2020online}. In addition, adaptive Greedy achieves $(1-1/e)$ for two-sided assortment optimization, which remains the best known guarantee even for simpler settings such as stochastic rewards and online assortments.

\subsection{New Results}\label{sec:prelimnew}
	\begin{table}
	\centering
	\begin{tabular}{l|p{20mm}|p{20mm}|p{20mm}|p{20mm}}
		\toprule
		& \multicolumn{2}{c|}{\textbf{Adversarial}} &\multicolumn{1}{c|}{ \textbf{RO}} & \multicolumn{1}{c}{\textbf{UIID}} \\
		\cmidrule{2-3}
		\newline \textbf{Setting} & 	\textbf{General} & 	\textbf{Large Capacity} & & \\
		\midrule
		Stochastic Rewards &  {\small $0.5$}  & {\small $(1-1/e)$ }& {\small$(1-1/e)$ [A]}\newline \bm{$0.5096$}  & {\small$(1-1/e)$ [A]}\newline \bm{$(1-1/e)$}\\
		\hline
		Stochastic Rewards with Patience & {\small  $0.5$ [A]}\newline \bm{$0.5$} & {\small 0.5 [A]}\newline \bm{$(1-1/e)^*$} & {\small  $0.5$ [A]}\newline \bm{$0.5096^*$} & {\small  $0.5$ [A]} \newline \bm{$(1-1/e)^*$}\\
		\hline
		Assortment Optimization & {\small  $0.5$ [A]}\newline \bm{$0.5$} & {\small $(1-1/e)$ } & {\small  $0.5$ [A]} \newline \bm{$0.5096^*$}& {\small$(1-1/e)$ [A]} \newline \bm{$(1-1/e)$}\\
		\hline
		Two-Sided Assortment Optimization & {\small  $0.5$ [A]}\newline \bm{$0.5$} & {\small  $0.5$ [A]} \newline \bm{$0.5$} & {\small  $0.5$ [A]} \newline \bm{$0.5096^*$} & {\small$(1-1/e)$ [A]} \newline \bm{$(1-1/e)$}\\
		\bottomrule
	\end{tabular}
	\caption{Summary of competitive ratio results for non-adaptive algorithms, with newly established results highlighted in bold. We report the best known results for adaptive algorithms, denoted by {\small [A]}, when no prior non-adaptive results exist.  A $(^*)$ indicates settings where our results surpass the previous state-of-the-art for adaptive algorithms. Computational challenges common in assortment optimization and stochastic rewards with patience are ignored, and results are based on an exact implementation of the Greedy algorithm (see Remark \ref{sapremark} and Appendix \ref{appx:sap} for further discussion).}  \label{tab}
\end{table}
 Using our new results for OSW-SO, we derive several new results that are summarized in Table \ref{tab}. Notably, the $(1-1/e)$-competitive ratio in the large capacity regime (for adversarial arrivals) holds for a general formulation, termed Resource Allocation with Vector Outcomes (\wpa), which will be introduced in Section~\ref{sec:gla}. \wpa\ which captures all settings discussed here except two-sided assortment optimization, for which a $0.5$ upper bound exists \citep{aouad2020online}. Our results also improve upon the best previously known adaptive guarantees in several RO and UIID settings.

Beyond Table~\ref{tab}, we apply our black-box reduction in reverse to obtain an upper bound of $0.5$ on the competitive ratio of a broad class of Greedy-type algorithms for OSW with deterministic outcomes. This bound applies even to algorithms with access to computationally unbounded oracles and strengthens the complexity-theoretic limitations of \citet{kapralov}. The generality of OSOW-SO further allows us to handle hybrid models combining two-sided assortment optimization, stochastic rewards with patience, arbitrary feasibility constraints, and correlated outcomes within arrivals, while maintaining independence across arrivals. Finally, under adversarial arrivals, our results imply that Greedy is $0.5$-competitive for online allocation of reusable resources with deterministic usage durations in the presence of arbitrary stochastic outcomes, including combinations such as stochastic rewards with patience that have not previously been analyzed.

\section{Main Results}\label{sec:main}

In this section, we present our main results for OSOW-SO and OSW-SO. The first two subsections, \ref{sec:osow} and \ref{sec:pas}, focus on the adversarial arrival model. In Section \ref{sec:osow}, we analyze the Greedy algorithm for OSOW and establish a competitive ratio of  $0.5$ against $\nopt$. In Section \ref{sec:pas}, we introduce the reduction technique and use it to show that (non-adaptive) Greedy is also  $0.5$-competitive against $\optc$. Sections \ref{sec:rom} and \ref{sec:iid} describe the RO and UIID models, respectively, extending the reduction technique to both models.  We briefly discuss computational challenges associated with implementing Greedy and possible mitigation strategies in Remark~\ref{sapremark}, with further details provided in Appendix~\ref{appx:sap}.

\subsection{Optimal Competitive Ratio for OSOW}\label{sec:osow}
Without stochastic outcomes, the \grd\ algorithm is known to be  $0.5$-competitive for OSW. In contrast, no such guarantee is known for OSOW, where the objective function need not be submodular but is instead a sum of monotone submodular order functions. In this section, we establish a matching guarantee for the more general setting.
\begin{theorem}\label{advmain}
	Greedy is  $0.5$-competitive for OSOW in the adversarial model.
\end{theorem}
Since OSOW generalizes OSW, this guarantee is optimal: unless NP$=$RP, no polynomial-time algorithm can achieve a competitive ratio strictly better than  $0.5$ for OSOW.

To prove Theorem~\ref{advmain}, we introduce some notation and a key structural property of submodular order functions. Consider an instance $G$ of OSOW. Let $\salg_t$ denote the action selected by Greedy at arrival $t$ and let $\alg_t$ denote the set of actions chosen prior to arrival $t$. Let $\opt_{T+1}$ denote an optimal offline solution (set of actions) for instance $G$, and let $o_t=\opt_{T+1}\cap A_t$ represent the action selected at $t$ in the offline solution. Finally, let $\opt_t=\{o_1,\cdots,o_{t-1}\}$ denote the set of actions selected by the offline solution prior to arrival $t$. We note that in settings with deterministic outcomes, the optimal offline solution coincides with \nopt.  

{\color{black}The following lemma is the main technical ingredient in our analysis of the Greedy algorithm for OSOW. It follows from monotonicity, the submodular order property, and arrival-consistency of submodular orders, and can be viewed as a generalization of Corollary~2 in \cite{SOF}. Its proof is included in Appendix~\ref{appx:osow}.
\begin{lemma}\label{SOineq}
	Consider a collection of monotone functions \(\{F_1,\cdots,F_u\}\) defined on the ground set \(\mathcal{A}\). Suppose that each function admits an arrival-consistent submodular order (not necessarily the same order across functions). Then, the function $F \coloneqq \sum_{j\in[u]} F_j$ satisfies the following inequality:
	\[
	F(\opt_{T+1} \cup \alg_{T+1})
	\;\le\;
	F\!\left(\alg_{T+1}\right)
	\;+\;
	\sum_{t\in[T]}
	F\!\left(\{o_t\}\setminus\{\salg_t\}
	\;\middle|\;
	\cup_{\tau\in[t-1]} \{\salg_\tau\}\right).
	\]
\end{lemma}}	
\begin{proof}{Proof of Theorem \ref{advmain}.}
Clearly, the competitive ratio of \grd\ cannot exceed  $0.5$ because OSW is a special case of OSOW. We now show that $F(\alg_{T+1})\geq 0.5\,F(\opt_{T+1})$, which establishes that the competitive ratio of \grd\ is exactly  $0.5$.

First, observe that $F$ is monotone because it is a non-negative linear combination of monotone functions. We have
 \begin{eqnarray}
F (\opt_{T+1})&\leq &F\left( \opt_{T+1}\cup \alg_{T+1}\right),\nonumber\\
 	&\leq & F\left(\alg_{T+1}\right)+ \sum_{t\in [T]} F\left(\{o_{t}\}\backslash \{\salg_{t}\}\mid \cup_{\tau\in[t-1]} \{\salg_\tau\}\right),\nonumber\\
	 &= & F(\alg_{T+1})+ \sum_{t\in [T]} F(\{o_{t}\}\backslash \{\salg_{t}\}\mid \alg_t),\nonumber\\
 	 &\leq &F(\alg_{T+1})+ \sum_{t\in [T]} F(\salg_t\mid \alg_t),\label{fourth}\\
 	 &=& 2F(\alg_{T+1})\nonumber.
 	\end{eqnarray}
  The first inequality follows from the monotonicity of $F$. The second inequality follows from Lemma \ref{SOineq}. The third inequality uses the fact that $\salg_t$ is the greedy (i.e., marginally ``best") action at $t$, implying $F(\{o_t\}\backslash \{\salg_t\}\mid \alg_t)\leq F(\salg_t\mid \alg_t)$. Note that if $o_t=r_t$,  then $F(\{o_t\}\backslash \{\salg_t\}\mid \alg_t)=F(\emptyset\mid \alg_t)=0$. %, where $0_t$ is the null action at $t$. 

	\hfill\Halmos\end{proof}

%The optimization problem $\max_{a\in \ac_t} F(a\mid \alg_t)$ may be hard to solve and we may only be able to find an approximately optimal solution to this problem. In Section \ref{}, we discuss how the guarantee in Theorem \ref{advmain} generalizes when we only have an approximate solution to this optimization problem at every arrival.	
%generalize the proof of Theorem \ref{advmain} to show that, 
%\begin{remark}
{\color{black}  Using the same proof framework, we can show that \agrd\ is  $0.5$-competitive with respect to \opt\ in the presence of stochastic outcomes. In fact, later we will establish a more general result for an adaptive (randomized) algorithm in the non-monotone setting (see Section \ref{sec:cassam} and Remark \ref{agrd0.5} in Appendix \ref{appx:cassam}). However, comparing \grd\ with \opt\ requires a different approach. Theorem \ref{advmain} only guarantees that Greedy is  $0.5$-competitive against the non-adaptive offline solution, \nopt. Specifically, inequality \eqref{fourth} in the proof does not hold when comparing Greedy with \opt\ because \opt\ has the advantage of making adaptive choices at each arrival $t\in[T]$. % an action at each arrival whereas \grd\ chooses a fixed action at every arrival. %To see this, let $P_t=N(\alg_t)\cap P$ denote the set of realized outcomes for actions in $\alg_t$ when the realized mapping is $P\in \curn$. When $P_t$ is known, it may be possible to dominate the Greedy action $\salg_t$, i.e., there may be an adaptive action $a(P_t)\in \ac_t$ such that,

\subsection{Optimal Competitive Ratio in the Presence of Stochastic Outcomes}\label{sec:pas}

%From Theorem \ref{advmain}, we have that non-adaptive Greedy is  $0.5$ competitive against non-adaptive offline. 
In this section, we compare Greedy with $\optc$, which is an upper bound on \opt, %(Lemma \ref{concdom}) 
and establish the following result.

\begin{theorem}\label{concave1}
	Greedy is  $0.5$-competitive for OSOW-SO in the adversarial model. %expected LP benchmarks in the presence of post-allocation stochasticity.
\end{theorem}
 {\color{black} We show this result by \emph{reducing} the comparison  	between \grd\ and $\optc$ on an instance $G$ of OSOW-SO to a comparison	between \grd\ and \nopt\ on a new instance $\gc$ of OSOW. We embed the optimal offline solution to $G$ as a feasible solution in $\gc$, while ensuring that the output of Greedy remains the same for both instances. It is fairly easy to construct such an instance using an arbitrary objective function but we show that such a construction is also possible using a sum of monotone submodular order (and monotone submodular) objective functions. Later, we apply the same reduction technique to derive new results for other arrival models and various Greedy-like algorithms. Overall, this reduction technique enables us to lift competitive ratio results from classic deterministic settings to settings with stochastic outcomes. 
 	
 	In the following lemma, we formally state the properties of $\gc$ that we use to show Theorem \ref{concave1}. In the following, we use $\aalg(G)$ to denote the value of the \grd\ solution for an instance $G$.
 	\begin{lemma}\label{redprop}
 		For each instance $G$ of OSOW-SO, there exists an instance $\gc$ of OSOW with the following properties:
 		\begin{enumerate} [$(i)$]
 			\item \textbf{Dominance:} The optimal offline value of $\gc$ is an upper bound on $\optc(G)$, i.e., $\nopt(\gc)\geq \optc(G)$.
 			%	\item \emph{SO property:} $\gc$ is an instance of OSOW.
 			\item \textbf{Invariance:} The output of Greedy is the same for both instances which implies $\aalg(G)=\aalg(\gc)$. 
 		\end{enumerate}
 	\end{lemma}
In fact, we show that $\gc$ is an instance of OSW if $G$ is an instance of OSW-SO. For the moment, we assume the correctness of Lemma \ref{redprop} and use it to show Theorem \ref{concave1}.
 	\begin{proof}{Proof of Theorem \ref{concave1}.}
 			%For the time being, we assume that $\alg(G)=\alg(\gc)$. 
 		Given an instance $G$ of OSOW-SO, consider an instance $\gc$ of OSOW as stated in Lemma \ref{redprop}. We have the following chain of inequalities:
 		\begin{eqnarray*}
 				\frac{\aalg(G)}{\opt(G)}\geq \frac{\aalg(G)}{\optc(G)}\geq \frac{\aalg(G)}{\nopt(\gc)}=\frac{\aalg(\gc)}{\nopt(\gc)}\geq 0.5.
 		\end{eqnarray*}
 	The first inequality	follows from Lemma \ref{concdom}, while the second inequality and the subsequent equality follow from the dominance and invariance properties (parts $(i)$ and $(ii)$ of Lemma \ref{redprop}), respectively. The final inequality follows from Theorem \ref{advmain}. % 			\alg(G)&=&\alg(\gc),\\
 	% 			&\geq &0.5\, \nopt(\gc),\\
 	% 			&\geq & 0.5 \, \optc(G),\\
 	% 			&\geq & 0.5 \, \opt(G),%as desired. Here, the equality follows from the invariance property (Lemma \ref{redprop}$(ii)$). The first inequality follows from Theorem \ref{advmain}, which applies to $\gc$ because it is an instance of OSOW. The second inequality follows from the dominance property (Lemma \ref{redprop}$(i)$), and the final inequality follows from Lemma \ref{concdom}. % of $\gc$. % Then, we show several useful properties of the new instance that help us to prove the theorem.
 		
 		\hfill\Halmos\end{proof}

 Next, we construct instance $\gc$. We then discuss the main properties of $\gc$ and outline the keys steps in the proof of Lemma \ref{redprop}.  The full proof of Lemma \ref{redprop} is deferred to Appendix \ref{appx:pas}.
% 	 			Note that the invariance property is the only one that involves the Greedy algorithm, so our technique applies to any online algorithm that satisfies a similar property for $\gc$. Specifically, let \aalg\ be a non-adaptive online algorithm for OSOW-SO (or OSW-SO). To show that \aalg\ is $\beta$-competitive for OSOW-SO (or OSW-SO) it suffices to show that \aalg\ is  $\beta$-competitive for OSOW (or OSW) and that $\aalg(G)=\aalg(\gc)$, where $\aalg(G)$ and $\aalg(\gc)$ are the values of the outputs of \aalg\ on instances $G$ and $\gc$, respectively. 

 	\subsubsection{Construction of $\gc$.} 	Recall that each instance $G$ of OSOW-SO induces an instance of OSOW with  objective function $F$ defined in \eqref{defF}. We construct $\gc$ by augmenting this instance with additional actions and  extending the objective function $F$ to this enlarged action set. The extension is designed to $(i)$ capture the value of the optimal adaptive offline solution of $G$ (the dominance property) and $(ii)$ ensure that the value attained by the Greedy algorithm is unchanged (the invariance property).

 	For each $t\in [T]$, let $\h{a}_t$ denote a new action, which represents the ``decision'' of the optimal offline at arrival $t$ in instance $G$. Let $\h{\curac}=\{\h{a}_t\}_{t\in [T]}$ be the set of these new actions, and define the expanded ground set of actions as $\no=\curac\cup \h{\curac}$.  Since the instance $\gc$ has no stochastic outcomes,  it suffices to define the objective function $F$ directly on the expanded ground set $\no$. 

 We associate each new action $\h{a}_t$ with the outcome set
 	\[N(\h{a}_t)=N_t\quad \forall t\in [T].\] 
 Note that we allow these outcome sets to overlap with those of actions in $\ac_t,$ since this construction is used solely for defining the extended objective function $F$. 
 
 We now define the values of the new actions using the optimal solution of the offline benchmark $\optc$, ensuring that $F(\h{\curac})=\optc(G)$. Recall that \[\optc(G)=\max_{Y\in\Delta(\curac)} F^c(Y),\] and let $Y^c=(y^c_a)_{a\in \curac}$ satisfy
 	  \[Y^c=\argmax_{Y\in\Delta(\curac)} F^c(Y).\] 
 	Let $\alpha^c$ denote the optimal solution 
 	to the optimization problem defining $F^c(Y^c)$, so that
 	\[\optc(G)=\fc(Y^c)=\sum_{X\in \cal{X}} \alpha^c (X)\, f(X).\]
 		For any subset $S\subseteq \no$ such that $S_1=S\cap \curac$ and $S_2=S\cap \h{\curac}$. We extend the definition of $F$ to $\no$ as
 		\begin{eqnarray}
 			F(S)&\coloneqq\, &\sum_{P_1\in \curn}\sum_{P_2\in \mathcal{X}} \gamma(P_1) \, \alpha^c(P_2)\, f\left((N(S_1)\cap P_1)\cup (N(S_2)\cap P_2)\right)\label{def1}\\% \underset{t\mid \h{a}_t\in S_2}{\cup} X_t\right),\\
 			&=&F(S_1)+\sum_{P_1\in \curn}\sum_{P_2\in \cal{X}} \gamma(P_1) \, \alpha^c(P_2)\, f\big(N(S_2)\cap P_2\,\mid\, N(S_1)\cap P_1\big).\label{def2}
 		\end{eqnarray}	
 		The first summation ranges over all possible mappings from actions to outcomes in the original instance $G$, while the second summation ranges over all realizable outcome sets of feasible action sets. Intuitively, $F(S)$ is the expected total reward obtained by the action set $S_1\cup S_2$, where we independently sample a realized mapping $P_1$ w.p.\ $\gamma(P_1)$ and a (partial) mapping $P_2$ w.p.\  $\alpha^c(P_2)$. Here, $\gamma$ corresponds to the product distribution over outcomes, whereas $\alpha^c$ is the distribution over outcome sets induced by the optimal offline solution. Note that when $S_2=\{\emptyset\}$, this definition coincides with the original of $F$ on the ground set $\curac$ given in \eqref{defF}. 
 		
 		In equation \eqref{def2}, we used the fact that $\sum_{P_2\in \cal{X}} \alpha^c(P_2)=1$ to separate the term $F(S_1)$ from the summation. Additionally, for every set $S\subseteq \h{\curac}$ of new actions, we have, 
 		\begin{eqnarray*}
 			F(S)&=&\sum_{P_2\in \cal{X}} \alpha^c(P_2)\,\, f\left(\underset{t\mid \h{a}_t\in S}{\cup} (N_t\cap P_2)\right),
 		\end{eqnarray*}
 		where $N(S)\cap P_2=\underset{t\mid \h{a}_t\in S}{\cup} N_t\cap P_2$.
 		In particular, we have the following:
 		\begin{eqnarray}
 			\nopt(\gc)\,\geq\, F(\h{\curac})\,=\, \sum_{P\in \cal{X}} \alpha^c(P) f(P)\,=\, \optc(G),\label{optineq}
 		\end{eqnarray}
 		where the inequality follows from the fact that the set of actions $\h{\curac}$ is a feasible solution for the instance $\gc$. 
 		This completes the definition of $\gc$. Note that inequality \eqref{optineq} provides the dominance property.   
 		To further illustrate this construction, we present a numerical example. For simplicity, we use \opt\ instead of $\optc$ to construct $\gc$ in this example.
 		
\begin{example} 
	Let us revisit Example \ref{adapeg}, where we consider an instance of OSOW-SO with two arrivals $\{1,2\}$ and action sets $\ac_1=\{a_{1}\}$ and $\ac_2=\{a'_{1},a_{2}\}$. Let $e_1,e'_1,e_2$ denote the successful outcomes of actions $a_1,a'_1,a_2$, respectively, with success probabilities $2p_{e_1}=p_{e'_1}=p_{e_2}=1$. Let the rewards be defined by $f(e_1)=f(e'_1)=4f(e_2)=1$, with marginal gains $f(e'_1\mid e_1)=0$ and $f(e_2\mid e_1)=0.25$. 
	
	In this setting, Greedy selects actions $a_1$ and $a'_1$. The adaptive optimal solution $\aopt$ also selects $a_1$ at arrival 1. If $a_1$ succeeds, \opt\ selects $a_2$ at arrival 2; otherwise, it selects $a'_1$. The outcomes of the actions selected by \opt\ lead to the following distribution $\alpha^c$ over the success outcomes: $\alpha^c(\{e_1,e_2\})=0.5$ and $\alpha^c(\{e'_1\})=0.5$. 
	
	We capture this adaptive solution in a deterministic instance $\gc$ by introducing two new actions $\{\h{a}_1, \h{a}_2\}$ with the following rewards:
	\begin{align*}
		F(\h{a}_1) &= \alpha^c(\{e_1,e_2\})\, f(e_1) = 0.5, \\
		F(\{\h{a}_1,\h{a}_2\}) &= \alpha^c(\{e_1,e_2\})\, f(e_1, e_2) + \alpha^c(\{e'_1\})\, f(e'_1) = 0.5(1.25) + 0.5(1) = 1.125.
	\end{align*}
	Observe that $F(\h{a}_1)$ equals the expected reward of \opt\ at arrival 1, and $F(\{\h{a}_1,\h{a}_2\})$ equals the total expected reward of \opt.  While $\gc$ is technically a deterministic instance, its objective $F$ is constructed as an expectation over ``imaginary" stochastic outcomes.  The distribution $\alpha^c$ ensures that the imaginary outcomes of actions $\h{a}_1$ and $\h{a}_2$ are perfectly negatively correlated.
	
	In instance $\gc$, Greedy selects action $a_1$ at the first arrival because $a_1$ and $\h{a}_1$ yield the same marginal reward ($0.5$). We claim that Greedy still picks $a'_1$ at arrival 2, rendering the Greedy solution unchanged from the original stochastic instance. To see this, consider the marginal reward for choosing $\h{a}_2$ at arrival 2, given that $a_1$ was selected at arrival 1:
	\begin{align*}
		F(\h{a}_2\mid a_1) &= \alpha^c(\{e_1,e_2\}) [p_{e_1}f(e_2\mid e_1) + (1-p_{e_1})f(e_2)] + \alpha^c(\{e'_1\}) [p_{e_1}f(e'_1\mid e_1) + (1-p_{e_1})f(e'_1)] \\
		&= 0.5[0.5 \times 0.25 + 0.5 \times 0.25] + 0.5[0.5 \times 0 + 0.5 \times 1] \\
		&= 0.375 < 0.5 = F(a'_1\mid a_1).
	\end{align*}
This confirms that the output of Greedy is invariant, even though the value of the optimal adaptive offline solution for $G$ is embedded within $\gc$. Intuitively, the imaginary outcomes of $\h{a}_2$ and $a_1$ are treated as independent; consequently, the correlation between $\h{a}_1$ and $\h{a}_2$ remains "invisible" to Greedy once it ignores $\h{a}_1$ at the first arrival. 
\end{example}

 \subsubsection{Properties of $\gc$ and Proof Sketch of Lemma \ref{redprop}.} Lemma \ref{redprop} states that: $(i)$ $\gc$ is an instance of OSOW, $(ii)$ The dominance property holds, and $(iii)$ The invariance property holds. The dominance property follows directly from the definition (see \eqref{optineq}), so we focus on proving $(i)$ and $(iii)$. Below, we outline the key steps; the complete proofs are deferred to Appendix \ref{appx:pas}.  %involved in proving these two properties. The full proof of Lemma \ref{redprop}, along with the other results that we state below, can be found in Appendix \ref{appx:pas}. 
 
 	\begin{lemma}\label{gcprop1}
 		If $G$ is an instance of OSOW-SO then $\gc$ is an instance of OSOW. Similarly, if $G$ is an instance of OSW-SO then $\gc$ is an instance of OSW. 
 		\end{lemma} 	
 The lemma above strengthens claim $(i)$. The key insight is that the function $f\left((N(\cdot)\cap P_1)\cup (N(\cdot)\cap P_2)\right)$ is monotone and admits an arrival-consistent submodular order (or is submodular). %that the function $F$ is a sum of monotone functions that admit an arrival-consistent submodular order (the order may differ across functions). When $G$ is an instance of OSW-SO, we show that $F$ is a sum of monotone submodular functions. % and the set of monotone submodular order functions is closed under addition. 
 	
 	To prove the invariance property, we need to show that \grd\ does not select an action from the set $\h{\curac}$ on instance $\gc$. If \grd\ chooses an action from $\ac_t$ at every $t\in [T]$, then by induction, \grd\ produces the same output on $G$ and $\gc$. The key ingredient in this proof is Lemma \ref{key}.
 	\begin{lemma}\label{key}
	For any set function $f$ (not necessarily monotone or submodular order), the function $F$ defined in \eqref{def1} satisfies, 
	\[F(\h{a}_t\mid S)\leq \max_{a\in \ac_t} F(a\mid S)
	\qquad \forall t\in [T], S\subseteq \curac\backslash \ac_t.\] 
\end{lemma}
Note that if $S\subseteq \curac$ is the set of actions selected by \grd\ prior to arrival $t$, then by applying Lemma \ref{key}, the marginal value $F(\h{a}_t\mid S)$ of action $\h{a}_t$ is (weakly) dominated by the marginal value of some action in $\ac_t$. Thus, \grd\ will choose an action from $\ac_t$ at arrival $t$.

 	We now state the two main lemmas that we use to prove Lemma \ref{key}. For all $a\in \curac,\, S\subseteq \curac\backslash \{a\},$ and $e\in N(a)$, let
 	\[w_{e,S}=\sum_{P\in \curn(S)} \gamma(P)\,f(e\mid P),\]
 	denote the marginal reward of outcome $e\in N(a)$ when action set $S$ has been selected.
\begin{lemma}\label{useF}\label{key1}
	For any set function $f$ (not necessarily monotone or submodular), every arrival $t\in [T]$, action $a\in \ac_t$, and set $S\subseteq \curac\backslash \{a\}$, the function $F$ defined in \eqref{def1} satisfies, 
	\[
	F(a\mid S)=\sum_{e\in N(a)} p_e\, w_{e,S}.
	\]
\end{lemma}

%Let $w_{e,S}$ is the ``reward" if $e$ is the realized outcome of action $a$. Lemma \ref{useF} states that the marginal value of action $a$ (with respect to $F$) is simply the expected ``reward" from the outcome of $a$.
\begin{lemma}\label{key2}
	For any set function $f$ (not necessarily monotone or submodular), every arrival $t\in [T]$ and set $S\subseteq \curac\backslash \ac_t$, the function $F$ defined in \eqref{def1} satisfies, 
	\[
	F(\h{a}_t\mid S)=\sum_{a\in \ac_t} y^c_a \left(\sum_{e\in N(a)} p_e\, w_{e,S}\right).
	\]
\end{lemma}

 	Recall that $\sum_{a\in \ac_t} y^c_a=1$. Using Lemma \ref{useF} and Lemma \ref{key2}, we have that $F(\h{a}_t\mid S)=\sum_{a\in \ac_t} y^c_a\, F(a\mid S)$. This shows that the marginal value of $\h{a}_t$ is a convex combination of the marginal values of actions in $\ac_t$. From this, we can conclude Lemma \ref{key}. %As a reminder, the proofs of these lemmas are included in Appendix \ref{}. %Thus, if the actions chosen prior to $t$ are a subset of $\curac\backslash \ac_t$, then \grd\ will not choose $\h{a}_t$ at $t$.

\begin{remark}[Invariance Beyond \grd]\label{beyondg}
	The invariance property is the only component of our framework that depends on the specific online algorithm. Consequently, the reduction applies to any algorithm that satisfies an analogous invariance condition with respect to the constructed instance $\gc$. In Section~\ref{sec:gla}, we establish invariance for a broad class of non-adaptive Greedy-like algorithms that, at each arrival, maximize the marginal value of a (possibly randomized) set function different from $F$. In contrast, Section~\ref{sec:cassam} presents a non-Greedy algorithm for which invariance fails.
\end{remark}

\begin{remark}[Single-Arrival Problem and Approximate-Greedy]\label{sapremark}
	Let $\alg_t$ denote the set of actions selected by \grd\ prior to arrival $t$.  
	By Lemma~\ref{key1}, at arrival $t$ the \grd\ algorithm solves the following optimization problem:
	\[
	\text{Single-Arrival Problem (\sap):}\qquad 
	\argmax_{a\in \ac_t} \sum_{e\in N(a)} p_e\, w_{e,\alg_t}.
	\]
	In some applications, the weights (marginal rewards) $w_{e,\alg_t}$ are sums of exponentially many terms, making them challenging to compute exactly. Additionally, the \sap\ problem itself can be NP-hard in certain settings. In Appendix \ref{appx:sap}, we use our reduction framework to show that an Approximate-Greedy algorithm, which uses sample average approximations (SAA) of the weights and approximately optimal solutions to the \sap, is $\tfrac{\eta}{1+\eta}-O(\epsilon)$-competitive against $\optc$, where $\eta$ denotes the approximation ratio  for the \sap\ and $\epsilon$ is a tunable parameter that is inversely proportional to the number of samples used in the SAA. As a concrete implication, in the setting of stochastic rewards with \emph{stochastic} patience, our result implies that Approximate-Greedy is $\tfrac{1}{3}-O(\epsilon)$-competitive when $\eta=0.5$. This improves upon the $0.25$-competitive guarantee for \agrd\ established by \cite{brubach2} under the same approximation ratio for the \sap.
\end{remark}

\subsection{Beyond Adversarial Arrivals}\label{sec:stochastic}
In this section, we consider two alternative (and less pessimistic) models of generating the instance, namely, the Random Order (RO) model and the Unknown IID (UIID) model. In both models, OSOW-SO reduces to OSW-SO and we show new competitive ratio results for OSW-SO by using the reduction technique to lift state-of-the-art results for OSW. 
\subsubsection{The Random Order (RO) Model.}\label{sec:rom}

In the RO model, the adversary chooses the instance $G$ but the online algorithm receives arrivals in a randomly (re)ordered sequence. Let $\sigma:[T]\to [T]$ denote a uniformly random permutation of the arrivals and let \aalg\ denote an online algorithm. At arrival $t\in [T]$, $\aalg$ receives the set of actions $\ac_{\sigma(t)}$, possible outcomes $N_{\sigma(t)}$, and probabilities $\{p_e\}_{e\in N_{\sigma(t)}}$, and chooses exactly one action from $\ac_{\sigma(t)}$. Let $G_\sigma$ denote the reordered instance  
and let $E_{\sigma}[\cdot]$ denote expectation w.r.t.\ the randomness in $\sigma$. The competitive ratio of $\aalg$ is defined as,
\[\inf_{G\in \mathcal{G}} \frac{E_{\sigma}[\aalg(G_\sigma)]}{\opt(G)}\]
 
Note that $\opt(G)=\opt(G_\sigma)$ because the order of arrivals has no impact on the offline solution. Similarly, $\optc(G)=\optc(G_\sigma)$ for every permutation $\sigma$.

Since an adversary can also generate instances using the RO model, we know from Theorem \ref{concave1} that Greedy is \emph{at least}  $0.5$-competitive for OSW-SO in the RO model. However, since the arrivals are now randomly ordered, this result may not be optimal. 
Using the reduction technique, we show that Greedy maintains the same competitive ratio for both OSW and OSW-SO, even in the RO model.  

Let $\beta_{RO}$ denote the true competitive ratio of Greedy in the RO model for OSW. \cite{rom2} showed that $\beta_{RO}\geq 0.5096$. We establish the following new result.
 \begin{theorem}\label{rom}
 	Greedy is $\beta_{RO}$-competitive, where $\beta_{RO}\geq 0.5096$, for OSW-SO in the RO model. 
 \end{theorem}
Thus, a non-adaptive algorithm like \grd\ achieves the state-of-the-art competitive ratio for OSW-SO in the RO model. The proof is included in Appendix \ref{appx:rom}. As in the adversarial case, we leverage the construction $\gc$ and its properties (dominance and invariance) to establish this result. 
 
\subsubsection{The Unknown IID (UIID) Model.}\label{sec:iid}

 Let $G$ denote an instance of OSW-SO in the adversarial model. In the UIID model, $G$ is paired with a probability distribution $D$ over the set of arrival \emph{types} $[T]$. There are $T$ arrivals in the UIID model, with each arrival having a type independently sampled from the distribution $D$. In other words, the arrival sequence is generated by drawing $T$ independent and identically distributed samples from $D$.  In the UIID model, the distribution $D$ is \emph{unknown} to the online algorithm, wheres in the KIID model, the online algorithm has knowledge of the distribution.
The set of arrivals in $G$ corresponds to the set of arrival types in the UIID model. Specifically, an arrival of type $t$ has action set $A_t$, outcome set $N_t$, and probabilities $\{p_e\}_{e\in N_t}$. Notably, if two arrivals $t$ and $t'$ share the same type, the realized outcomes of the actions selected at $t$ and $t'$ are independent.

Since multiple arrivals may have the same type, the objective function 
in the UIID model is an extension of the set function $f$ (and $F$) to the $|N|$ (and $|\curac|$) dimensional lattice $\mathbb{Z}^{|N|}_{+}$ (and $\mathbb{Z}^{|\curac|}_{+}$). We assume that $f$ is monotone on this extended domain and satisfies a well known generalization of submodularity called the diminishing returns (DR) property\footnote{From \cite{kapralov}: A function $f$ has the property of diminishing marginal returns if for any two vectors $\bm{x},\bm{y}$ such that $\bm{x}\geq \bm{y}$ (coordinate-wise) and any unit basis vector $e_t=(0,\cdots,0,1,0,\cdots,0)$, we have, $f(\bm{x}+e_t)-f(\bm{x})\leq f(\bm{y}+e_t)-f(\bm{y})$. Further, $f$ is monotone if $f(\bm{x})\geq f(\bm{y})$.}. %Similar to the RO model, we focus on submodularity instead of submodular order because $f$ is submodular when 
%every possible arrival sequence induces a submodular order for $f$.

Let $\bm{H}$ denote a random instance generated according to $D$, and let
$E_D[\cdot]$ denote expectation with respect to the arrival sequence. Let $\mv_G$ denote the set of all possible distributions over $[T]$. The
competitive ratio of an online algorithm $\aalg$ in the UIID model is
defined as
\[
\inf_{G\in\mathcal{G},\, D\in\mathcal{M}_G}
\frac{E_D[\aalg(\bm{H})]}{E_D[\opt(\bm{H})]}.
\]
The UIID model is weaker than the RO model: any algorithm that is
$\beta_{\mathrm{RO}}$-competitive in the RO model is also
$\beta_{\mathrm{RO}}$-competitive in the UIID model (see
Appendix~\ref{appx:ro2iid}). Moreover, \cite{kapralov} showed that Greedy
is $(1-1/e)$-competitive for OSW in the UIID model and we show that this result also extends to our generalized OSW formulation (see Appendix~\ref{appx:iiddeterm}). By extending our
reduction technique, we obtain the same guarantee for OSW-SO.
\begin{theorem} \label{uiid}
	Greedy is $(1-1/e)$-competitive for OSW-SO in the UIID model.	
\end{theorem}
The proof of Theorem~\ref{uiid} appears in
Appendix~\ref{appx:iidso}. As before, the analysis reduces instances with
stochastic outcomes to deterministic ones. A key distinction in the
UIID model is that the offline optimum depends on the realized instance;
we address this by introducing a set of auxiliary actions that capture
all possible offline solutions. Since OSW-SO strictly generalizes OSW,
this result implies that Greedy achieves the optimal competitive ratio
among all polynomial-time (even adaptive) online algorithms unless
$\mathrm{NP}=\mathrm{RP}$, including in the KIID model. We note that while
Greedy is non-adaptive with respect to outcomes, it does adapt to the
realized sequence of arrival types.
% The proof of Theorem \ref{uiid} is included in Appendix \ref{appx:iidso}. Since OSW-SO generalizes OSW, the \grd\ algorithm achieves the optimal competitive ratio for efficient online algorithms unless NP$=$RP. In particular, no polynomial-time adaptive algorithm can outperform \grd\ in terms of competitive ratio even in the KIID model.  We note that the output of non-adaptive Greedy depends on the instance $\bm{H}$, i.e., Greedy is oblivious to the realized outcomes but adapts to the sequence of arrival types. 

%To prove Theorem \ref{uiid}, we use the same proof technique as before, reducing the analysis of instances with stochastic outcomes to instances with deterministic outcomes. The key distinction 
%in the UIID model is that the optimal offline solution depends on the realized instance. To account for this, we introduce  
%a \emph{set} of new actions at each arrival, which captures the optimal offline solutions for all possible instances. Additionally, since we generalize the definition of OSW used in \cite{kapralov}, we include a proof in Appendix \ref{appx:iiddeterm} showing that Greedy is $(1-1/e)$-competitive for OSW, for the sake of completeness.

 \begin{remark}[OSOW Reduces to OSW] 
In the natural extensions of OSOW to the RO and UIID models, every
realized arrival sequence induces a submodular order for $f$.
Equivalently, $f$ is submodular, and OSOW reduces to OSW in these
stochastic arrival models.
\end{remark}
\section{Beyond Greedy} \label{sec:gla}%: New Lower and Upper Bounds}
%As a direct corollary, Balance-with-Rounding has competitive ratio at most 0.5 for OSW, without assuming NP$\neq$RP.}% to the best of our knowledge, this is the first result of this form for OSW.}%Indeed, nearly all algorithms that outperform Greedy in the adversarial setting fall into one of these categories.

	%The reduction technique shows that Greedy maintains the same competitive ratio in both stochastic and deterministic settings. 
	While Greedy achieves the state-of-the-art guarantee for the general OSW formulation, it is outperformed in several well-studied special cases by Greedy-like algorithms (\gla s) that attain the optimal competitive ratio \citep{kvv,msvv,displayad,goel,wholepage}. 
	
	In this section, we focus on the adversarial arrival model and use the reduction technique to establish that a non-adaptive Greedy-like algorithm (\gla) achieves the optimal competitive ratio of $(1-1/e)$ in the large capacity regime for several important special cases of OSW-SO. In addition, we apply the reduction technique in  reverse to establish new upper bounds on the competitive ratio of Greedy-like algorithms for Online Submodular Maximization (OSM). While prior hardness results rule out polynomial-time algorithms with competitive ratio exceeding  $0.5$, they leave open the possibility of beating this bound with computationally inefficient methods. Using the reduction technique, we show that this is impossible for a broad class of Greedy-like algorithms. %In addition, for arbitrary resource capacities, we apply the reduction technique in  reverse to derive a new upper bound of  $0.5$ on the competitive ratio of a family of randomized (not necessarily polynomial time) \gla s for the OSW problem 

\subsection{Reduction Technique for Special Cases of OSW-SO} 
Let $\Omega$-SO denote a subset of instances of OSW-SO. Let $\Omega\subseteq \Omega$-SO denote the corresponding subset consisting of instances with deterministic outcomes. Let $\aalg$ be a non-adaptive online algorithm for instances in $\Omega$-SO. We say that \aalg\ has competitive ratio $\beta$ for $\Omega$ if, 
\[\inf_{G\in \Omega}\frac{\aalg(G)}{\opt(G)}= \beta.\]
%where we take expectation with respect to randomness in $\aalg$. 
Clearly, if \aalg\ is $\beta$-competitive for $\Omega$-SO then it is at least $\beta$ competitive for $\Omega$.

Suppose now that \aalg\ satisfies the invariance property. Then, by using the reduction technique, for every
 instance $G$ of OSOW-SO there exists a corresponding instance $\gc$ of OSW such that the following holds,
\begin{equation}\label{gandgc}
	\frac{\aalg(G)}{\opt(G)}\geq \frac{\aalg(\gc)}{\opt(\gc)}.
\end{equation}
If $\Omega$ is sufficiently ``large'' so that $\gc\in\Omega$, then the right-hand side of inequality \eqref{gandgc} is lower bounded by $\beta$. In this case, $\aalg$ achieves the same competitive ratio for $\Omega$ and $\Omega$-SO. However, not every problem class satisfies this largeness condition. For example, the class of online bipartite matching with stochastic rewards is not sufficiently large: if $G$ is an instance of online bipartite matching with stochastic rewards, then under our construction in Section~\ref{sec:pas}, the corresponding instance $\gc$ is not an instance of online bipartite matching. At the same time, we want $\Omega$ to be sufficiently ``small'' so that it admits a Greedy-like algorithm with competitive ratio strictly better than $0.5$. In the following, we define a problem setting that generalizes many of
the models discussed in Section~\ref{sec:prelim}, admits a Greedy-like algorithm with competitive ratio $(1-1/e)$ for deterministic outcomes, and is nevertheless large enough to ensure that $\gc \in \Omega$ for every instance
$G \in \Omega$-SO.

{\color{black}
\subsubsection*{Resource Allocation with Vector Outcomes (\wpa).} Consider a setting where we have a set of resources $I$. Resource $i\in I$ earns a reward $r_i$ per unit of used capacity, up to a maximum of $c_i$ units. Each outcome $e\in N$ is a binary vector $e=(e_i)_{i\in I}$, where $e_i\in\{0,1\}$ indicates whether outcome $e$ will consume a unit of resource $i$. Given a set of outcomes $X\subseteq N$, the total reward obtained from resource $i$ is $f_i(X)=r_i\min\{c_i,\,\sum_{e\in X} e_i\}$.  The overall objective function is the sum of rewards across all resources, $f(X)=\sum_{i\in I} f_i(X).$ We impose no restrictions on the sequence of action sets $(\ac_t)_{t\in [T]}$ or the probabilities $\{p_e\}_{e\in N}$. It is easy to see that this setting is a special case of OSW-SO. Moreover, \wpa\ generalizes both online assortment optimization and stochastic rewards with patience. 

 Let \wpa-DO denote the special case of \wpa\ where all actions have deterministic outcomes. \wpa-DO itself is a special case of the whole-page optimization problem introduced by \cite{wholepage}.  %Given a set of actions $S$, the total reward obtained from resource $i$ in \wpa-DO is $F_i(S)=r_i\min\{c_i,\,\sum_{a\in S} a_i\}$.}

\subsubsection*{Greedy-like Algorithms (\gla s) for \wpa-DO.} 
 We consider a family of \gla s for \wpa-DO (in the adversarial model) that we collectively refer to as \emph{Balance}. This is a class of algorithms that favor actions generating the highest \emph{perturbed rewards}. Fix an arbitrary arrival $t\in [T]$. Let $\alg_t$ denote the set of actions selected prior to arrival $t$. Since actions have deterministic outcomes, we use the shorthand $a_i=1$ to indicate that action $a$ can consume a unit of resource $i$ and define $y_i(t)=\sum_{a\in \alg_t} a_i$ as the total capacity of resource $i$ consumed prior to arrival $t$. Each algorithm in this family is characterized by a deterministic perturbation function $u(c_i,y_i(t)):\mathbb{R}_{+}^2\to \mathbb{R}_{+}$.  %Let $t\in [T]$ be an arbitrary arrival and let $P_t$ denote the set of (deterministic) outcomes selected prior to arrival $t$. 
%\paragraph{Balance.} This is a class of algorithms that favor actions generating the highest \emph{perturbed rewards}. % from resources with a higher fraction of unused capacity at $t$. 
%Formally, 
The algorithm selects the following action at arrival $t$:
\[\text{Balance:}\quad \argmax_{a\in \ac_t}\, \sum_{i\in I\mid a_i=1} r_i\, u\left(c_i,y_i(t)\right),\] %\; r_i\;\min\{c_i,X_{i,t}\},\]
%\left(1-\exp\left({\frac{f_i(P_t)}{r_ic_i}-1}\right)\right)
where the perturbation function $u$ is typically monotone decreasing in $\tfrac{y_i(t)}{c_i}$, ensuring that a  larger fraction of used capacity results in a smaller perturbed reward. %We recover the Greedy algorithm when  $u\left(c_i,y_i(t),a\right)=\min\{c_i,X_{i,t}+1\}-\min\{c_i,X_{i,t}\},$ and 
In particular, setting 
\[u\left(c_i,y_i(t)\right)=1-\exp\left(\frac{y_{i}(t)}{c_i}-1\right),\]
corresponds to the original Balance algorithm introduced by \cite{msvv} for the Adwords problem, where the exponential function naturally emerged from the analysis aimed at achieving the best possible competitive ratio. Since then, Balance has since been generalized beyond Adwords. In particular, \cite{wholepage} showed that Balance, using the same perturbation function, is $(1-1/e)$-competitive for a generalization of \wpa-DO in the fractional setting, which corresponds to the large capacity regime in the limit as $c_{\min}= \min_{i\in I}c_i \to +\infty$.  %As discussed earlier, $(1-1/e)$ is the optimal asymptotic competitive ratio achievable by any online algorithm in the large capacity regime. %Typically, this asymptotic regime is equivalent to the fractional relaxation of the problem.
%To define the non-adaptive Balance algorithm for \wpa, let $\alg_t$ denote the set of actions selected prior to $t$. Non-adaptive Balance selects the following action $t$,
%\[\text{Non-adaptive Balance:}\quad  \argmax_{a\in \ac_t}\, \sum_{P_t\in \curn(\alg_t)}\gamma(P_t)\sum_{e\in %N(a)}p_e\sum_{i\in I\mid e_i=1} \,r_i\,u\left(c_i, y_i(t)\right).\]
 %replace the perturbed reward $r_i$ for each outcome that uses a unit of resource $i$ with the perturbed reward $r_i u\left(c_i,y_i(t)\right)$. Formally, %let $P_t$ denote the set of actions selected prior to $t$. 
%non-adaptive Balance selects the following action $t$,
%\[  \argmax_{a\in \ac_t}\, \sum_{P_t\in \curn(\alg_t)}\gamma(P_t)\sum_{e\in N(a)}p_e\sum_{i\in I\mid e_i=1} \,r_i\,u\left(c_i, y_i(t)\right).\]% (see Appendix \ref{appx:gla}). 

Non-adaptive Balance is the natural generalization in which the algorithm selects the action with the highest \emph{expected} perturbed reward (see Appendix \ref{appx:gla}).
Recall that, \cite{negin} showed that an \emph{adaptive} Balance algorithm is asymptotically $(1-1/e)$-competitive for online assortment optimization, which is a special case of \wpa. However, analyzing the performance of non-adaptive Balance remained an open problem\footnote{\cite{reuse} studied a different non-adaptive algorithm and established an asymptotic competitive ratio of $(1-1/e)$ for online assortment optimization with reusable resources.}. We establish the following new result for non-adaptive Balance.
\begin{theorem}\label{thm:ib}
	Any (non-adaptive) Balance algorithm has the same asymptotic competitive ratio for \wpa\ and \wpa\emph{-DO}. %with $u_i(x)=1-e^{x-1}$ is asymptotically $(1-1/e)$-competitive for \wpa\ in the large capacity regime.
\end{theorem}	
The proof is included in Appendix \ref{appx:gla}. As a direct corollary of this result, we conclude that non-adaptive Balance with $u(c,y)=1-\exp({\tfrac{y}{c}-1})$ is asymptotically $(1-1/e)$-competitive for online assortment optimization and stochastic rewards with patience.  Note that we leave aside the computational issues associated with the algorithm but it should be possible address them using the approximation methods discussed in  Appendix \ref{appx:sap}. %In Appendix \ref{appx:pg}, we show that without the large capacity assumption a natural generalization of a classical algorithm has competitive ratio at most 0.5 for \wpa-DO.

{\color{black}
\subsection{New Competitive Ratio Upper Bounds}\label{sec:upb}
%\paragraph{{\color{black}2. Balance-with-Rounding:}}
%Recent work has introduced innovative rounding techniques that have led to breakthroughs for several special cases of OSW, where no algorithm with competitive ratio exceeding 0.5 was previously known \cite{}. 
\cite{kapralov} showed that, unless NP$=$RP, no \emph{polynomial time} algorithm can achieve competitive ratio exceeding  $0.5$ for OSW (in the adversarial model). This result leaves open the possibility that one could surpass the Greedy algorithm using a %an oracle for NP-hard problems. In other words, there may exist a 
\emph{computationally inefficient} online algorithm. % with competitive ratio greater than  $0.5$. 
In this section, we apply our reduction technique in the reverse direction to demonstrate that for the generalized OSW formulation (OSM), even if a computationally inefficient algorithm exists that surpasses a competitive ratio of $0.5$, it cannot belong to the broad class of Greedy-like algorithms defined below. Throughout this section (only), we explicitly use the term OSM to distinguish the generalized model from the more restricted OSW formulation originally proposed by \citet{kapralov} (see Section \ref{sec:newmodels}).

\subsubsection*{Greedy-like Algorithms (\gla s) for OSM.}
Consider an instance $G$ of OSW-SO with objective functions $f$ and $F$. Without loss of generality, assume that $f$ (and hence $F$) decomposes as a sum of component functions, $f=\sum_{i\in I} f_i$, where each $f_i$ is arbitrary and need not be monotone or submodular. By Definition~\eqref{defF}, this induces a corresponding decomposition of $F$ as $F=\sum_{i\in I} F_i$, where each $F_i$ is also arbitrary.
 This decomposition allows us to formulate a broad class of \gla s that captures many classical algorithms in special cases of OSW/OSM where the objective function naturally decomposes across a set $I$ of resources. A \gla\ for OSM operates as follows: at each arrival $t$, given the set of actions $\alg_t$ selected prior to $t$, the algorithm chooses an action that maximizes the value of \emph{a sum of perturbed marginal value functions} $\tilde{F}_i$:
\[
\arg\max_{a \in \ac_t}\; \sum_{i\in I} \tilde{F}_i(a \mid \alg_t).
\]

The functions $\tilde{F}_i$ may be randomized and are not assumed to be monotone or submodular. However, to enable the application of the reduction technique, we impose the following structural restriction: for any action $a$ and set $S$, the value $\tilde{F}_i(a \mid S)$ may depend only on the marginal contribution $F_i(a \mid S)$ and on the current state of the allocation, as summarized by the values of functions $\{F_j\}_{j\in J}$ on subsets of $S$. Formally, for each $i\in I$, there exists an arbitrary (possibly randomized) function $u_i$, not necessarily efficiently computable, such that
\begin{equation}\nonumber
\tilde{F}_i(a \mid S)
=
u_i\!\left(\{F_j(X)\}_{j\in I, X \subseteq S}\right)\, F_i(a \mid S).
\end{equation}
 This restriction is crucial for applying the reduction technique: without it the invariance property may fail. Under this structure, we establish the following upper bound on the performance of \gla s for OSM.
\begin{theorem}\label{thm:upb}
	The family of \gla s defined above has competitive ratio at most  $0.5$ for  OSM.
\end{theorem}
The full proof and algorithmic details appear in Appendix~\ref{appx:upb}. The argument builds on a result of \cite{deb}, which shows that when $c_{\min}=1$, no non-adaptive algorithm can achieve a competitive ratio exceeding  $0.5$ for online matching with stochastic rewards, a special case of OSW-SO. By applying the reduction technique in the reverse direction, we then extend this limitation to obtain a universal upper bound of $0.5$ on the competitive ratio of all \gla s for OSM.  Notably, this proof does not extend to the standard OSW setting of \cite{kapralov} because the reduction technique requires the flexibility of a general action set; in contrast, the action space in the OSW formulation is restricted to edges in a bipartite matching graph between resources and arrivals.

As discussed in Appendix \ref{appx:pg1}, this family of algorithms includes the classic Perturbed Greedy algorithms \citep{kvv,goel}, that achieve the optimal competitive ratio for OBM and several of its extensions %\footnote{For additional background, see \citep{devanur,stochrew,albers,vazirani,unknown,jin}.}, 
without requiring a large inventory assumption. While it is straightforward to show that a natural extension of Perturbed Greedy cannot surpass a competitive ratio of  $0.5$ for \wpa-DO, Theorem~\ref{thm:upb} does not directly extend to this setting (see Appendix~\ref{appx:pg2} for more details). Consequently, determining the optimal competitive ratio attainable by the class of \gla s defined above for \wpa-DO remains an open problem.

}
\section{Non-Monotone Functions}\label{sec:onsw}\label{sec:cassam}

We now relax the monotonicity assumption on $f$ and $F$ and study online non-monotone submodular welfare with stochastic outcomes (ONSW-SO), where $f$ and $F$ are non-negative submodular but not necessarily monotone\footnote{Allowing negative values precludes any non-trivial competitive ratio for non-anticipatory algorithms \citep[Section~6.4]{asadpour2016maximizing}.}. We focus on adversarial arrivals to highlight the new challenges that arise for non-adaptive algorithms.

\paragraph{Previous Work.}
To our knowledge, no competitive-ratio guarantees were previously known for ONSW-SO. For ONSW with adversarial arrivals, \citet{nuti} showed that every deterministic algorithm can have arbitrarily poor performance, necessitating randomization, and proposed a randomized algorithm with competitive ratio $0.25$, which is optimal by an upper bound of \citet{buchbinder2019online}. Earlier, \citet{harshaw2022power} analyzed a randomized Greedy variant achieving competitive ratio $3-2\sqrt{2}\approx0.17$. For a discussion of the literature on ONSW with stochastic arrivals, we refer the reader to \cite{nuti}.

\paragraph{Our Contributions.}
We introduce an adaptive randomized algorithm for ONSW-SO, called \emph{Cascade Sampling} (Algorithm~\ref{cassam}), and show that it is $0.25$-competitive in the adversarial model. Since ONSW-SO generalizes ONSW, this guarantee is optimal.

Motivated by the optimality of non-adaptive algorithms for OSOW-SO, we investigate whether similar results hold for ONSW-SO. Two obstacles arise. First, a non-adaptive version of Cascade Sampling appears incompatible with our reduction due to a failure of the invariance property. Second, for non-monotone $f$, the objective function $F$ in the constructed instance $\gc$ need not remain submodular. These issues are discussed in Appendix~\ref{appx:onswnonadap}. We also present a simple non-adaptive algorithm satisfying invariance (Algorithm~\ref{sinsam}), though determining the optimal non-adaptive algorithm for ONSW-SO remains open.

\begin{remark}
	We assume the availability of a null action $0_t\in\ac_t$ at each arrival, disjoint action sets across arrivals, and disjoint outcome sets across actions. These are natural extensions of standard assumptions in the literature \citep{nuti}, though they are not without loss of generality for non-monotone objective functions. 
\end{remark}

\subsubsection*{Adaptive Algorithm for ONSW-SO.} %\label{sec:cassam}
Recall that, for ONSW, obtaining a non-trivial competitive ratio guarantee requires the use of randomized algorithms. 
\begin{algorithm}
	\SetAlgoNoLine
	%	\KwIn{Node $\alpha$'s ID ($ID_{\alpha}$), and node $\alpha$'s
		%		neighbors' IDs within two communication hops.}
	%	\KwOut{The frequency number ($FreNum_{\alpha}$) node $\alpha$ gets assigned.}
	Set $P_1=\emptyset$ \;
	\For{every arrival $t\in[T]$}{
%		Initialize $\salg_t=0_t$\;
		Compute marginal values $v_{a}=\sum_{e\in N(a)} p_e\, f(e\mid P_t)\quad \forall a\in \ac_t$\;
		Sort and order actions with non-negative values, $v_{a_1}\geq v_{a_2}\geq \cdots\geq v_{a_{m_t}}\geq 0$ \;
		%For $\ell\leq m_t$, 
		Choose action $a_{\ell}$ w.p.\ $\frac{1}{2^{\ell}}$ and choose the null action w.p.\ $1-\sum_{\ell\in[m_t]}\frac{1}{2^\ell}$\;
%		Set $\salg_t=\ell$ with probability $\frac{1}{2^{\ell}}$, where $\ell\in[m_t]$\;			
%		Choose action $\salg_t$\; \tcc{$\salg_t$ is either $0_t$ or one of the $m_t$ actions with non-negative value} 
		Observe realized outcome $e_t$ and set $P_{t+1}=P_{t}\cup \{e_t\}$\;
	}
	\caption{Cascade Sampling (Adaptive)}
	\label{cassam}
\end{algorithm}

At each arrival, Cascade Sampling (Algorithm \ref{cassam}) considers the set of all actions with non-negative marginal values and assigns a non-zero probability of selection to each action in this set. Specifically, if there are $m_t$ ($\geq 0$) actions with non-negative marginal value at arrival $t$, then the action with the $\ell$-th highest marginal value is selected w.p.\ $\frac{1}{2^{\ell}}$. The algorithm chooses the null action w.p.\ $1-\sum_{\ell\in[m_t]}\frac{1}{2^{\ell}}$. Note that $\sum_{\ell\in[m_t]}\frac{1}{2^{\ell}}<1$ for any finite $m_t$. In the absence of stochastic outcomes, Cascade Sampling reduces to the algorithm in \cite{nuti}.

%\begin{theorem}
%	For ONSW with \pas, there exists an adaptive randomized online algorithm (Algorithm \ref{}) with competitive ratio $0.25$ in the adversarial arrival model. No online algorithm has competitive ratio better than $0.25$.
%\end{theorem}
\begin{theorem}\label{nonmadapt}
	Cascade Sampling (Algorithm \ref{cassam}) is $0.25$-competitive for ONSW-SO in the adversarial model.
\end{theorem}
The proof is provided in Appendix~\ref{appx:cassam} and follows the same high-level structure as the proof of Theorem~\ref{advmain}.  The underlying ideas are straightforward generalizations of the analysis in \cite{nuti}; however, the semantics and presentation differ in important ways. In particular, our proof adheres to the unified template of Theorem~\ref{advmain}, which allows us to integrate the new analysis into a single, coherent framework that applies uniformly to both monotone and non-monotone objectives, as well as to submodular and submodular-order functions. Recall that that no online algorithm for ONSW-SO can have competitive ratio higher than $0.25$.

\begin{remark}\label{redfail}
Our reduction technique does not apply to Cascade Sampling due to a potential failure of the invariance property. Specifically, a non-adaptive version of the algorithm may output a solution with a higher objective value on the constructed instance $\gc$ than on the original instance $G$. To illustrate this, consider an instance $G$ of OSW-SO (a special case of ONSW-SO) and let $\alg_t$ denote the set of actions chosen prior to time $t$ in a non-adaptive version of Cascade Sampling. At time $t$, the algorithm considers all the actions in the set $\{a \mid F(a \mid \alg_t) > 0, a \in \ac_t\}$ and selects each action with a non-zero probability. Recall that, in our construction of the instance $\gc$, we add a new action $\h{a}_t$ at time $t$ to capture the choice of the adaptive offline algorithm. If $F(\h{a}_t \mid \alg_t) > 0$, then $\h{a}_t$ is chosen by Cascade Sampling with non-zero probability, resulting in a higher net gain in objective value on the instance $\gc$ compared to the instance $G$.
\end{remark}

%\end{remark}	

%The following theorem competitive ratio of the algorithm depends on the value of $p$. 

		\section{Conclusion}\label{sec:conclusion}
In recent years, there has been a surge in research on online bipartite resource allocation problems that are not captured by OSW. Despite this, the \agrd\ algorithm remains  $0.5$-competitive in most of these settings. To unify these new results and develop a standard model that generalizes OSW, we introduce the OSOW-SO and OSW-SO problems. We demonstrate that non-adaptive Greedy achieves the best possible (or best known) competitive guarantees for these problems across various arrival models. Prior to our work, no results were known for non-adaptive algorithms in many of the settings we generalize. Our findings also show that, in a general setting, adaptive algorithms do not outperform non-adaptive algorithms in terms of competitive ratio (unless NP$=$RP). Additionally, we extend our results and techniques to other algorithms and present improved lower and upper bounds on the competitive ratio of Greedy-like algorithms. Finally, we turn our attention to the ONSW setting, where the objective is a non-monotone function, and discuss the challenges in generalizing our techniques. We show that an adaptive algorithm achieves the optimal competitive ratio for ONSW-SO and leave the problem of finding the optimal non-adaptive algorithm as an open question.

	\section*{Acknowledgements}
We thank the anonymous referees for their extremely insightful feedback. We also thank Ali Aouad, Omar El Housni, Adam Elmachtoub, Daniel Freund, Will Ma, Calum MacRury, Daniela Saban, and Ziv Scully for valuable discussions related to this paper.

{\small
	\bibliographystyle{informs2014.bst}
\bibliography{new}
}
\begin{APPENDICES}
		\section{Missing Proofs from Section \ref{sec:newmodels}}\label{appx:newmodels}

	\begin{repeatlemma}[Lemma \ref{concdom}.]
		For every instance $G$ of OSOW-SO (and OSW-SO) we have $\optc(G)\geq \opt(G).$
	\end{repeatlemma}
	\begin{proof}{Proof.} 	
		To show that $\optc(G)\geq \opt(G)$, it suffices to show that there exists a vector $Y=(y_a)_{a\in \curac}\in \Delta(\curac)$, and a distribution $\alpha$ over subsets of $\cal{X}$, such that,
		\[\sum_{X\in \cal{X}}\alpha(X)f(X)=\opt(G),\quad \sum_{X\in \cal{X}}\alpha(X)=1,\quad \text{ and } \sum_{X\in \cal{X}\mid X \ni e} \alpha(X)=p_e\,y_a\,\, \forall e\in N(a),\, a\in \curac.\]
		
		Let $S$ denote the (random) set of actions chosen by \opt\ and let $Z\in \cal{X}$ denote the set of realized outcomes of $S$. Note that $Z\subseteq N(S)$. We define $y_a$ as the probability that action $a\in S$. Then, for each $e\in N(a)$, the probability that $e\in Z$ is given by $y_a\, p_e$. This follows from the fact that $\opt$ is non-anticipatory, meaning it chooses action $a$ with probability (w.p.) $y_a$ without knowing the outcome, and the outcome $e$ occurs w.p.\ $p_e$ independent of other events. 
		
		Now, define $\alpha(X)$ as the probability that $Z=X$.  With this definition, we have,
		\[\sum_{X\in \cal{X}}\alpha(X)=1\quad \text{ and }\quad \sum_{X\in \cal{X}\mid X \ni e}\alpha(X)=p_ey_a\,\, \forall e\in N(a),\, a\in \curac,\]
		which satisfies the desired conditions. Finally, since \opt\ has total reward $f(X)$ on sample paths where $Z=X$, we conclude that, $\opt(G)=\sum_{X\in \cal{X}} \alpha(X)\, f(X)$. 
		
		\hfill\Halmos\end{proof}
{\color{black}
		\begin{lemma}\label{ftoF2}%[Lemma \ref{ftoF2}.]
	If $f$ is monotone and has an arrival-consistent submodular order then, for every $P\in \curn$ and non-negative real value $\lambda$, the function  $\lambda\;f(N(\cdot)\cap P)$ on ground set $\curac$ is monotone with an arrival-consistent submodular order.
\end{lemma}
\begin{proof}{Proof.} By Remark 1 in \cite{SOF}, multiplying a monotone function with a submodular-order by a non-negative scalar preserves both monotonicity and the submodular-order property. Therefore, it suffices to focus on the function $f(N(\cdot)\cap P)$. Let $\pi$ denote the arrival-consistent submodular order of $f$.
	%Now, recall the definition of function $F$,
	%\[F(S)=\sum_{P\in \curn} \gamma(P)\, f(N(S)\cap P)\qquad \forall S\subseteq \curac.\]	
	Fix an arbitrary $P\in \curn$ and let $\psi(S)=N(S)\cap P$. Since $\psi$ is an injective mapping from $\curac$ to $N$, there is a unique arrival-consistent order $\pi_{\curac}$ over $\curac$ such that for any two actions $a,a'$, we have  $a\succ_{\pi_{\curac}}a'$ if and only if $N(a)\cap P\succ_\pi N(a')\cap P$. More strongly, for any two action sets $B$ and $C$, we have $C\succ_{\pi_{\curac}} B$ if and only if $\psi(C)\succ_{\pi} \psi(B)$.
	
	Now, $f(\psi(\cdot))$ is monotone because $\psi(B)\subseteq \psi(B^+)$ for all $B\subseteq B^+\subseteq \curac$. And $\pi_{\curac}$ is an arrival-consistent submodular order for $f(\psi(\cdot))$ because for all $\pi_{\curac}$-nested sets $B\subseteq B^+$ and all sets $C\succ_{\pi_{\curac}} B^+$, we have that, $\psi(B)$ and $\psi(B^+)$ are $\pi$-nested and $\psi(C)\succ_{\pi} \psi(B^+)$. Thus,
	\[f(\psi(B^+\cup C))-f(\psi(B^+))=f(\psi(C)\mid \psi(B^+))\;\leq\; f(\psi(C)\mid \psi(B))=f(\psi(B\cup C))-f(\psi(B)),\]
	here the inequality follows from the submodular order property of $f$. 
	
	\hfill\Halmos		\end{proof}}
\section{Special Cases of OSOW-SO}\label{appx:prelim}
We begin by proving that the OSW-SO formulation captures all of the settings we discussed in Section \ref{sec:prelim}, with the exception of OBM with reusable resources, which, as shown in Appendix \ref{appx:reuse}, is a special case of OSOW. 

To establish this, we define an equivalent instance of OSW-SO for each setting. We start by describing the common features shared by these instances. Recall that all the settings we are considering involve a set of resources $I$. We focus on instances of OSW-SO where the objective function $f$ is a sum of monotone submodular functions $\{f_i\}_{i\in I}$. Each function $f_i$ is defined using a subset $N_i\subseteq N$ of outcomes, with only elements in $N_i$ contributing non-zero values to $f_i$. The sets $N_i$ may overlap for different resources. We now define the specific instance for each setting. 

\emph{Online matching with stochastic rewards:} For each arrival $t\in[T]$, the action set $\ac_t$ is the set of all edges incident to $t$. Action $(i,t)$ has two possible outcomes: success or failure. The outcome of success is represented by $e_{i,t,s}$, which is included in the set $N_i$. Given a subset of outcomes $X\subseteq N$, the reward from resource $i$ is captured by the monotone submodular function $f_i(X)=r_i\,\min\{c_i, |X\cap N_i|\}$. The overall objective function, $f=\sum_{i\in I} f_i$, is also a monotone submodular function.

\emph{Stochastic rewards with patience:} In this setting, a feasible action at arrival $t$ is an ordered sequence of resources to attempt matching with arrival $t$. Let $\ac_t$ denote the set of all feasible actions. Each action $a\in \ac_t$ has a vector outcome $e_{a,t}=(e_{a,i,t})_{i\in I}$, where $e_{a,i,t}=1$ if resource $i$ is successfully matched to $t$ and $e_{a,i,t}=0$ otherwise. If $e_{a,i,t}=1$, then we include outcome $e_{a,t}$ in the set $N_i$. The reward functions $f_i$ and objective function $f$ are the same as in the previous case. 

\emph{Online assortment optimization:} The action set $\ac_t$ consists of the assortments that can be shown to arrival $t\in [T]$. Similar to the previous case, each action $a\in \ac_t$ has a vector outcome $e_{a,t}=(e_{a,i,t})_{i\in I}$, where $e_{a,i,t}=1$ if arrival $t$ chooses resource $i$ and $e_{a,i,t}=0$ otherwise. If $e_{a,i,t}=1$, then we include outcome $e_{a,t}$ in the set $N_i$. The reward functions $f_i$ and objective function $f$ are the same as in the previous two cases.

\emph{Two-sided assortment optimization:} The set of actions and outcomes is the same as online assortment optimization. However, the reward function $f_i(X)$ now represents the probability that resource $i$ chooses at least one element from the set $X$. As noted by \cite{aouad2020online}, this is a monotone submodular function for a large variety of discrete choice models.

\begin{remark}[Scalar outcomes do not suffice]\label{scalar}
	Recall that one way to represent each outcome in assortment optimization is using a binary vector $e=(e_i)_{i\in I}$ where $e_i=1$ if $i$ is selected by the customer and $e_i=0$ otherwise. In \cite{asadpour2016maximizing}, every outcome is a scalar and we sketch an example to show that a monotone function $f$ on scalar outcomes cannot capture the assortment problem. Consider two items $i$ and $j$, with unit capacity and per unit rewards $r_i=1$ and $r_j=2$. Consider two arrivals with feasible assortments (actions) $U_1=\{j\}$ at arrival 1 and $U_2=\{i,j\}$ at arrival 2. The assortment (action) $U_2$ has three possible outcomes and we can capture them using a random variable $u_2\in[0,1]$ such that, $u_2=\delta_{2,i}$ when item $i$ is chosen, $u_2=\delta_{2,j}$ when item $j$ is chosen, and $u_2=0$ when the outside option is selected. Similarly, let $u_1=\delta_{1,j}$ when item $j$ is selected and $u_1=0$ otherwise. Consider a monotone and DR submodular objective function $f$ on $[0,1]^2$. We have that $f(\{0,\delta_{2,i}\})=r_i$ and $f(\{0,\delta_{2,j}\})=r_j$. Since $r_j>r_i$, from the monotonicity of $f$ we have that $\delta_{2,j}>\delta_{2,i}$. Further, $f(\{\delta_{1,j},\delta_{2,j}\})=r_j$ (item $j$ has unit capacity) and $f(\{\delta_{1,j},\delta_{2,i}\})=r_j+r_i$. But this violates monotonicity of $f$ because $f(\{\delta_{1,j},\delta_{2,i}\})>f(\{\delta_{1,j},\delta_{2,j}\})$ even though $\delta_{2,j}>\delta_{2,i}$.
	
	\end{remark}

\subsection{Online Matching with Reusable Resources}\label{appx:reuse}
As mentioned in Section \ref{sec:prelim}, a reusable resource is used/rented by an arrival for some time and then returned back to the system. A reward is generated each time the resource is rented to a new arrival. Specifically, suppose that arrival $t\in[T]$ arrives at \emph{time} $a(t)\in[0,1]$. At each moment in $[0,1]$, every resource is either \emph{available} or \emph{unavailable}, with all resources initially available at time 0. If resource $i$ is available at time $a(t)$, matching it to arrival $t$ generates a reward $r_i$ and $i$ becomes unavailable for a fixed duration $d_i$. Thus, $i$ is unavailable during the interval $(a(t),a(t)+d_i)$. Matching an arrival to an unavailable resource has no effect and generates no reward. The objective is to maximize the total reward. Notably, the classic online bipartite matching is a special case of this setting where $d_i\to +\infty\,\, \forall i\in I$.
	
  Perhaps surprisingly, we find that OSW cannot capture this problem, even for instances with a \emph{single} reusable resource.
	
\begin{example}[OSW Fails to Capture Reusability] 
	Consider an instance with a single reusable resource that is used for a duration of $1.5$ time units after each match and during this time the resource is unavailable. %A resource %returns from use, it 
	%can be matched to a (new) arrival immediately after it returns from use. 
	Arrivals occur at times $1, 2,$ and  $3$. If arrival 1 is matched, the resource is used from time 1 to 2.5 and returns prior to arrival 3's arrival. Arrivals 1 and 3 can both be matched to the resource and arrival 2 can be matched only if the other two arrivals are unmatched. Now, consider a function $f$ such that $f(\{1,2\})=f(\{2,3\})=f(1)=f(2)=f(3)=1$ and $f(\{1,3\})=f(\{1,2,3\})=2$. This function captures the total number of matches in every possible allocation of arrivals to the resource. %therefore the adenote the total number of matches 
	Clearly, $f$ is not a submodular function ($f(1\mid\{2,3\})>f(1\mid\{2\})$).
\end{example}

% weaker assumption that the arrival order is a SO. 
We show that OSOW captures OBM with reusable resources (OBMR). Recall that OBMR, a unit of resource $i$ that is matched to arrival $t$ at time $a(t)$ is rented/used for a fixed duration $d_i$ and returned at time $a(t)+d_i$. Given a set of arrivals $S$ that have an edge to $i$, consider the process where we start with the first arrival in $S$ and match $i$ to every arrival in $S$ where $i$ is available. We refer to this as the \emph{matching process} on $S$ \footnote{This is the deterministic counterpart of the $(F,\sigma)$-random process introduced in \cite{reuse}.}. Let $n_i(S)$ denote the total number of arrivals matched to $i$ in this process. Observe that $\sum_{i\in I} n_i(S_i)$ is the total reward of an allocation $\{S_i\}_{i\in I}$ of arrivals to resources in OBMR. It is not hard to see that $n_i$ is a monotone function. In fact, this is a direct corollary of Lemma 5 in \cite{reuse}. We show that the arrival order is a submodular order for $n_i$. %Observe that the arrival order is a total order when the ground set is $T$ (instead of $N$).

\begin{lemma}\label{reuse}
	In OBMR, for every resource $i\in I$, the arrival order is a submodular order for $n_i$.
	%	Non-adaptive version of the second problem is an instance of OSOW.
\end{lemma}
\begin{proof}{Proof.}
	Let $\pi$ denote the arrival order over the ground set of arrivals. Consider $\pi$-nested sets $B\subseteq A\subseteq [T]$ and a set $C$ that succeeds $A$ in the arrival order. For $S\in \{A,B\}$, let $t_S$ be the last arrival in $S$ that is matched to $i$ in the matching process on $S\cup C$. Let $t_{C\mid S}$ denote the first arrival in $C$ that is matched in the process on $S\cup C$ and let $t_{C\mid S}=T+1$ if no such arrival exists. Observe that $n_i(C\mid S)=n_i(S\cup C)-n_i(S)$ is the number of arrivals in $C$ that are matched to $i$ in the matching process on $S\cup C$. Due to the time nesting of $A$ and $B$, we have, $t_A\geq t_B$. Thus, 
	$t_{C\mid A}\geq t_{C\mid B}.$  Now, $n_i(C\mid B)\geq n_i(C\mid A)$ follows from the monotonicity of $n_i$. 
	\hfill\Halmos\end{proof}
%Can show using a separate list of durations for $C$ that $r$ is SO in expectation when durations are iid stochastic. Monotonicity follows from \cite{reuse}. Note that not SO on every sample path, when sample path is a fixed sample for each arrival, nor when sample path is a single list. Also, one could consider $r_i-h_i$ where $h_i$ is supermodular and $r-h\geq 0$ as a cost due to congestion. The following is not SO: durations are arbitrary non-stationary but revealed on arrival and there is free disposal -- when duration is large to current allocation it can be removed and replaced by a fresh arrival that has small duration but same reward. In this case if we look at $B$ and $B+j$, $j$ might have a super small duration that gives $B+j$ an edge.
\subsubsection*{Beyond Deterministic Reusability.} Prior work also considers a more general version of reusability where the usage durations are stochastic. Specifically, when a resource $i$ is rented to any arrival $t$, it is used for a random duration $d_{t}$ sampled independently from a distribution $D_i$. \cite{ms} showed that an adaptive Greedy algorithm is  $0.5$-competitive in this setting. 

It is possible to extend Lemma \ref{reuse} to show that the arrival order is a submodular order for the expected reward function $E_{D_i}[n_i(\cdot)],$ where the expectation is taken over the random usage durations. Then, using our result for OSOW-SO, we can show that Greedy is  $0.5$-competitive for settings with stochastic usage durations against an offline benchmark that is \emph{oblivious} to the realizations of usage durations. Extending our techniques to compare against the duration adaptive offline benchmark appears challenging because the arrival order is not a submodular order for every sample path of usage durations.

Nonetheless, OSOW can capture generalizations of OBMR. For example, given a monotone submodular function $h_i$, the function $f_i=n_i+h_i$ is monotone and the arrival order is a submodular order. The literature on online allocation of reusable resources also includes several variations of the model that we consider here \citep{ dickerson, RST18, owen, levirad, feng2, rodilitz, baek, baek2, orestis, zeyu, wangchi, radnew, sumida2024dynamic}. It would be interesting to see if OSOW captures these settings.

\subsection{Stronger Benchmark for Stochastic Rewards with Patience}\label{appx:patience}
In the setting of stochastic rewards with patience, the offline benchmark described in Section~\ref{sec:offline} may visit arrivals in any (adaptive) order, but it is required to complete all match attempts for a given arrival before moving to the next one. \cite{borodin2022prophet} consider a stronger benchmark for this problem, which can probe edges in any adaptive order. For example, their benchmark may attempt to match one arrival, move to attempt matching another arrival, and then return to the first arrival to try a different resource. We show that the benchmark 
$\optc$ also serves as an upper bound for this stronger benchmark. Consequently, all our competitive ratio results for the stochastic rewards with patience setting remain valid against the strongest benchmark in the existing literature. Let $\opt^+(G)$ denote the value of the benchmark of \cite{borodin2022prophet} on instance $G$.

	\begin{lemma}
	For every instance $G$ of stochastic rewards with patience we have $\optc(G)\geq \opt^+(G).$
\end{lemma}
\begin{proof}{Proof.} 	
Consider an arbitrary arrival \(t\), and let \(\omega_{-t}\) denote the random realizations of all edges not incident to \(t\). Conditioned on \(\omega_{-t}\), let \(a(\omega_{-t})\) denote the action selected by \(\opt^{+}\) at arrival \(t\) in the event that all edges incident to \(t\) fail. This action is \emph{maximal} in the sense that it fully specifies the sequence of attempts made to match arrival \(t\). More precisely, since \(\opt^{+}\) is non-anticipatory, conditioned on \(\omega_{-t}\) it attempts to match arrival \(t\)—possibly interleaving these attempts with attempts for other arrivals—according to the sequence of edges encoded by \(a(\omega_{-t})\), until either a successful match occurs or arrival \(t\) exhausts its patience.

Let \(\sigma_{\omega_{-t}}\) denote the induced ordering of resources attempted for arrival \(t\) under action \(a(\omega_{-t})\). Conditioned on \(\omega_{-t}\) and on arrival \(t\) having sufficient patience, the probability that the realized outcome of \(a(\omega_{-t})\) is a successful match to resource \(i\) (i.e., \(e_{a(\omega_{-t}),i,t}=1\)) is
\begin{equation}\label{prob}
p_{i,t}\prod_{j \prec_{\sigma_{\omega_{-t}}} i} (1-p_{j,t}).
\end{equation}

An analogous characterization applies to any online algorithm. Consequently, without loss of generality, we interpret each action in \(\mathcal{A}_t\) as a \emph{maximal} action at arrival \(t\). Under this interpretation, if \(\opt^{+}\) selects an action \(a \in \mathcal{A}_t\) with probability \(y_a\), then the probability that a particular outcome \(e \in N(a)\) is realized is \(y_a p_e\). Here, \(p_e\) denotes the conditional probability that action \(a\) results in outcome \(e\), given by the product of the probability that arrival \(t\) has sufficient patience and the corresponding expression in~\eqref{prob}. Importantly, the coefficient \(y_a\) captures the randomness induced by \(\omega_{-t}\), i.e., the realizations of all edges not incident to \(t\).

The remainder of the proof follows the same structure as the proof of Lemma~\ref{concdom}, which we reproduce here for completeness. It suffices to show that there exist a vector \(Y = (y_a)_{a \in \mathcal{A}} \in \Delta(\mathcal{A})\) and a distribution \(\alpha\) over subsets of \(\mathcal{X}\) such that
\[
\sum_{X \in \mathcal{X}} \alpha(X) f(X) = \opt^{+}(G), \qquad 
\sum_{X \in \mathcal{X}} \alpha(X) = 1, \qquad
\sum_{X \in \mathcal{X} : e \in X} \alpha(X) = p_e y_a 
\quad \forall\, e \in N(a),\, a \in \mathcal{A}.
\]

Let \(S\) denote the (random) set of actions chosen by \(\opt^{+}\), and let \(Z \in \mathcal{X}\) denote the corresponding set of realized outcomes; note that \(Z \subseteq N(S)\). Define \(y_a\) as the probability that action \(a\) is selected, i.e., \(y_a = \Pr[a \in S]\). As observed above, for each outcome \(e \in N(a)\), the probability that \(e \in Z\) is \(y_a p_e\).

Finally, define \(\alpha(X)\) as the probability that \(Z = X\). By construction, we have
\[
\sum_{X \in \mathcal{X}} \alpha(X) = 1 
\quad \text{and} \quad 
\sum_{X \in \mathcal{X} : e \in X} \alpha(X) = p_e y_a 
\quad \forall\, e \in N(a),\, a \in \mathcal{A},
\]
which satisfies the desired conditions. Since \(\opt^{+}\) receives total reward \(f(X)\) on sample paths where \(Z = X\), we conclude that
\[
\opt^{+}(G) = \sum_{X \in \mathcal{X}} \alpha(X)\, f(X).
\]

	\hfill\Halmos\end{proof}
{\color{black}
\section{Missing Proofs from Section \ref{sec:osow}}\label{appx:osow}

\begin{repeatlemma}[Lemma \ref{SOineq}.]
		Consider a collection of monotone functions \(\{F_1,\cdots,F_u\}\) defined on the ground set \(\mathcal{A}\). Suppose that each function admits an arrival-consistent submodular order (not necessarily the same order across functions). 	 Then, the function $F \coloneqq \sum_{j\in[u]} F_j$ satisfies the following inequality:
	\[
	F(\opt_{T+1} \cup \alg_{T+1})
	\;\le\;
	F\!\left(\alg_{T+1}\right)
	\;+\;
	\sum_{t\in[T]}
	F\!\left(\{o_t\}\setminus\{\salg_t\}
	\;\middle|\;
	\cup_{\tau\in[t-1]} \{\salg_\tau\}\right).
	\]
\end{repeatlemma}
\begin{proof} {Proof.} The main ingredient in this proof is Lemma \ref{singfunc} (shown later), which essentially follows from Corollary 2 in \cite{SOF}. For completeness, the lemma is stated and proved after the conclusion of this proof.
	Observe that for all $A,B\subseteq \curac,$
	\[F(A\mid B)=\sum_{j\in[u]} F_j(A\mid B).\]
	Applying Lemma \ref{singfunc} to each function $F_j$, we obtain,
	\begin{eqnarray*}
		F(\opt_{T+1} \cup \alg_{T+1})&=&\sum_{j\in[u]} F_j(\opt_{T+1} \cup \alg_{T+1}),\\
		&\leq & \sum_{j\in[u]} \left[F_j\!\left(\alg_{T+1}\right)
		\;+\;
		\sum_{t\in[T]}
		F_j\!\left(\{o_t\}\setminus\{\salg_t\}
		\;\middle|\;
		\cup_{\tau\in[t-1]} \{\salg_\tau\}\right)\right],\\
		&=& F\!\left(\alg_{T+1}\right)
		\;+\;
		\sum_{t\in[T]}
		F\!\left(\{o_t\}\setminus\{\salg_t\}
		\;\middle|\;
		\cup_{\tau\in[t-1]} \{\salg_\tau\}\right).
	\end{eqnarray*}
	\hfill\Halmos	\end{proof}
\begin{lemma}\label{singfunc}
	Given a monotone submodular function $F$ with an arrival-consistent submodular $\pi_{\curac}$ on ground set $\curac$, and subsets $\opt_{T+1} = \{o_1,\ldots,o_T\}$ and  
	$\alg_{T+1} = \{\salg_1,\ldots,\salg_T\}$ of $\curac$ such that $\{o_t,\salg_t\}\subseteq \ac_t\,\, \forall t\in [T]$,  we have,
	\[
	F(\opt_{T+1} \cup \alg_{T+1})
	\;\le\;
	F\!\left(\alg_{T+1}\right)
	\;+\;
	\sum_{t\in[T]}
	F\!\left(\{o_t\}\setminus\{\salg_t\}
	\;\middle|\;
	\cup_{\tau\in[t-1]} \{\salg_\tau\}\right).
	\]
\end{lemma}
\begin{proof} {Proof.}
	Let $\opt_{t+1}=\cup_{\tau\in[t]} \{o_\tau\}$ and $\alg_{t+1}=\cup_{\tau\in[t]} \{\salg_\tau\}$.  To prove the lemma we first establish the following key inequality for all $t\in[T]$.
	\begin{eqnarray}
		&&F\!\left(\{o_{t},\salg_{t},\cdots,o_{T},\salg_{T}\}	\;\middle|\;
		\alg_{t} \right)\nonumber\\
		&&\leq F(\{\salg_{t}\}\mid \alg_{t})\;+\; F(\{o_{t}\}\backslash \{\salg_{t}\}\mid \alg_{t})\;+\; F\!\left(\{o_{t+1},\salg_{t+1},\cdots,o_{T},\salg_{T}\}	\;\middle|\;
		\alg_{t+1} \right).\label{genud2}
	\end{eqnarray} 
	Then, using induction we establish that,
	\begin{eqnarray}
		&&F(\opt_{T+1} \cup \alg_{T+1})\nonumber\\
		&&\;\le\;
		F\!\left(\alg_{t+1}\right)
		\;+\; F\!\left(\{o_{t+1},\salg_{t+1},\cdots,o_{T},\salg_{T}\}	\;\middle|\;
		\alg_{t+1} \right) \;+\;
		\sum_{\tau\in[t]}
		F\!\left(\{o_\tau\}\setminus\{\salg_\tau\}
		\;\middle|\;
		\alg_{\tau}\right)
		\quad \forall t\in[T].\label{geud}
	\end{eqnarray}
	The main claim follows from \eqref{geud} when $t=T$. 
	
	Before proceeding with the proof we observe that the action $o_t$ could precede, succeed, or equal $\salg_t$. To avoid a case-wise approach we define sets $\{o^-_t,\salg_t,o^+_t\}$ where 
	\[o^-_t=\begin{cases}
		&o_t \quad \text{if } o_t\prec_{\pi_{\curac}} \salg_t,\\
		&\emptyset \quad \text{otherwise},
	\end{cases}\quad \text{and }\quad o^+_t=\begin{cases}
		&o_t \quad \text{if } \salg_t\prec_{\pi_{\curac}} o_t,\\
		&\emptyset \quad \text{otherwise}.
	\end{cases}\]
	Observe that when $o^-_t=o_t$, we have $o^-_t\prec_{\pi_{\curac}} \salg_t \prec_{\pi_{\curac}} \{o_{t+1},\salg_{t+1},\cdots,o_{T},\salg_{T}\}$. Similarly, when $o^+_t=o_t$, we have $ \salg_t\prec_{\pi_{\curac}}o^+_t\prec_{\pi_{\curac}} \{o_{t+1},\salg_{t+1},\cdots,o_{T},\salg_{T}\}$. Note that $o^-_t=o^+_t=\emptyset$ when $o_t=\salg_t$. Recall that, $F(\{\emptyset\}\mid S)=0$ for all $S\subseteq \curac$.

	We now show inequality \eqref{genud2} for all $t\in[T]$.
	\begin{eqnarray*}
		F\!\left(\{o_{t},\salg_{t},\cdots,o_{T},\salg_{T}\}	\;\middle|\;
		\alg_{t} \right) &=& F\!\left(\{o^-_{t}\}\mid \alg_{t}\right)+F\left(\{\salg_{t},o^+_{t}\}\cup\{o_{t+1},\salg_{t+1},\cdots,o_{T},\salg_{T}\}	\;\middle|\;
		\alg_{t}\cup\{o^-_{t}\} \right),\\
		&\leq&F\!\left(\{o^-_{t}\}\mid \alg_{t}\right)+F\left(\{\salg_{t},o^+_{t}\}\cup\{o_{t+1},\salg_{t+1},\cdots,o_{T},\salg_{T}\}	\;\middle|\;
		\alg_{t} \right),\\
		&= &F\!\left(\{o^-_{t}\}\mid \alg_{t}\right)+F\left(\{\salg_{t}\}	\;\middle|\;
		\alg_{t} \right)+F\left(\{o^+_{t}\}	\;\middle|\;
		\alg_{t}\cup\{\salg_{t}\} \right)\\
		&&+F\left(\{o_{t+1},\salg_{t+1},\cdots,o_{T},\salg_{T}\}	\;\middle|\;
		\alg_{t}\cup \{\salg_{t},o^+_{t}\} \right),\\ 
		&\leq &F\!\left(\{o^-_{t}\}\mid \alg_{t}\right)+F\left(\{\salg_{t}\}	\;\middle|\;
		\alg_{t} \right)+F\left(\{o^+_{t}\}	\;\middle|\;
		\alg_{t} \right)\\
		&&+F\left(\{o_{t+1},\salg_{t+1},\cdots,o_{T},\salg_{T}\}	\;\middle|\;
		\alg_{t+1} \right),\\ 
		&= &F\left(\{\salg_{t}\}	\;\middle|\;
		\alg_{t} \right)+F\left(\{o_{t+1},\salg_{t+1},\cdots,o_{T},\salg_{T}\}	\;\middle|\;
		\alg_{t+1} \right)+F\left(\{o_{t}\}\backslash\{\salg_{t}\}	\;\middle|\;
		\alg_{t} \right),
	\end{eqnarray*}
	here the two inequalities follow from the submodular order property of $F$ applied to the $\pi_{\curac}$-nested pairs $(\alg_{t}, \alg_{t}\cup\{o^-_{t}\})$, $(\alg_{t}, \alg_{t}\cup\{\salg_{t}\})$, and $(\alg_{t+1},\alg_{t}\cup \{\salg_{t},o^+_{t}\})$. The final identity follows from the observation that $F(\{o_{t}\}\backslash \{\salg_{t}\}\mid \alg_{t})=F(\{o^-_{t}\}\mid \alg_{t})+F(\{o^+_{t}\}\mid \alg_{t})$. 
	
	Now, using \eqref{genud2} for $t=1$ we have,
	\[F\!\left(\opt_{T+1}\cup \alg_{T+1} \right)\leq F(\{\salg_{1}\})\;+\; F(\{o_{1}\}\backslash \{\salg_{1}\})\;+\; F\!\left(\{o_{2},\salg_{2},\cdots,o_{T},\salg_{T}\}	\;\middle|\;
	\{\salg_1\} \right),\]
	which is exactly inequality \eqref{geud} for $t=1$. Suppose that \eqref{geud} is true for all $t\leq t_0-1$. Then, by induction,
	\begin{eqnarray*}
		F(\opt_{T+1} \cup \alg_{T+1})&\le &
		F\!\left(\alg_{t_0}\right)
		\;+\; F\!\left(\{o_{t_0},\salg_{t_0},\cdots,o_{T},\salg_{T}\}	\;\middle|\;
		\alg_{t_0} \right) \;+\;
		\sum_{\tau\in[t_0-1]}
		F\!\left(\{o_\tau\}\setminus\{\salg_\tau\}
		\;\middle|\;
		\alg_{\tau}\right),\\
		&\leq & 	F\!\left(\alg_{t_0+1}\right)
		\;+\; F\!\left(\{o_{t_0+1},\salg_{t_0+1},\cdots,o_{T},\salg_{T}\}	\;\middle|\;
		\alg_{t_0+1} \right) \;+\;
		\sum_{\tau\in[t_0]}
		F\!\left(\{o_\tau\}\setminus\{\salg_\tau\}
		\;\middle|\;
		\alg_{\tau}\right),
	\end{eqnarray*}
	here the first inequality corresponds to inequality \eqref{geud} for $t=t_0-1$ and the second inequality follows from inequality \eqref{genud2} for $t=t_0$.
	
	\hfill\Halmos\end{proof}
}

\section{Proof of Lemma \ref{redprop}}\label{appx:pas}
Lemma \ref{redprop} states that: $(i)$ $\gc$ is an instance of OSOW, $(ii)$ The dominance property holds, and $(iii)$ The invariance property holds. The dominance property follows directly from the definition (see \eqref{optineq}), so we focus on proving $(i)$ and $(iii)$. 
\subsection{Proof of Claim $(i)$}\label{appx:gcprop1}
\begin{repeatlemma}[Lemma \ref{gcprop1}.]
	If $G$ is an instance of OSOW-SO then $\gc$ is an instance of OSOW. Similarly, if $G$ is an instance of OSW-SO then $\gc$ is an instance of OSW. 
\end{repeatlemma} 

\begin{proof}{Proof.} The lemma above strengthens claim $(i)$. %The key idea is that the function $F$ is a sum of monotone functions, each with the same (fully-consistent) submodular order, and the set of monotone submodular order functions is closed under addition. 
		Recall that $F(S_1\cup S_2)$, as defined in \eqref{def1}, is a non-negative linear combination of functions $f((N(S_1)\cap P_1) \cup (N(S_2)\cap P_2))$. To establish that $\gc$ is an instance of OSOW, it suffices to show that $f((N(S_1)\cap P_1) \cup (N(S_2)\cap P_2))$ is monotone with an arrival-consistent submodular order on $\curac$. By Remark 1 in \cite{SOF}, multiplying a monotone function with a submodular-order by a non-negative scalar preserves both monotonicity and the submodular-order property.
		
		When $G$ is an instance of OSW-SO, i.e., $f$ is submodular, it suffices to show that $f((N(S_1)\cap P_1) \cup (N(S_2)\cap P_2))$ is also submodular. This implies that $\gc$ is an instance of OSW using the fact that non-negative linear combination of monotone submodular functions is also monotone submodular. %Since the family of monotone submodular functions is closed under , this implies that $\gc$ is an instance of OSW when $G$ is an instance of OSW-SO. %From Remark 1 in \cite{SOF}, we know that any function that can be written as the sum of monotone functions with the same submodular order is itself a monotone submodular order function. 
	%Thus, to prove that $F$ is  monotone submodular order function (or a submodular function), it suffices to show 

	 %fix arbitrary $P_1$ and $P_2$ and focus on the function $f((N(S_1)\cap P_1) \cup (N(S_2)\cap P_2))$. % is monotone and has a submodular order that is independent of $P_1$ and $P_2$ (or that it is submodular). %It suffices to show $(i)$ and $(ii)$ for the function $f((N(S_1)\cap Z) \cup (N(S_2)\cap X))$.  
		Fix arbitrary sets $P_1\in \curn$ and $P_2\in \mathcal{X}$ and define $\psi:2^{\no}\to 2^{N}$ as follows,
	\begin{equation}\label{psidef}
		\psi(S)= (N(S\cap \curac)\cap P_1) \cup (N(S\cap \h{\curac})\cap P_2).
	\end{equation}
{\color{black}	By definition of $\curn$ and $\mathcal{X}$, $\psi(\{a\})=N(a)\cap P_1$ (for $a\in \curac$) and $\psi(\{\h{a}_t\})=N_t\cap P_2$ are both singleton sets that represent the realized outcomes of $a$ and $\hat{a}$ respectively.}
	%From Lemma \ref{}, it suffices to show that 
	We will show that $f(\psi(\cdot)):2^{\no}\to \mathbb{R}_{+}$ is a monotone submodular (order) function when $f$ is a monotone submodular (order) function. We first show that $f(\psi(\cdot))$ is monotone. Then, we show that $f(\psi(\cdot))$ is submodular when $f$ is submodular. Finally, we show that if $f$ has an arrival-consistent submodular order on $N$ then $f(\psi(\cdot))$ has an arrival-consistent submodular order on $\curac$. %with submodular order $\pi_{\no}$. If $f$ is submodular, then we show that $f(\psi(\cdot))$ is also submodular.

%	{\color{red} %When introducing OSOW -- need to define arrival-consistency as a general notion on action and outcome sets. 
		%Also, add the fact that if all $f(\psi())$ have the same order then $F$ is a SO, which suffices for applications.}
	
	\paragraph{Monotonicity:} Suppose that $f$ is monotone. Observe that,
	\begin{equation}\label{psialt}
		\psi(S)=\cup_{a\in S} \psi(a)\quad \forall S\subseteq \no.
	\end{equation}
	%	\[\psi(S\cup \{a\})=\psi(S)\cup \psi(\{a\})\supseteq \psi(S)\quad \forall S\subseteq \no,\,\, a\in \no\backslash S.\]
	Thus, $f(\psi(\cdot))$ is monotone because $f$ is monotone and $\psi(A)\supseteq \psi(B)$ for all $B\subseteq A$. 
	\smallskip
	
	\paragraph{Submodularity:} Suppose that $f$ is a monotone submodular function. Consider subsets $B\subseteq A\subseteq \no$, and set $C\subseteq \no\backslash A$. We have,
	\begin{eqnarray*}
		f(\psi(A\cup C))-f(\psi(A))&=&f(\psi(A\cup C)\backslash \psi(A) \mid \psi(A)),\\
		&\leq& f(\psi(A\cup C)\backslash \psi(A) \mid \psi(B)),\\
		%&= &	f(\cup_{a'\in A\cup\{a\}}\psi(a')\backslash \cup_{a'\in A} \psi(a') \mid \psi(B)),\\
		&=&	f(\psi(C)\backslash \psi(A) \mid \psi(B)),\\
		&\leq&	f(\psi(C)\backslash \psi(B) \mid \psi(B)),\\
		&= & f(\psi(B\cup C)\backslash \psi(B) \mid \psi(B)),\\
		&=& 	f(\psi(B\cup C))-f(\psi(B)).
		%	f(\psi(C\cup S\cup\{a\})\backslash \psi(S\cup\{a\}) \mid \psi(S)),\\
		%	&\leq & f(\psi(C\cup S)\backslash \psi(S) \mid \psi(S)),
	\end{eqnarray*}
	The first inequality follows from submodularity of $f$. The second inequality follows from the monotonicity of $f$. The second and third equalities follow from \eqref{psialt}. Overall, this proves that $f(\psi(\cdot))$ is submodular. 
	
	Note that we use the submodularity of $f$ only in the first inequality. In the next part, we replace submodularity with the weaker submodular order property.
	\smallskip
%	{\color{red} Note in proofs that a constant multiplier preserves properties}
	\paragraph{Submodular Order:}  Suppose that $f$ is a monotone function with an arrival-consistent submodular order $\pi$ over $N$. {\color{black} To show that $f(\psi(\cdot))$ is also a submodular order function we first define a candidate order $\pi_{\no}$ over $\no$. 
	
	Recall that, $\psi(\{a\})$ is a singleton that represents the realized outcome of action $a\in \no$. At a high level, we define $\pi_{\no}$ so that it is consistent with the order induced by $\pi$ over the set $(N\cap P_1) \cup (N\cap P_2)$ of realized outcomes. Specifically, distinct actions $a_1,a_2\in \no$ are ordered as follows:
	\begin{enumerate}[(i)]
		\item If $a_1,a_2\in \curac$, then $a_1\succ_{\pi_{\no}} a_2$ if and only if $N(a_1)\cap P_1 \succ_{\pi} N(a_2)\cap P_1$.
		\item If $\hat{a}_t,\hat{a}_\tau\in \hat{\curac}$, then $\hat{a}_t\succ_{\pi_{\no}} \hat{a}_\tau$ if and only if $t>\tau$ (which coincides with $N_t\cap P_2 \succ_{\pi} N_\tau\cap P_2$).
		\item If $a\in \curac$ and $\hat{a}_t\in \hat{\curac}$, then $\hat{a}_t \succ_{\pi_{\no}}  a$ if either $  N_t\cap P_2 \succ_{\pi} N(a)\cap P_1 $ or $N_t\cap P_2=N(a)\cap P_1$. 
	\end{enumerate}
By definition, order $\pi_{\no}$ is arrival-consistent. Note that the realized outcomes of $a$ and $\hat{a}_t$ coincide when $N_t\cap P_2=N(a)\cap P_1,$ and in this case we can set any order for $a$ and $\hat{a}_t$.
}

	In the earlier proof of submodularity of $f(\psi(\cdot))$ (when $f$ is submodular), we use the submodularity of $f$ only once to obtain the following inequality,
	\begin{equation}\label{sosuffice}
	f(\psi(C)\backslash \psi(A)\mid \psi(A))\,\leq\, f(\psi(C)\backslash \psi(A)\mid \psi(B)),
\end{equation}
	for all $B\subseteq A$ and $C\subseteq \no\backslash A$. To show that $\pi_{\no}$ is a submodular order for $f(\psi(\cdot))$, we only need to establish inequality \eqref{sosuffice} for $\pi_{\no}$-nested sets $B\subseteq A$ and $C\succ_{\pi_{\no}}A$. This will be the focus of the rest of the proof. Note that Lemma \ref{xzhelp} is the key ingredient in the following part of the proof and is shown separately later.
	%In fact, the inequality follows directly from the submodular order property of $f$ provided that,
%	\begin{equation}\label{psiso}
%		\psi(B) \text{ and } \psi(A)\text{ are $\pi$-nested and,}\quad \psi(C)\backslash \psi(A)\; \succ_{\pi} \psi(A).
%	\end{equation} 
%We now show that \eqref{psiso} is true. 

	From Lemma \ref{xzhelp}, we have that either $\psi(C)\backslash \psi(A)=\emptyset$ or $\psi(C)\backslash \psi(A)\succ_{\pi} \psi(A)$. If $\psi(C)\backslash \psi(A)=\emptyset$, the left hand side in \eqref{sosuffice} equals 0 and the inequality holds trivially by the monotonicity of $f$. Now, assume $\psi(C)\backslash \psi(A)\neq \emptyset$, so we have 
	\[\psi(C)\backslash \psi(A)\succ_{\pi}\psi(A).\] 
	Then, inequality \eqref{sosuffice} follows directly from the submodular order property of $f$ provided that,
				\[\psi(B) \text{ and } \psi(A)\text{ are $\pi$-nested}.\]
	Using Lemma \ref{xzhelp} again, we have that either $\psi(A\backslash B)\backslash \psi(B)=\emptyset$ or $\psi(A\backslash B)\backslash \psi(B)\succ_{\pi} \psi(B)$. Rewriting this using \eqref{psialt}, %which gives us that, $\psi(A\backslash B)\backslash \psi(B)=\psi(A)\backslash \psi(B)$. Thus, 
	we have that either $\psi(A)= \psi(B)$ or $\psi(A)\backslash \psi(B)\succ_{\pi} \psi(B)$. If $\psi(A)=\psi(B)$ then the two sides in \eqref{sosuffice} are equal. Otherwise, assume $\psi(A)\backslash \psi(B)\succ_{\pi} \psi(B)$. We conclude the proof by  observing that, $\psi(A)\backslash \psi(B)\succ_{\pi} \psi(B)$ implies that $\psi(B)$ and $\psi(A)$ are $\pi$-nested sets. %Since we also assumed  that $\psi(C)\backslash \psi(A)\succ_{\pi}\psi(A)$, using the submodular order property of function $f$ gives us \eqref{sosuffice}.
%	\begin{eqnarray*}
%		f(\psi(C)\backslash \psi(A)\mid \psi(A))&\leq&f(\psi(C)\backslash \psi(A)\mid \psi(B)),\\
%		&\leq& f(\psi(C)\backslash \psi(B)\mid \psi(B)).
%	\end{eqnarray*}
%	The second inequality follows from the monotonicity of $f$. This completes the proof.

	\hfill\Halmos 	 	\end{proof}
\begin{lemma}\label{xzhelp}
	Given sets $X,Z\subseteq \no$ such that $Z\succ_{\pi_{\no}} X$, and the function $\psi$ defined in \eqref{psidef}, it holds that either $\psi(Z)\backslash \psi(X)=\emptyset$ or $\psi(Z)\backslash \psi(X)\succ_{\pi} \psi(X)$.
\end{lemma}
\begin{proof}{Proof.} {\color{black} We assume that $\psi(Z)\backslash \psi(X)\neq\emptyset$ and show that this implies $\psi(Z)\backslash \psi(X)\succ_{\pi} \psi(X)$. We give a proof by contradiction. %Let $Z_1=Z\cap \curac$ and $X_1=X\cap \curac$. Similarly, let, $Z_2=Z\cap \h{\curac}$ and $X_2=X\cap \h{\curac}$. Also, let 

	Let
	\[Z_1=\{a\in Z \mid \exists a'\in X, \psi(a')=\psi(a)\},\]
	be the set of all actions in $Z$ that have the same realized outcome as some action in $X$. From \eqref{psialt}, we have that
		\[\psi(Z)\backslash \psi(X)=\psi(Z\backslash Z_1),\]
		which is the set of all realized outcomes that are not common between $Z$ and $X$. Now, suppose that $\psi(Z\backslash Z_1)$ does not succeed $\psi(X)$ in the order $\pi$. Then there exists distinct outcomes $e\in \psi(Z\backslash Z_1)$ and $e'\in \psi(X)$ such that $e'\succ_{\pi} e$. Since $\psi$ maps every action to a single outcome, there exist distinct actions $a\in Z\backslash Z_1$ and $a'\in X$ that correspond to the outcomes $e$ and $e'$ respectively. By definition of the order $\pi_{\no}$, we have that, $a'\succ_{\pi_{\no}} a$. This contradicts the fact that $Z\succ_{\pi_{\no}}X$. Therefore, $\psi(Z\backslash Z_1) \succ_{\pi}\psi(X)$.
}
	\hfill\Halmos\end{proof}
\subsection{Proof of Claim $(iii)$: The Invariance Property}\label{appx:pas:invar}
To prove the invariance property, we need to show that Greedy does not choose any action from the set $\h{\curac}$. Using Lemma \ref{key} (restated below), we show that if Greedy does not pick any of the new actions $\{\h{a}_1,\cdots,\h{a}_{t-1}\}$ prior to $t$ then it will not pick action $\h{a}_t$ at $t$.
Let $\h{\alg}_t=\{\h{\salg}_1,\cdots,\h{\salg}_{t-1}\}$ denote the set of actions chosen prior to arrival $t$ by \grd\ on instance $\gc$. Suppose that $\h{\alg}_t\subseteq \curac$ and note that this is true for $t=1$ because $\h{\alg}_1=\{\emptyset\}$. 
Using Lemma \ref{key} with $S=\h{\alg}_t$, we have,
\[ F(\h{a}_t\mid \h{\alg}_t)\leq \max_{a\in \ac_t} {F} (a\mid \h{\alg}_t).\] 
Thus, Greedy chooses an action from the original set of actions ($\ac_t$) at arrival, i.e., $\h{\salg}_t\in \ac_t$ and $\h{\alg}_{t+1}\subseteq \curac$. Now, our claim follows by induction over $t$.

\begin{repeatlemma}[Lemma \ref{key}.]
	For any set function $f$ (not necessarily monotone or submodular order), the function $F$ defined in \eqref{def1} satisfies, 
	\[F(\h{a}_t\mid S)\leq \max_{a\in \ac_t} F(a\mid S)
	\qquad \forall t\in [T], S\subseteq \curac\backslash \ac_t.\] 
\end{repeatlemma}
\begin{proof}{Proof.}
 	We being by restating the two main lemmas that we use to prove Lemma \ref{key}. For all $a\in \curac,\, S\subseteq \curac\backslash \{a\},$ and $e\in N(a)$, let
\[w_{e,S}=\sum_{P\in \curn(S)} \gamma(P)\,f(e\mid P),\]
denote the marginal reward of outcome $e\in N(a)$ when action set $S$ has been selected.
\begin{repeatlemma}[Lemma \ref{useF}.]%\label{key1}
	For any set function $f$ (not necessarily monotone or submodular), every arrival $t\in [T]$, action $a\in \ac_t$, and set $S\subseteq \curac\backslash \{a\}$, the function $F$ defined in \eqref{def1} satisfies, 
	\[
	F(a\mid S)=\sum_{e\in N(a)} p_e\, w_{e,S}.
	\]
\end{repeatlemma}

%Let $w_{e,S}$ is the ``reward" if $e$ is the realized outcome of action $a$. Lemma \ref{useF} states that the marginal value of action $a$ (with respect to $F$) is simply the expected ``reward" from the outcome of $a$.
\begin{repeatlemma}[Lemma \ref{key2}.]
	For any set function $f$ (not necessarily monotone or submodular), every arrival $t\in [T]$ and set $S\subseteq \curac\backslash \ac_t$, the function $F$ defined in \eqref{def1} satisfies, 
	\[
	F(\h{a}_t\mid S)=\sum_{a\in \ac_t} y^c_a \left(\sum_{e\in N(a)} p_e\, w_{e,S}\right).
	\]
\end{repeatlemma}	
	We are now ready to prove Lemma \ref{key}. First, we have from Lemma \ref{key2},
		\begin{eqnarray*}
			F(\h{a}_t\mid S)		&=& \sum_{a\in \ac_t}y^c_a \sum_{e\in N(a)} p_e\, w_{e,S},\\
			&=& \sum_{a\in \ac_t} y^c_a\, F(a\mid S),\\
			&\leq & \max_{a\in \ac_t}\, F(a\mid S).
		\end{eqnarray*}
		The second equality follows from Lemma \ref{key1}. The final equality follows from the fact that $\sum_{a\in \ac_t}y^c_a=1$.

	\hfill\Halmos	\end{proof}

\begin{proof}{Proof of Lemma \ref{key1}.} Since $S\cup a \subseteq \curac$, we use the original definition of $F$ from equation \eqref{defF} to express $F(a\mid S)$ as follows,
	\begin{eqnarray*}
		F(a\mid S)&=&\sum_{P\in \curn}\gamma(P)\left(f(N(S\cup a)\cap P)-f(N(S)\cap P)\right),\\
		&=&\sum_{P'\in \curn(S)}\,\sum_{e\in N(a)}\gamma(P')\,p_e\,\left(f(P'\cup e)-f(P')\right),\\
		&=&\sum_{e\in N(a)}\sum_{P'\in \curn(S)}\,\gamma(P')\,p_e\,f(e\mid P'),\\
		&=&\sum_{e\in N(a)}p_e\left(\sum_{P'\in \curn(S)}\,\gamma(P')\,f(e\mid P')\right).
	\end{eqnarray*}
The third and fourth identities follow from rearranging the terms. In the second identity, we use the fact for evaluating $F(a\mid S)$, it suffices to focus only on the set of partial mappings $\curn(S)$. %and  the set $\curn(S)$ captures every possible set of realized outcomes of $S$. % and $\gamma(P')\, p_e$ is the probability that $P'\cup e$ is the set of realized outcomes of actions in $S\cup a$.  
Formally,
	\begin{eqnarray*}
		&&\sum_{P\in \curn}\gamma(P)\left(f(N(S\cup a)\cap P)-f(N(S)\cap P)\right)\\
		&&=\sum_{P'\in \curn(S)}\,\sum_{e\in N(a)}\sum_{P''\in \curn\backslash \curn(S\cup a)}\gamma(P')\,p_e\,\gamma(P'')\,\left(f(P'\cup e)-f(P')\right),\\
		&&=\left(\sum_{P''\in \curn\backslash \curn(S\cup a)}\gamma(P'')\right)\sum_{P'\in \curn(S)}\,\sum_{e\in N(a)}\gamma(P')\,p_e\, f(e\mid P'\cup e),\\
	\end{eqnarray*}
	here, we split $P\in \curn$ into $P'=P\cap N(S)$ (the realized outcomes of $S$), $e$ (the realized outcome of $a$), and $P''=P\backslash N(S\cup a)$ (all other realized outcomes). The probability that $P'\cup e$ is the set of realized outcomes of actions in $S\cup a$ is given by $\gamma(P')\, p_e$. Observe that the marginal value $f(e\mid P')$ is independent of $P''$ and 
	\[\sum_{P''\in \curn\backslash \curn(S\cup a)}\gamma(P'')=\sum_{P''\in \curn(\curac\backslash (S\cup a))}\gamma(P'')=1.\] %
	\hfill\Halmos	\end{proof}

\begin{proof}{Proof of Lemma \ref{key2}.} Since $S\cap \h{\curac}=\emptyset$, we use \eqref{def2} to obtain,
	\begin{eqnarray*}
		F(\h{a}_t\mid S)&=&  \sum_{P_1\in \curn }\sum_{P_2\in \mathcal{X}}\gamma(P_1)\, \alpha^c(P_2)\, f(N_t\cap P_2 \mid N(S)\cap P_1),\\
		&\overset{(a)}{=}&\sum_{P_2\in \cal{X}} \alpha^c(P_2)	\,\left(\sum_{P'_1\in \curn(S)}\gamma(P'_1)\, f(N_t\cap P_2\mid P'_1)\right),\\
		&\overset{(b)}{=}& \sum_{e\in N_t}\left(\sum_{P_2\in\cal{X}\mid P_2\ni e} \alpha^c(P_2)\right)\,w_{e,S},\\
		&\overset{(c)}{=}& \sum_{a\in \ac_t} \sum_{e\in N(a)} y^c_a\, p_e\, w_{e,S}.\\
		&=& \sum_{a\in \ac_t} y^c_a \sum_{e\in N(a)}\, p_e\, w_{e,S}.
		%		&=& \sum_{a\in \ac_t} y^c_a\, F(a\mid S),\\
		%		&\leq & \max_{a\in \ac_t}\, F(a\mid S).
	\end{eqnarray*}
	Equality $(a)$ follows from a rearrangement of terms and a change of variables (similar to the proof of Lemma \ref{key1}) that is captured in the following identity,
	\[ \sum_{P_1\in \curn }\gamma(P_1)f(N_t\cap P_2 \mid N(S)\cap P_1)=	\sum_{P'_1\in \curn(S)}\gamma(P'_1)\, f(N_t\cap P_2\mid P'_1).\]
	To get equality $(b)$  we use the fact that $N_t\cap P_2$ is a singleton (since $P_2\in \cal{X}$) and also use the definition of $w_{e,S}$. Equality $(c)$ follows from the fact that $\sum_{P_2\in\cal{X}\mid P_2\ni e} \alpha^c(P_2)=y^c_a\,p_e$.
	\hfill\Halmos	\end{proof}

\section{Proofs for Results in the RO Model}\label{appx:rom}
 \begin{repeattheorem}[Theorem \Ref{rom}.]
	Greedy is $\alpha$-competitive ($\alpha\geq 0.5096$) for OSW-SO in the RO model. 
\end{repeattheorem}

We will prove this theorem using the reduction technique. To use the technique in the RO model, we first show that permuting the arrivals of $G$ and then constructing the corresponding instance without stochastic outcomes is the same as starting with the instance $\gc$ and then permuting its arrivals.
\begin{lemma}\label{romhelp}
	%Let $G$ and $H$ be instances of OSW with \pas\ such that 
	Given an instance $G$ of OSW-SO, let $H=G_\sigma$ for some permutation $\sigma$ %of the arrivals in $G$. 
	and let $\gc_\sigma$ denote the permuted version of $\gc$.  Let $\h{H}$ denote the instance of OSW corresponding to $H$. 
	The instances $\h{H}$ and $\gc_{\sigma}$ are equivalent. %the instance $\h{H}$ of OSW without \pas\ corresponding to $G_\sigma$, is the same as $\gc_{\sigma}$, i.e., 
	%$\h{H}=\gc_\sigma$. % denote the permuted versions of $G$ and $\gc$ respectively. For the sake of the next statement, let $H=G_\sigma$. We claim that 
	%	\[\h{H}=\gc_\sigma.\]
\end{lemma}
\begin{proof}{Proof.}
	Recall that $\gc$ is constructed by adding an extra action $\h{a}_t$ for every $t\in [T]$, and defining the objective function $F$ using the optimal solution $\alpha^c(\cdot)$ of $\optc(G)$. $\gc_{\sigma}$ and $\gc$ have the same objective function but $\gc_\sigma$ has a reordered sequence of arrivals. Specifically, action set at the $t$-th arrival in $\gc_\sigma$ is given by $\ac_{\sigma(t)}\cup\h{a}_{\sigma(t)}$. 
	
	The instance $\h{H}$ is constructed by augmenting the action sets $\{\ac_{\sigma(t)}\}_{t\in [T]}$ of $H$ and defining the objective function $F_{\h{H}}$ using the optimal solution $\alpha^c(\cdot)$ of $\optc(H)$. Clearly,
	$\h{H}$ and $\gc_\sigma$ have an identical sequence of actions sets. Furthermore, since $\optc(H)=\optc(G)$, it follows that $\h{H}$ and $\gc_\sigma$ also have the same objective function, i.e., $F_{\h{H}}=F$. % has the 
	\hfill\Halmos	\end{proof}

\begin{proof}{Proof of Theorem \ref{rom}.}
	Consider an arbitrary instance $G$ of OSW-SO with a fixed arrival sequence and the corresponding instance $\gc$ of OSW. %From Lemma \ref{gcprop}, we have, $\alg(\gc)=\alg(G)$ and $\opt(\gc)\geq \optc(G)$. 
	Let $\sigma$ denote a random permutation of the arrival sequence and let $H=G_\sigma$ denote a randomly permuted version of $G$. From Lemma \ref{romhelp}, we have that $\gc_{\sigma}$, the randomly permuted version of $\gc$, is identical to $\h{H}$. 
	
	From Lemma \ref{redprop}, we have
	\[\aalg(\h{H})=\aalg(H)\quad \text{and}\quad \opt(\h{H})\geq \optc(H).\] 
	Using the equivalence between $\h{H}$ and $\gc_\sigma$ and the permutation invariance of \opt\ and $\optc$, we get,
	\[\alg(\gc_\sigma)=\alg(G_\sigma)\quad \text{and}\quad \opt(\gc)= \opt(\gc_\sigma)\geq \optc(G_\sigma) =\optc(G).\]
	%	here the two identities follow from the permutation invariance of the offline benchmarks ($\optc$ and \opt). 
	Now, taking expectation over $\sigma$, 
	\begin{eqnarray*}
		E_\sigma[\alg(G_\sigma)]&=&E_\sigma[\alg(\gc_\sigma)],\\
		&\geq &\alpha\, \opt(\gc_\sigma),\\
		&\geq & \alpha \,\optc(G),\\
		&\geq & \alpha\, \opt(G).
	\end{eqnarray*}
	The first inequality follows from the fact that $\gc_\sigma$ is an instance of OSW and Greedy is $\alpha$-competitive for OSW in the RO model. The final inequality follows from Lemma \ref{concdom}.
	
	\hfill\Halmos	\end{proof}

\section{Proofs for Results in the UIID Model}\label{appx:iid}
\subsection{From RO to UIID}\label{appx:ro2iid}
The competitive ratio of an online algorithm in the RO model is a lower bound on its competitive ratio in the UIID model. This is because the RO model ``subsumes" the UIID model in the following sense: Suppose that the adversary in the RO model generates a random instance using a UIID model. Permuting the arrival sequence uniformly randomly does not change the final distribution over instances. Thus, one can generate instances from a UIID model in the RO model. 
	
	Formally, let $\cal{H}$ denote the set of realizable instance in a UIID model and let $\bm{H}$ denote a random instance from $\cal{H}$. Let $H_\sigma$ represent the instance generated by randomly reordering the arrival sequence of instance $H\in \cal{H}$. Let $\beta_{RO}$ denote the competitive ratio of an online algorithm \aalg\ in the RO model. In the RO model, the adversary can pick the worst instance from $\cal{H}$. Therefore,
	\begin{eqnarray*}
		E_\sigma[\aalg(H_\sigma)]&\geq &\beta_{RO}\,\opt(H)\quad \forall H\in \cal{H},\\
		E_D[E_\sigma[\aalg(\bm{H}_\sigma)]] &\geq & \beta_{RO}\, E_D[\opt(\bm{H})],\\
		E_D[\aalg(\bm{H})] &\geq & \beta_{RO}\, E_D[\opt(\bm{H})].
	\end{eqnarray*}
	Since the last inequality holds for all sets $\cal{H}$ and distributions $D$, \aalg\ is at least $\beta_{RO}$-competitive in the UIID model. 

\subsection{Deterministic Outcomes}\label{appx:iiddeterm}
Recall that OSW with UIID arrivals is defined by a set of $T$ types with actions sets $\{\ac_t\}_{t\in [T]}$, a monotone DR submodular objective function $F:\mathbb{Z}^{|\curac|}_{+}\to \mathbb{R}_{+}$, and a distribution $D$ over $[T]$. In OSW-SO, in addition to the action set $\curac$ we have a set $N$ of possible outcomes, a monotone DR objective function $f: \mathbb{Z}^{|N|}_{+}\to \mathbb{R}_+$, and probabilities $\{p_e\}_{e\in N}$. A random instance $\bm{H}$ in the UIID model is generated by sampling a sequence of $T$ IID types from $D$. We use $H$ to denote a realization of $\bm{H}$ and $\mh$ to denote the set of all possible realizations of $\bm{H}$.

%$H$ may have multiple arrivals with identical type and the actions sets for distinct arrivals are not disjoint. %Some important definitions and results for OSW-SO, such as the definition of $\optc$, Lemma \ref{concdom}, and Lemma \ref{key},  use the disjointedness of action and outcome sets.
 We perform a change of variables to switch from the monotone DR submodular function $F$ to a monotone submodular function $F_T$ over an expanded ground set $\curac_T$ with $T\,|\curac|$ elements. As we discuss later, this will be helpful for proving Theorem \ref{uiid}. To familiarize the reader we introduce this change here and use the new notation in the proof of Theorem \ref{olduiid}. In the following, we use $t$ to index arrival types and $\tau$ to index the arrivals in $\bm{H}$.
Let 
\[\curac_T=\{a_{\tau}\mid a\in \curac,\, \tau\in [T]\}\quad \text{and}\quad \ac_{\tau,t}=\{a_\tau \mid a\in \ac_t\}\,\, \forall \tau,t\in [T],\] 
where $a_\tau$ is a distinct ``copy'' of action $a$. Note that $\ac_{\tau,t}$ is the set of feasible actions at arrival $\tau$ of type $t$. Before proceeding, we illustrate this change with an example.

\begin{example} Consider an instance with $T=2$ and two types of arrivals $A_1=\{a,b\}$ and $A_2=\{c,d\}$. The new ground set $\curac_T=\{a_1,a_2,b_1,b_2,c_1,c_2,d_1,d_2\}$. For $\tau\in[2]$, the new sets of feasible actions are as follows: $A_{\tau,1}=\{a_\tau,b_\tau\}$ and $A_{\tau,2}=\{c_{\tau},d_\tau\}$.	
\end{example}

 For $S\subseteq \curac_T$, let $x_{a,S}=|\{\tau\mid a_{\tau}\in S\}|$ denote the number of copies of $a$ in $S$. Let $\bm{x}_S=(x_{a,S})_{a\in \curac}\in \mathbb{Z}^{|\curac|}_{+}$. Define 
\[F_T(S)=F(\bm{x}_S)\quad \forall S\subseteq \curac_T.\]
%Since this is merely a change of variables, 
It is not hard to see that $F_T$ is monotone submodular. We illustrate the change of variables with a simple example.

\begin{theorem}\label{olduiid} 
	Greedy is $(1-1/e)$-competitive for OSW in the UIID model.	
\end{theorem}
\begin{proof}{Proof.} % {\color{red}We use the transformed domain $\curac_T$ and function $F_T$ in this proof.}
	
	 Suppose that we add an extra layer of randomness to the UIID instance generation process: Given a random sequence $\bm{H}$, we permute the arrival sequence uniformly randomly. Let $\bm{H}_{\sigma}$ denote the final (permuted) sequence, where $\sigma$ is a random permutation of $[T]$. Observe that $\bm{H}_{\sigma}$ has the same distribution as $\bm{H}$. Every sequence $\bm{H}$ describes a unique instance with $T$ arrivals. With a slight abuse of notation, we use $\bm{H}$ to denote the sequence as well as corresponding instance. Let $O(\bm{H})$ denote the optimal offline solution (set of actions) on the instance $\bm{H}$. The optimal offline solution is permutation invariant, i.e., $O(\bm{H})=O(\bm{H}_\sigma)$. Let $\alg_\tau(\bm{H})$ denote the action set chosen by \grd\ prior to arrival $\tau$. Let $\aalg(\bm{H})$ and $\opt(\bm{H})$ denote the total reward of the solution output by \grd\ and \opt\ on $\bm{H}$.  Let $E_{\sigma}[\cdot]$ denote expectation w.r.t.\ the randomness in $\sigma$ and let $E_D[\cdot]$ denote expectation w.r.t.\ the randomness in $\bm{H}$.
	
	Now, consider two independently drawn UIID sequences, $H^1$ and $H^2$. Let $H^2_\sigma$ denote a randomly permuted version of $H^2$. %Let $E_{\sigma}[\cdot]$ denote expectation w.r.t.\ the random permutation $\sigma$. 
	%Let $\opt(H^2)$ denote the optimal offline solution (set of actions) on instance $H^2$. 
	%Recall that the optimal offline solution is permutation invariant. 
	Fix an arbitrary index $\tau\in[T-1]$ and let $H^3$ denote the hybrid instance with $T+\tau-1$ arrivals, constructed by taking the first $\tau-1$ types of sequence $H^1$ followed by the $T$ types of $H^2_\sigma$. For $\tau'\geq \tau$, let $H(\tau')$ denote the type of arrival $\tau'$ %let $X_t$ denote the set of feasible actions for the $t$-th arrival 
	in $H^3$, which is of the same type as the $\tau'-\tau+1$-th arrival in $H^2_\sigma$. Let $\ac_{\tau,H(\tau')}$ denote the set of feasible actions for arrival $\tau'$ (arrival $\tau'-\tau+1$) in $H^3$ (in $H^2_\sigma$).
	
	Let $\alg_{\tau}(H^1)$ denote the set of actions chosen by Greedy prior to arrival $\tau$ on sequence $H^1$. Greedy also chooses the action set $\alg_{\tau}(H^1)$ prior to $\tau$ on sequence $H^3$. At arrival $\tau$ in $H^3,$ Greedy chooses,
	\[\salg_\tau(H^3)=\argmax_{a_\tau\in A_{\tau,H(\tau)}}\, F_T(a_\tau\mid \alg_\tau).\]
	Taking expectation over the randomness in $\sigma$, we have,
	\begin{eqnarray*}
		T\, E_{\sigma}[F_T(\salg_\tau(H^3)\mid \alg_\tau(H^1))] %&=&	T\sum_{t\in [T]} \frac{1}{T}\,\max_{a\in A_{t}} F(a\mid \alg_\tau),\\
		&=&	T\left[\sum_{\tau'\geq \tau}^{T+\tau-1}\frac{1}{T}\max_{a_{\tau'}\in A_{\tau',H(\tau')}}\,F_T(a_{\tau'}\mid \alg_{\tau}(H^1))\right],\\
		&\geq & \sum_{a_{\tau'}\in O(H^2)} F_T(a_{\tau'}\mid \alg_{\tau}(H^1)),\\
		&\geq & F_T(O(H^2)\mid \alg_\tau(H^1)),\\
		&= &F_T(O(H^2)\cup \alg_\tau)-F_T(\alg_\tau(H^1)),\\
		&\geq &  F_T(O(H^2))-F_T(\alg_\tau(H^1)),
		%	\sum_{t\geq \tau}^{T-\tau+1}\max_{a\in A_{t}}F(a\mid \alg_{\tau})&\geq&	\sum_{a\in \opt(H^2)} F(a\mid \alg_{\tau}),\\
		%F(\opt(H^2))&\leq&		F(\opt(H^2)\cup\alg_\tau),\\
		%&=& F(\alg_\tau)+F(\opt(H^2)\mid \alg_\tau),\\
		%F(\opt(H^2))-F(\alg_\tau)&\leq &F(\opt(H^2)\mid \alg_\tau),\\
		%&\leq & \sum_{a\in \opt(H^2)} F(a\mid \alg_{\tau}),
	\end{eqnarray*}
	here the first inequality follows from the fact that the $\tau'$-th arrival in $H^3$ has the same type as arrival $\tau'-\tau+1$ in $H^2_\sigma$ and 
	\[\max_{a_{\tau'}\in A_{\tau',H(\tau')}}\,F_T(a_{\tau'}\mid \alg_{\tau}(H^1))\geq \, F_T\left(O(H^2)\cap \ac_{\tau'-\tau+1,\,H(\tau')}\mid \alg_\tau(H^1)\right).\] % and $\opt(H^2)=\opt(H^2_\sigma)$. 
	The second inequality follows from the submodularity of $F_T$ and the last inequality by monotonicity of $F$. %Thus, $F(\opt(H^2)\mid \alg_\tau)\geq F(\opt(H^2))-F(\alg_\tau)$.
	%	 Recall that, Greedy chooses the set of actions $\alg_\tau$ prior to $\tau$ on instance $H^3$. At arrival $\tau$, Greedy chooses,
	%	\[\salg_\tau=\argmax_{a\in A_\tau} F(a\mid \alg_\tau).\]
	% We have,
	%	\begin{eqnarray*}
		%		\sum_{a\in \opt(H^2)} F(a\mid \alg_{\tau})&\leq&\sum_{t\geq \tau}^{T-\tau+1}\max_{a\in A_{t}}F(a\mid \alg_{\tau}),\\
		%		&=& T\left[\sum_{t\geq \tau}^{T-\tau+1}\frac{1}{T}\max_{a\in A_{t}}F(a\mid \alg_{\tau})\right],\\
		%		&=& T\, E_{\sigma}[F(\salg_\tau\mid \alg_\tau)].
		%	\end{eqnarray*}
	
	Taking expectation over the randomness in $H^1$ and $H^2$ ($H^3$) and using the fact that the $\tau$-th types in $H^3$ and $H^1$ have the same distribution, we get,
	%	\[E_{\sigma}[F(\salg_\tau\mid \alg_\tau)]\geq \frac{1}{T} \left[F(\opt(H^2))-F(\alg_\tau)\right].\]
	%Let $E_{D}[\cdot]$ denote the expectation w.r.t.\ the randomness in UIID (instances $H^1, H^2$). Taking expectation on both sides of the inequality above, we get,
	\begin{eqnarray*}
		E_{D}[F_T(\salg_\tau(\bm{H})\mid \alg_\tau(\bm{H}))]&\geq &\frac{1}{T} \left[E_{D}[F_T(O(\bm{H}))]-E_{D}[F_T(\alg_\tau(\bm{H}))]\right]\quad \forall \tau\in[T],\\
		&=&\frac{1}{T} \left[E_{D}[\opt(\bm{H})]-E_{D}[F_T(\alg_\tau(\bm{H}))]\right]\quad \forall \tau\in[T].
	\end{eqnarray*}
	 % i.e.,  $E_{D}[E_{\sigma}[F_T(\salg_\tau(H^3)\mid \alg_\tau(H^1))]]=E_{D}[F_T(\salg_\tau(\bm{H})\mid \alg_\tau(\bm{H}))]$. 
	Note that, $E_{D}[F_T(\salg_\tau(\bm{H})\mid \alg_\tau(\bm{H}))]$ equals the expected marginal value of the Greedy solution at arrival $\tau$ in $\bm{H}$. Unrolling the recursion, we get
	\[E_{D}[\aalg(\bm{H})]\geq \left(1-\left(1-\frac{1}{T}\right)\right)^T E_{D}[\opt(\bm{H})].\]
	%Taking expectation w.r.t.\ the randomness in $H^1$ and $H^2$, we have,
	%\[E\]
	\hfill\Halmos	\end{proof}

\subsection{Proof of Theorem \ref{uiid}}\label{appx:iidso}
Recall that an instance of OSW-SO with UIID arrivals is defined by a set of $T$ types with actions sets $\{\ac_t\}_{t\in [T]}$, outcome sets $\{N_t\}_{t\in [T]}$, probabilities $\{p_e\}_{e\in N}$, monotone DR submodular objective functions $f:\mathbb{Z}^{|N|}_{+}\to \mathbb{R}_{+}$ and $F:\mathbb{Z}^{|\curac|}_{+}\to \mathbb{R}_{+}$, and a distribution $D$ over $[T]$. A random sequence of types $\bm{H}$ in the UIID model is generated by sampling a sequence of $T$ IID types from $D$. Let $H$ to denote a realization of $\bm{H}$ and $\mh$ to denote the set of all ($T^T$) possible realizations of $\bm{H}$.

\paragraph{Expanding the Ground Sets:} When two arrivals in realization $H$ have the same type, their action (outcome) sets are not disjoint. However, we use the disjointedness of action (outcome) sets in several lemmas (such as Lemma \ref{concdom} and Lemma \ref{key}) that are used in the proof of Theorem \ref{concave1} (for adversarial arrivals). To address this issue, we create $T$ distinct copies of every action and outcome and redefine $f$ (and $F$) as set functions  $f_T$ (and $F_T$) on expanded ground set $N_T$ (and $\curac_T$). This allows us to use the lemmas that we showed previously, although it increases the complexity of our notation. 

 Recall the definition of the expanded set of actions (introduced in Appendix \ref{appx:iiddeterm}),
\[\curac_T=\{a_{\tau}\mid a\in \curac,\, \tau\in [T]\}\quad \text{and} \quad \ac_{\tau,t}=\{a_\tau \mid a\in \ac_t\}\,\, \forall t,\tau\in [T],\] 
where we use $\tau$ to index arrivals and $t$ to index arrivals types. Recall that $\ac_{\tau,t}$ is the set of feasible actions at arrival $\tau$ of type $t$. Similarly, we expand the set of outcomes by creating $T$ copies of every outcome. Specifically, let
 \begin{equation*}\label{expandout1}
 	N(a_\tau)=\{e_\tau\mid e\in N(a)\},
 \end{equation*}
 be the outcome set of action $a_\tau\in \curac_T$ with probability
 \begin{equation*}\label{expandout2}
 	p_{e_{\tau}}=p_e\quad \forall\,e_{\tau}\in N(a_{\tau}).
 \end{equation*}
 The outcome $e_\tau$ is a distinct ``copy'' of outcome $e$ that can only be realized at arrival $\tau$. Let 
  \begin{equation}\label{expandout3}
 N_T=\cup_{a_{\tau}\in \curac_T} N(a_\tau),\quad N(S)=\cup_{a_\tau\in S} N(a_\tau)\,\, \forall S\subseteq \curac_T,\quad \text{and}\quad  N_{\tau,t}=N(\ac_{\tau,t}).
\end{equation}
%Note that $N_{\tau,t}$ is the outcome set of arrival $\tau$ of type $t$. Also, 
Finally, let
\[\curn_T=\{P\subseteq N_T\mid |P\cap N(a_\tau)|=1\,\, \forall a_\tau\in \curac_T\},\] 
denote the set of all possible action to outcome mappings in the expanded ground sets. The outcome sets of distinct actions are now disjoint. Next, we define the objective functions on the expanded ground sets.

  Given a set $P\subseteq N_T$, let, $y_{e,P}=|\{\tau\mid e_{\tau}\in P\}|$ denote the number of copies of $e$ in $P$. Let $\bm{y}_P=(y_{e,P})_{e\in N}\in \mathbb{Z}^{|N|}_{+}$. The new objective functions are
\[f_T(P)=f(\bm{y}_P)\quad \forall P\subseteq N_T\]
%Since this is merely a change of variables, 
\begin{equation}\label{uiidF1}
\text{and}\quad 	F_T(S)=\sum_{P\in \curn_T} \gamma(P)\, f_T(N(S)\cap P)\quad \forall S\subseteq \curac_T.
\end{equation}
It is not hard to see that $f_T$ and $F_T$ are monotone submodular.

\begin{proof}{Proof of Theorem \ref{uiid}.} %{\color{red} Are $H$ and $\h{H}$ random instances or realizations? Need random variable so that we can take expectation. Correct in main paper too. Also, need proper wording for $I,\h{I}$ - model, instance?}
% Thus, $\gc$ is an instance of OSW without \pas\ and $\h{I}=(\gc,\, D)$ is an instance of OSW without \pas\ and UIID arrivals. %Let $\h{H}$ denote a random instance of $\h{I}$. 
	% $I$ and $\h{I}$ have the same distribution $D$ over types, 
%We will construct an instance of OSW with UIID arrivals that satisfies the dominance and invariance properties. Formally, 
Let $I=(G,D)$ denote an occurrence of OSW-SO with UIID arrivals. We claim that there exists an occurrence $\h{I}=(\h{G},D)$ of OSW (without stochastic outcomes) with UIID arrivals such that,
	\[(i)\,\, E_D[\opt(\h{\bm{H}})]\geq E_D[\opt(\bm{H})] \quad \text{ and } \quad (ii)\,\, E_D[\aalg(\bm{H})]=E_D[\aalg(\h{\bm{H}})],\]	
	here $\bm{H}$ denotes a random sequence of $T$ types that induces an instance of $I$ as well as $\h{I}$. With a slight abuse of notation, we use $\bm{H}$ to denote the instance of $I$ and $\h{\bm{H}}$ to denote the instance of $\h{I}$.
	
%	here we generate coupled instances $H$ and $\h{H}$ of $I$ and $\h{I}$ by using the same sequence of $T$ IID samples from $D$.
Assuming the truthfulness of $(i)$ and $(ii)$, we have,
	\[E_D[\aalg(\bm{H})]=E_D[\aalg(\h{\bm{H}})]\geq (1-1/e)\, E_D[\opt(\h{\bm{H}})]\geq \, E_D[\opt(\bm{H})],\]
	where the first inequality follows from Theorem \ref{olduiid}. In the rest of the proof, we construct the occurrence $\h{I}$ and then prove claims $(i)$ and $(ii)$. 
\smallskip

We begin by introducing some notation. Consider an arbitrary realization $H\in \mh$ of $\bm{H}$. For each arrival $\tau\in[T]$, let $H(\tau)$ denote its type. Since only actions corresponding to the realized arrival types are available, we define
\[\ac_H=\cup_{\tau\in[T]} A_{\tau,H(\tau)},\]%\{a_\tau\mid a_\tau\in \ac_{\tau,H(\tau)},\, \tau\in[T]\}\] 
to be the set of actions that are feasible in realization  $H$. Let
\[ \cal{X}_{H}=\{X\mid X\subseteq N(\ac_H),\,|X\cap N_{\tau,H(\tau)}|=1\,\, \forall \tau\in [T]\}\]
be the collection of all realizable outcome sets that can arise from feasible selections of actions in $H$ (exactly one action from each arrival). We compare Greedy with the upper bound $\optc(H)=\max_{Y\in\Delta(\ac_H)} F^c_T(Y)$ on the offline benchmark. From Lemma \ref{concdom}, we have,
\begin{equation}\label{lem4uiid}
	\optc(H)\geq \opt(H).
\end{equation}
Thus, $E_D[\optc(\bm{H})]\geq E_D[\opt(\bm{H})]$. Let $Y^c_H=(y^c_{a_\tau,H})_{a_\tau\in \ac_H}$ denote the optimizer of $\optc(H)$ and 
let $\alpha^c$ denote the optimal probability distribution over $\cal{X}_H$ such that, 
\begin{equation}\label{alphauiid}
	\sum_{X\in \cal{X}_H} \alpha^c_H(X)= 1,\qquad \sum_{X\in\mathcal{X}_H\mid X\ni e_{\tau}} \alpha^c_H({X})\,  = \, p_{e_{\tau}}\,y^c_{a_{\tau,H}}\quad \forall e_{\tau}\in N(a_{\tau}),\,a_\tau\in \ac_H, 
\end{equation}
and
\begin{equation}\label{optcuiid}
	\quad	\optc(H)=F^c(Y^c_H)=\sum_{X\in \cal{X}_H} \alpha^c_H (X)\, f_T(X).
\end{equation}
{\color{black}
\noindent	\textbf{Construction of $\h{I}$:} The overall construction is quite similar to the adversarial case, in that we augment the set of actions in $G$ and then extend the function $F_T$ to the new ground set. The main difference is that, in the enlarged action set, we include \emph{multiple} new actions for every arrival type so that we can capture the optimal offline solution for every realization $H$. For the sake of intuition, as we describe each component of the construction, we first describe it in context of the original (non-expanded) ground set of actions before switching to the expanded ground set where we simply have $T$ copies of all types and actions. Note that the distribution $D$ over types remains unchanged.

\paragraph{New Actions:} Consider arrival number $k\in[T]$ in a realization $H$. Similar to the adversarial case, we introduce a new action $\h{a}_{k,H}$ to capture the optimal offline decision at arrival $k$ in $H$. Doing this for every $k$ and every $H$ gives a set of new actions $\{\h{a}_{k,H}\}_{k\in[T],H\in\mh}$ that together represent the optimal offline decision at every arrival of every possible instance in $\mh$. 

In the expanded ground set framework,   every action has $T$ copies, resulting in the following set of new actions:
  \[\h{\curac}_T=\{\h{a}_{\tau,k,H}\mid\, \tau\in[T],\, k\in[T], H\in \mh \}.\] 
Here, the set $\{\h{a}_{\tau,k,H}\}_{\tau\in[T]}$ consists of $T$ copies of $\h{a}_{k,H}$. Next, we distribute the new actions among the various types. Let $\no_T=\curac_T\cup \h{\curac}_T$ denote the new ground set of actions.

\paragraph{Enlarged Action Set for Each Arrival Type:} In the original (non-expanded) ground set, we augment the action set of type $t$ with the new set
\[\h{\ac}_{t}=\{\h{a}_{k,H}\mid H(k)=t,\, k\in [T],\, H\in\mh\},\] 
that captures the offline decision at every type-$t$ arrival across all realizations in $\mh$. In the expanded ground set, this augmentation is applied at the level of individual arrivals. Specifically, for each arrival $\tau$ of type $t$, we define
 \[\h{\ac}_{\tau,t}=\{\h{a}_{\tau,k,H}\mid H\in\mh,\,H(k)=t\}.\] 
Consequently, arrival $\tau$ of type $t$ in instance $\h{H}$ of $\h{I}$ has the set of feasible actions $\ac_{\tau,t}\cup \h{\ac}_{\tau,t}$. A key feature of the construction is that feasibility depends only on the arrival type. In particular, the action $\h{a}_{\tau,k,H}$ is feasible at arrival $\tau$ in every realization $H'\in \mh$ that satisfies $H'(\tau)=H(k)$, independent of the rest of the realization. 

\paragraph{A Canonical Feasible Solution:} Before extending $F_T$ to the enlarged ground set $\no_T$, observe that the set
\[\h{\ac}_H=\{\h{a}_{\tau,\tau,H}\mid \tau\in[T]\}\] 
is a feasible solution to instance $\h{H}$ of $\h{I}$, since it includes exactly one (feasible) action from $\h{\ac}_{\tau,H(\tau)}$ for each arrival $\tau\in[T]$. In the next step of the construction, we will define the extended objective function so that this canonical solution satisfies $F_T(\h{\ac}_H)=\optc(H)$.}
  
\paragraph{Objective:} We now extend $F_T$ over the ground set $\no_T$. Although $\gc$ does not have stochastic outcomes, for the purpose of this definition we use the expanded set of outcomes $N_T$ defined in \eqref{expandout3} and let 
\begin{equation}\label{nassignuiid}
	N(\h{a}_{\tau,k,H})=N_{\tau,H(k)}\quad \forall\,\h{a}_{\tau,k,H}\in \no_T\quad \text{and }\quad N(\h{\ac}_H)=N(\ac_H). %Let $H(\tau)$ denote the type of arrival $\tau$ in $H$ and 
\end{equation}

For every $S\subseteq \no_T$ such that $S_1=S\cap \curac_T$ and $S_2=S\cap \h{\curac}_{T}$, let
\begin{eqnarray}
	F_T(S)&\coloneqq\, &\sum_{P_1\in \curn_H}\sum_{P_H\in \cal{X}_{H} \,\, \forall H\in \mh} \gamma(P_1) \, \prod_{H\in \mh}\alpha^c_H(P_H)\, f_T\left(\cup_{H\in \mh} (N(S_2)\cap P_H)\cup (N(S_1)\cap P_1)\right).\label{defFT}% 
\end{eqnarray}	
We have defined $F_T$ as the expected value of the realized outcomes when we independently sample the mappings $P_1$ and $\{P_H\}_{H\in \mh}$ w.p.\ $\gamma(P_1)$ and $\{\alpha^c_H(P_H)\}_{H\in \mh}$ respectively. 
When $S_2\subseteq \h{\ac}_H$ for some $H$, this definition is consistent with the function defined in \eqref{def1}. In particular, 
\begin{eqnarray}
	F_T(\h{\ac}_H)&=&\sum_{P\in \cal{X}_H} \alpha^c_H(P)\,\,f_T\left( N(\ac_H)\cap P)\right)=\sum_{P\in \cal{X}_H} \alpha^c_H(P)\,\,f_T(P)=\optc(H)\label{optinequiid}, 
\end{eqnarray}
here we used \eqref{nassignuiid} and the fact that 
$\sum_{P_1\in \curn_T}\gamma(P_1)=1$ and $\sum_{P_{H'}\in \cal{X}_{H'}}\alpha^c_{H'}=1$ for every $H'\in \mh$. 
This completes the definition of instance $\gc$. 
Similar to the function $F$ that we defined in \eqref{def1}, we have that $F_T$ is monotone submodular. For completeness, we include a proof of this after the present proof concludes (see Lemma \ref{ftsubmod}). Overall, $\gc$, which is given by feasible action sets $\ac_{\tau,t}\cup \h{\ac}_{\tau,t}$ and $F_T$, and the distribution $D$ together define an occurrence $\h{I}$ of OSW with UIID arrivals. Next, we show properties $(i)$ and $(ii)$.

To see $(i)$, let $\bm{H}$ denote a random sequence of $T$ types (and induced instance of $I$) and let $\h{\bm{H}}$ denote the induced instance of $\h{I}$. Observe that
\begin{eqnarray*}
	E_D[\opt(\h{\bm{H}})]\, \geq\, E_D[F_T(\h{\ac}_{\bm{H}})]\, =\, E_D[\optc(\bm{H})]\,\geq\, E_D[\opt(\bm{H})].
\end{eqnarray*}
The first inequality follows from the fact that $\h{\ac}_{\bm{H}}$ is a feasible solution of instance $\h{\bm{H}}$. The equality follows from \eqref{optinequiid} and the final inequality follows from \eqref{lem4uiid}.

 To establish $(ii)$, it is sufficient to show that $\aalg(\h{H})=\aalg(H)$. Similar to the adversarial case, we establish this by proving that \grd\ does not choose any of the new actions of instance $\h{H}$. Specifically, we show that on $\h{H}$, \grd\ chooses from the original action set $\ac_{\tau,H(\tau)}$ at every arrival $\tau\in[T]$. The key ingredient is as follows: %we show that
% The main ingredient that we use to show this is as follows. 
For every arrival $\tau\in[T]$ of $\h{H}$ with feasible action set $\ac_{\tau,H(\tau)}\cup \h{\ac}_{\tau,H(\tau)}$, and every set $S\subseteq \curac_T\backslash \ac_{\tau,H(\tau)}$, we have
 \[F_T(\h{a}_{\tau,k,H'}\mid S)\leq \max_{a_{\tau}\in\ac_{\tau,H(\tau)}} F_T(a_\tau\mid S)\quad \forall\, \h{a}_{\tau,k,H'}\in\h{\ac}_{\tau,H(\tau)}.\]%H'\in\mh,\, k\in[t_{H'}].\]
 Therefore, in terms of marginal value, every action in $\h{\ac}_{\tau,H(\tau)}$ is dominated by the best action in $\ac_{\tau,H(\tau)}$ provided that $S$ does not contain any action from $\h{\curac}_T\cup \ac_{\tau,H(\tau)}$. This is a generalization of Lemma \ref{key}, which is formally stated and proved after the proof of this theorem (see Lemma \ref{keyuiid}). Using this result for $\tau=1$ and $S=\emptyset$, we have that \grd\ does not choose a new action at arrival 1. If \grd\ does not choose any new action prior to arrival $\tau$, then using the result above we have that \grd\ chooses an original action at $\tau$, completing the proof.
% we have that the marginal value of new actions $\h{a}_{\tau,t,k,H}$ is weakly dominated by some action from $\ac_{\tau,t}$ if we have not included any new action in $S$. Using this inequality we can show that \grd\ never picks a new action by induction. Formally,
%let $\alg_\tau$ denote the set of actions chosen prior to arrival $\tau$ by \grd\ on instance $\h{H}$  (let $\alg_1=\emptyset$). Let $t$ denote the type of arrival $\tau$ of instance $\h{H}$. Suppose that $\alg_t\subseteq \curac_T\backslash \ac_{\tau,t}$ (induction assumption) and note that this is true for $t=1$ (base case).	%Let $a^*_t$ be 
%Then, using Lemma \ref{keyuiid} with $S=\alg_t$, we have %,
%\[ F(\h{a}_{\tau,t,k,H'}\mid \alg_{t})\leq \max_{a_{\tau}\in \ac_{\tau,t}} {F} (a_{\tau}\mid \alg_t)\quad \forall H'\in \mh.\] 
%Thus, 
%that \grd\ chooses an action from $\ac_{\tau,t}$ at arrival $\tau$. By induction, \grd\ chooses the same set of actions on instances $\h{H}$ and $H$. % proving $(ii)$.  %$\alg(H)=\alg(\h{H})$.
	\hfill\Halmos\end{proof}

\begin{lemma}\label{ftsubmod}
The function $F_T$ defined in \eqref{defFT} is monotone submodular.	
	\end{lemma}
\begin{proof}{Proof.}
	Fix arbitrary sets $P_1\in \curn_T$ and $P_H\in \cal{X}_H\,\, \forall\, H\in \mh$. Let $P_2=\cup_{H\in \mh} P_H$ and define $\psi:2^{\no_T}\to 2^{N_T}$ as follows,
	\begin{equation*}\label{psidefuiid}
		\psi(S)= (N(S\cap \curac_T)\cap P_1) \cup (N(S\cap \h{\curac}_T)\cap P_2).
	\end{equation*}
	Observe that $F_T$ is a linear combination of functions of the form $f_T(\psi())$. Since the family of monotone submodular functions is closed under addition, it suffices to show that $f_T(\psi(\cdot)):2^{\no_T}\to \mathbb{R}_{+}$ is a monotone submodular function when $f_T$ is monotone submodular. The rest of the proof mimics the proof of Lemma \ref{gcprop1} and we repeat it for completeness.
	
		\paragraph{Monotonicity:} Suppose that $f_T$ is monotone. Observe that,
	\begin{equation}\label{psialtuiid}
		\psi(S)=\cup_{x\in S} \psi(x	)\quad \forall S\subseteq \no_T.
	\end{equation}
	%	\[\psi(S\cup \{a\})=\psi(S)\cup \psi(\{a\})\supseteq \psi(S)\quad \forall S\subseteq \no,\,\, a\in \no\backslash S.\]
	Thus, $f(\psi(\cdot))$ is monotone because $f$ is monotone and $\psi(A)\supseteq \psi(B)$ for all $B\subseteq A$. 
	\smallskip
	
	\paragraph{Submodularity:} Suppose that $f_T$ is a monotone submodular function. Consider an sets $B\subseteq A\subseteq \no_T$, and set $C\subseteq \no_T\backslash A$, we have,
	\begin{eqnarray*}
		f_T(\psi(A\cup C))-f_T(\psi(A))&=&f_T(\psi(A\cup C)\backslash \psi(A) \mid \psi(A)),\\
		&\leq& f_T(\psi(A\cup C)\backslash \psi(A) \mid \psi(B)),\\
		%&= &	f_T(\cup_{a'\in A\cup\{a\}}\psi(a')\backslash \cup_{a'\in A} \psi(a') \mid \psi(B)),\\
		&=&	f_T(\psi(C)\backslash \psi(A) \mid \psi(B)),\\
		&\leq&	f_T(\psi(C)\backslash \psi(B) \mid \psi(B)),\\
		&= & f_T(\psi(B\cup C)\backslash \psi(B) \mid \psi(B)),\\
		&=& 	f_T(\psi(B\cup C))-f_T(\psi(B)).
		%	f_T(\psi(C\cup S\cup\{a\})\backslash \psi(S\cup\{a\}) \mid \psi(S)),\\
		%	&\leq & f_T(\psi(C\cup S)\backslash \psi(S) \mid \psi(S)),
	\end{eqnarray*}
The first inequality follows from submodularity of $f_T$. The second inequality follows from the monotonicity of $f_T$. The second and third equalities follow from \eqref{psialtuiid}. 	This proves that $f_T(\psi(\cdot))$ is submodular. 
	
	\hfill\Halmos\end{proof}

\begin{lemma}\label{keyuiid}
	For every arrival $\tau\in [T]$ of sequence $H\in \mh$ and every set $S\subseteq \curac_T\backslash \ac_{\tau,H(\tau)}$, we have
	\[F_T(\h{a}_{\tau,k,H'}\mid S)\leq \max_{a_{\tau}\in \ac_{\tau,t}} F_T(a_{\tau}\mid S)\quad \forall\, \h{a}_{\tau,k,H'}\in \h{\ac}_{\tau,H(\tau)}.\] 
\end{lemma}
\begin{proof}{Proof.} %To prove this lemma one could follow the same recipe as the adversarial case and show generalizations of Lemma \ref{} and Lemma \ref{} first. We skip the intermediate steps and give a direct proof. 
	Fix an arbitrary action $\h{a}_{\tau,k,H'}\in \h{\ac}_{\tau,t}$. For brevity, let $t=H(\tau)$ and note that $H'(k)=t$ because $\h{a}_{\tau,k,H'}\in \h{\ac}_{\tau,t}$. We have,
	\begin{eqnarray}
		F_T(\h{a}_{\tau,k,H'}\mid S)	& = &  \sum_{P\in \curn_T}\sum_{P_{H'}\in \cal{X}_{H'}}\gamma(P)\, \alpha^c_{H'}(P_{H'})\, f_T(N_{\tau,t}\cap P_{H'} \mid N(S)\cap P),\label{eq0u}\\		
		& = &  \sum_{P\in \curn_T}\gamma(P)\,\left( \sum_{P_{H'}\in \cal{X}_{H'}}\sum_{e_{\tau}\in N_{\tau,t}\cap P_{H'}}\alpha^c_{H'}(P_{H'})\, f_T(e_{\tau} \mid N(S)\cap P)\right),\label{eq1u}\\	
		& =&   \sum_{P\in \curn_T}\gamma(P)\,\left( \sum_{e_{\tau}\in  N_{\tau,t}}f_T(e_{\tau} \mid N(S)\cap P)\left(\sum_{P_{H'}\in \cal{X}_{H'} \mid P_{H'}\ni e_{\tau}}\alpha^c_{H'}(P_{H'})\right)\right),\nonumber\\
		&= &  \sum_{P\in \curn_T}\gamma(P)\left(\sum_{a\in \ac_{\tau,t}}\sum_{e_{\tau}\in  N(a_{\tau})} y^c_{a_{\tau}, H'}\,\, p_{e_{\tau}} \,f_T(e_{\tau} \mid N(S)\cap P)\right),\label{eq2u}\\
		& =&  \sum_{a_{\tau}\in  \ac_{\tau,t}}  y^c_{a_{\tau}, H'} \left(\sum_{P\in \curn_T}\gamma(P)\,\left(\sum_{e_{\tau}\in N(a_{\tau})}p_{e_{\tau}}\,f_T(e_{\tau} \mid N(S)\cap P)\right)\right),\nonumber\\%\label{eq3u}\\
		& =&  \sum_{a_{\tau}\in  \ac_{\tau,t}}  y^c_{a_{\tau}, H'}\, F(a_{\tau}\mid S),\label{eq4u}\\
		%	& \leq &  \sum_{e\in  N_t}  y^c_e \left[\sum_{i\in I} F(e\mid \h{\alg}(t))\right],\label{eq5}\\
%		& \leq&  \sum_{a_{\tau}\in  \ac_{\tau,t}} y^c_{a_{\tau}, H} \left(\max_{a'_{\tau}\in \ac_{\tau,t}} F(a'_{\tau}\mid S)\right),\nonumber\\ %\label{eq6}\\
		&\leq &  \max_{a_{\tau}\in \ac_{\tau,t}} F(a_{\tau}\mid S)\label{eq7u}.
	\end{eqnarray}
We obtain equality \eqref{eq0u} by ignoring the realized mapping $P_{H''}$ for every instance $H''$ ($\neq H'$) because the sets $S\cup\{\h{a}_{\tau,k,H'}\}$ and $\h{\ac}_{H''}$ are disjoint for all $H''\in \mh\backslash \{H'\}$. We get equality \eqref{eq1u} by using the fact that the set $N_{\tau,t}\cap P_{H'}$ is a singleton.
Equality \eqref{eq2u} follows from \eqref{alphauiid}. %$(i)$ %$\beta_i$ is a feasible solution for the optimization problem that defines $F^c_i$ and 
%Identity \eqref{eq3u} follows by rearranging the summation over $\{e\in  N(a)\}$ and $\{P_1\subseteq \curn\}$. 
Inequality \eqref{eq7u} follows from the fact that $Y^c_{H'}\in \Delta(\ac_{H'})$, which implies that $\sum_{a_{\tau}\in \ac_{\tau,t}} y^c_{a_\tau,{H'}}= 1$.  
It remains to show equality \eqref{eq4u}. For every $a_{\tau}\in \ac_{\tau,t}$ and $X\subseteq \curac_T\backslash \ac_{\tau,t}$, we have
\begin{eqnarray*}
	F_T(a_{\tau}\mid X)&=&\sum_{P\in \curn_T}\gamma(P)\left[f_T(N(X\cup a_{\tau})\cap P)-f_T(N(X)\cap P)\right],\\
	&=&\sum_{P\in \curn_T}\gamma(P)\,f_T(N(a_{\tau})\cap P\mid N(X)\cap P),\\ 	
	&=&\sum_{P\in \curn_T}\sum_{e_{\tau}\in P\cap N(a_{\tau})}\gamma(P\backslash N(a_{\tau}))\,p_{e_{\tau}}\,f_T(e_{\tau}\mid N(X)\cap P),\\ 
%	&\overset{(*)}{=}&\sum_{P\in \curn_T}\gamma(P\backslash N(a_{\tau})) \left(\sum_{e_\tau\in N(a_\tau)}p_{e_\tau}\,f_T(e_\tau\mid N(X)\cap P)\right),\\ 
	&\overset{(*)}{=}&\sum_{P\in \curn_T}\gamma(P\backslash N(a_{\tau}))\left[\sum_{e'_\tau\in N(a_\tau)} p_{e'_\tau}\left(\sum_{e_\tau\in N(a_\tau)}p_{e_\tau}\,f_T(e_\tau\mid N(X)\cap (P\cup e'_\tau))\right)\right],\\ 
	&=&\sum_{Q\in \curn_T}\gamma(Q)\left(\sum_{e_\tau\in N(a_\tau)}p_{e_\tau}\,f_T(e_\tau\mid N(X)\cap Q)\right), 		
	% 	&=&\sum_{N\supseteq Z\ni N(a) }\gamma(Z,N)\, f_T(N(a)\mid N(Z\cap S)),\\
	% 	&=&\sum_{X\subseteq N(S)}\gamma(X\cup N(a),N(S))\, f_T(N(a)\mid X).
\end{eqnarray*}
To obtain equation $(*)$, we use the fact that for $X\subseteq \curac_T\backslash \ac_{\tau,t}$, we have,
\[N(X)\cap (P\cup e'_\tau)=N(X)\cap P,\] 
and the following identities,
\[\sum_{e'_\tau\in N(a_\tau)}p_{e'_{\tau}}=\sum_{e\in N(a)}p_e=1.\]	%Inequality \eqref{eq5} follows from a rearrangement of terms and the monotonicity of $F\,\, \forall i\in I$. 
%Inequality \eqref{eq6} follows from the definition of $a^*_t$. % as the configuration in $N_t$ that maximizes marginal increase in objective at arrival $t$. 

	\hfill\Halmos	\end{proof}

\section{The Single-Arrival Problem and Approximate-Greedy}\label{appx:sap}
% The \sap\ instances encountered in \grd\ %we have, $w_e=\sum_{P\in \curn(\alg_t)} \gamma(P)\,f(e\mid P)$. 
%can be efficiently solved 
As noted in Remark \ref{sapremark}, the \grd\ algorithm solves an instance of the \sap\ (restated below) at each arrival $t\in [T]$, with weights (marginal reward values) $w_{e,\alg_t}=\sum_{P\in \curn(\alg_t)} \gamma(P)\,f(e\mid P)$. 
\[\text{\sap\ at $t$:}\qquad \argmax_{a\in \ac_t} \sum_{e\in N(a)}p_e\,w_{e,\alg_t}.\]
Thus far, we have assumed that the \sap\ can be solved efficiently at each arrival using value oracles for $f$ and $F$. However, the problem can be quite challenging to solve in various applications due to the following reasons: 
\begin{enumerate}[$(i)$]
	\item \emph{\sap\ may be NP-hard:}  In certain settings, the set of feasible actions $\ac_t$ is defined implicitly and can be exponentially large in the size of the instance, making the \sap\ NP-hard. For example, in online assortment optimization, the \sap\ is equivalent to the problem of finding a revenue optimal assortment, which is an NP-hard problem for many choice models \citep{assorthard}.
	\item \emph{Computing the weights may be difficult:} To solve the \sap\ at arrival $t$, we may need to compute the weight $w_{e,\alg_t}$ for each $e\in N_t$. 	{\color{black} In some settings, such as online assortment optimization, the weights can be computed in polynomial-time (see Appendix \ref{appx:sapeg})}. However, in general, $w_{e,\alg_t}$ is a sum of exponentially many terms, making it computationally infeasible to compute the exact value. %there are settings, such as two-sided assortment optimization, where online matching with patience, 
\end{enumerate}	

These challenges are common in combinatorial optimization, both in online and offline settings \citep{negin, asadpour2016maximizing}. The standard approach to addressing them is to find an approximately optimal solution to the \sap, which we will discuss in more detail below. Although this may reduce the competitive ratio of the online algorithm, we show that by extending the reduction technique, the impact on the competitive ratio remains the same in both settings with and without stochastic outcomes.  

In the following discussion, let $w_e(P)=f(e\mid P)$ for all $P\in \curn(\alg_t)$. We assume that $w_e(P)$ is easy to compute given $P$. For brevity, we omit the subscript $\alg_t$ and use the shorthand $w_e=\sum_{P\in \curn(\alg_t)} \gamma(P)\,f(e\mid P)$. Note that $w_e=E[w_e(P)]$, where the expectation is taken w.r.t.\ the stochastic outcomes. 

\subsubsection*{Approximation Oracle for SAP.}

When \sap\ is NP-hard, we assume the availability of an oracle that outputs an $\eta$-approximate solution for the \sap\ with $\eta\in(0,1]$. Specifically, for any instance of the \sap, the oracle returns a solution $a_{t}$ such that:
\[\text{$\eta$-approximation for \sap:}\qquad \sum_{e\in N(a_t)}p_e\,w_{e}\,\geq\, \eta\, \left(\max_{a\in \ac_t} \sum_{e\in N(a)}p_e\,w_{e}\right).\] 
There exists a large body of literature on approximation algorithms for the \sap\ in various settings, including assortment optimization and stochastic rewards with patience. For further references, we refer to \cite{negin} and \cite{brubach2}.  

\subsubsection*{Sample Average Approximation of $w_e$.} 
When $w_{e}$ cannot be computed exactly, we use a sample average approximation (SAA) of $w_e$. Specifically, we sample $J$ (partial) mappings $P^j\in \curn(\alg_t)$ for $j\in[J]$ and compute, 
\[\text{SAA of $w_e$:}\qquad \wapx_e=\frac{1}{J}\sum_{j\in [J]} w_e(P^j)\qquad \forall e\in N_t,\]
Note that each partial mapping $P^j$ only includes the realized outcomes of the $t-1$ actions in the set $\alg_t$. Using a standard Chernoff bound, we can derive the following result. %taking the sample average approximation of $f(e\mid P^j\cap N(\alg_t))$, i.e.,
\begin{lemma}\label{chernoff}
	%Let $0\leq w_e\leq f(e)\,\, \forall e\in N_t$. 
	For $\delta_t\in(0,1)$, given $J_t=\frac{4\log \frac{|N_t|}{\delta_t}}{\delta_t^2}$ independent samples from $\curn(\aalg_{t})$, we have
	\[P\big(|\wapx_e -w_e| < \delta_t f(e)\,\, \forall e\in N_t \big)\geq %1-|N_t|\,e^{-\frac{J_t\delta_t^2}{4}}=
	1-\delta_t.\]
\end{lemma}

We include the proof in Appendix \ref{appx:saproof}. %{\color{red} Upper bound for \gla\ $w_e$? Cannot have max value over all $N_t$ because that may be too large. Need $f(e)$ or similar. Note that $w_e\leq f(e)$ for every submodular function $f$, so $\frac{w_e}{f(e)} \leq 1$. }
%Note that the estimate $\wapx_e$ is always non-negative. 
Later, we discuss the appropriate values of $\{J_t\}_{t\in [T]}$ to ensure that $\sum_{t\in[T]} \delta_t $ is as small as desired, without requiring any knowledge of the number of arrivals $T$. %In the following, we use $J_t$ and the additive error $\delta_t$ interchangeably depending on the context. 

\subsubsection*{The Approximate-\grd\ Algorithm.} 
Let Apx-\grd\ denote the modified Greedy algorithm where we select an approximately optimal solution to the \sap\ instance with approximate weights at each arrival. Specifically, at arrival $t\in[T]$, we compute the approximate weights $\wapx_e$ using $J_t$ samples and choose the action determined by the $\eta$-approximation oracle for the \sap\ with weights $\{\wapx_e\}_{e\in N_t}$. 
Let $\beta(\eta,\sum_{t\in T} \delta_t)$ denote the competitive ratio of Apx-\aalg. Note that Apx-\aalg\ is a randomized algorithm and the competitive ratio compares the expected performance of the algorithm with the optimal offline. We expect the performance of Apx-\aalg\ to degrade as the quality of the \sap\ solution deteriorates, i.e., as $\eta$ decreases and $\sum_{t\in [T]} \delta_t$ increases. We establish the following competitive ratio guarantees with deterministic and stochastic outcomes. 
%To provide a concrete example, we present the pseudocode for Apx-\grd\ in Appendix \ref{}, and derive the competitive ratios stated below assuming no stochastic outcomes. 
%{\color{red} Make sure $O(1)$ is defined somewhere and observe that we have used $3$ instead of O(1) below. 
	%Correct statements that say - this algorithm is x-competitive- when the proof only shows a lower bound. Define what at least x-competitive means.
	%}
\begin{theorem}\label{sap1}
	Given an $\eta$-approximate oracle for the \sap\ and $J_t=\frac{4\log \frac{N_t}{\delta_t}}{\delta_t^2}$ independent samples of the partial mapping at every arrival $t\in [T]$, the \apx\grd\ algorithm is at least $\left(\frac{\eta}{1+\eta}-\frac{3}{2}\,\sum_{t\in [T]}\delta_t\right)$-competitive for OSOW in the adversarial model.
	%	Consider the \apx\grd\ algorithm equipped with an $\eta$-approximate oracle for \sap\ and $J_t=\frac{4\log \frac{N_t}{\delta_t}}{\delta_t^2}$ independent samples of the partial mapping at every arrival $t\in [T]$. We have that,
	%	\begin{enumerate}[$(i)$]
		%		\item \apx\grd\ is at least $\left(\frac{1}{1+\eta^{-1}}-3\,\sum_{t\in [T]}\delta_t\right)$-competitive for OSOW with adversarial arrivals.
		%		\item \apx\grd\ is at least $\left(1-\frac{1}{e^{\eta}}-3\,\sum_{t\in [T]}\delta_t\right)$-competitive for OSW with UIID arrivals.
		%	\end{enumerate}
\end{theorem}
\begin{theorem}\label{sap2}
	In the adversarial, RO, and UIID models, \apx\grd\ algorithm has the same competitive ratio both with and without stochastic outcomes. % is at least $\beta(\eta,\epsilon)$-competitive for deterministic outcomes
	%	then it is at least $\beta(\eta,\epsilon)$-competitive for stochastic outcomes.
\end{theorem}
The proofs of Theorem \ref{sap1} and Theorem \ref{sap2} can be found in Appendix \ref{appx:sap1} and Appendix \ref{appx:sap2}, respectively. Note that $\frac{\eta}{1+\eta}> \frac{\eta}{2}$ for all $\eta\in (0,1)$. 
For deterministic outcomes, we believe that results similar to the one in Theorem \ref{sap1} should hold in the other arrival models as well as for the Greedy-like algorithms discussed in Section \ref{sec:gla}. Formally deriving these individual results is beyond the scope of this paper.

\subsubsection*{Choosing $J_t$ (and $\delta_t$).} \label{deltat} Ideally, we want $\sum_{t\in [T]}\delta_t=O(\epsilon)$, where $\epsilon\in(0,1)$ is a tunable parameter and the big-$O$ suppresses constant factors. This condition holds when $\delta_t=\frac{\epsilon}{T}\,\, \forall t\in [T]$, i.e., when we draw $\frac{4}{\epsilon^2}T^2\log \frac{T\,|N_t|}{\epsilon}$ samples at each arrival. When $T$ is unknown, it suffices to set $\delta_t=\frac{\epsilon}{t^2}$, yielding
\[J_t=\frac{4}{\epsilon^2}t^{4}\log \frac{t^2\,|N_t|}{\epsilon} \qquad \forall t\in[T],\]
since $\sum_{t=1}^{+\infty}\frac{\epsilon}{t^2}= \frac{\epsilon\,\pi^2}{6}$. We note that our goal was not to determine the minimum number of required samples; a smaller number may suffice with further refinements.

For illustration,  in the context of online two-sided assortment optimization, where each action corresponds to a subset of $I$ (the set of resources) and each action has at most $|I|+1$ outcomes since an arrival chooses at most one resource, we have $|N_t|=O(|I|\,2^{I})$. In the case of stochastic rewards with patience, we get $|N_t|=O(2^{I}\times |I|!)$, as there are $O(|I|!)$ possible actions, each corresponding to a permutation of the resources. In both cases, $\log |N_t|=O(|I|\log |I|)$, which remains polynomial in the problem size.

\subsection{The Single Arrival Problem in Special Cases of OSW-SO}\label{appx:sapeg}
As mentioned earlier, the Single Arrival Problem, \sap, given by
\[\argmax_{a\in \ac_t}\, F(a\mid S),\]
can be a computationally challenging optimization problem. The marginal value $F(a\mid S)$ is the expected increase in total reward due to action $a$. In the following, we examine the complexity of computing $F(a\mid S)$ in two settings.

\emph{Online assortment optimization:} In this setting, each action corresponds to a subset of the set of resources $I$. For the sake of simplicity, consider the unit capacity setting, i.e., $c_i=1$ for all $i\in I$. This is without loss of generality because we can always split a resource with more than 1 unit of capacity into several identical resources each with unit capacity. Now,  when assortment $a\subseteq I$ is shown to arrival $t$, the arrival chooses resource $i\in a$ with probability $\phi_t(i,a)$. This generates a reward of $r_i$ if no arrival before $t$ has chosen resource $i$. Let $p_i(t)$ denote the probability of the event that no arrival before $t$ chose resource $i$. Then,
\begin{eqnarray*}
	F(a\mid S)%&=&\sum_{e\in N(a)} p_e\, \left(\sum_{P\in \curn(S)} \gamma(P)\, f(e\mid P)\right),\\
	&=& \sum_{e\in N(a)} p_e \left(\sum_{P\in \curn(S)} \gamma(P)\, \sum_{i\in I} f_i(e\mid P)\right),\\
		&=& \sum_{e\in N(a)} p_e \left(\sum_{P\in \curn(S)} \gamma(P)\, \sum_{i\in I} r_i \mathbbm{1}(i \text{ unmatched in } P \text{ and } e_{i}=1)\right),\\
			&=& \sum_{i\in I} r_i \sum_{e\in N(a)\mid e_i=1} p_e \left(\sum_{P\in \curn(S)} \gamma(P) \mathbbm{1}(i \text{ unmatched in } P)\right),\\
					&=& \sum_{i\in I} r_i\, p_i(t)\sum_{e\in N(a)\mid e_i=1} p_e,\\
						&=& \sum_{i\in I} r_i\, p_i(t)\, \phi_t(i,a),\\
	&=&\sum_{i\in a} r_i\,\phi_t(i,a)\, p_i(t).
\end{eqnarray*}
Here, we use the fact that $p_i(t)=\sum_{P\in \curn(S)} \gamma(P) \mathbbm{1}(i \text{ unmatched in } P)$ and $\phi_t(i,a)=0\,\, \forall i\not\in a$. Thus, the \sap\ problem in assortment optimization is equivalent to finding the revenue optimal assortment over ground set $I$, with item prices given by $\{r_i\,p_i(t)\}_{i\in I}$. The probabilities $p_i(t)$ are easy to compute. Notably, if assortment $a_t$ is shown to arrival $t$, the probabilities for arrival $t+1$ are given by, 
\[p_i(t+1)=p_i(t)(1-\phi_t(i,a_t))\quad \forall i\in I.\]
Clearly, $p_i(1)=1$ for all $i\in I$ and it is straightforward to update the probabilities after each arrival. Therefore, the \sap\ problem for assortment optimization is as hard or as easy as the problem of finding a revenue optimal assortment. %Specifically, it is easy to compute the weights $w_e$ of the \sap\ in assortment optimization.

\emph{Two-sided assortment optimization:} As we will see, computing the \sap\ weights exactly can be much more challenging in two-sided assortment optimization. At a high level, this is because the \sap\ objective aims to maximize the expected total gain in the probability that some resource selects arrival $t$, and this probability may depend on the choices made by all previous arrivals. Formally, let $T_{i,t}$ denote the set of arrivals up to and including $t$ that choose resource $i\in I$. Let $\phi_i(T_{i,t})$ denote the probability that resource $i$ chooses an arrival from the set $T_{i,t}$. Let $a_{\tau}$ denote the (non-adaptive) assortment shown to arrival $\tau\in[t-1]$. Then the marginal increase in objective from showing assortment $a_t$ to arrival $t$ is as follows:
\[F(a_t\mid \{a_1,\cdots, a_{t-1}\})=\sum_{i\in I} \left[\sum_{T_{i,t}\subseteq [t]}\left(\prod_{\tau\in T_{i,t}}\phi_{\tau}(i,a_{\tau})\prod_{\tau\in [t]\backslash T_{i,t}}(1-\phi_{\tau}(i,a_{\tau}))\right) \big(\phi_i(T_{i,t})-\phi_i(T_{i,t}\backslash \{t\})\big)\right].\]
Here, $\prod_{\tau\in T_{i,t}}\phi_{\tau}(i,a_{\tau})\prod_{\tau\in [t]\backslash T_{i,t}}(1-\phi_{\tau}(i,a_{\tau}))$ is the probability that $T_{i,t}\subseteq [t]$ is the set of arrivals that choose $i$ and $\phi_i(T_{i,t})-\phi_i(T_{i,t}\backslash \{t\})$ is the increase in the probability that resource $i$ chooses an arrival, conditioned on the set $T_{i,t}$. Note that $\phi_i(T_{i,t})-\phi_i(T_{i,t}\backslash \{t\})=0$ when $t\not\in T_{i,t}$. 
\subsection{Proof of Lemma \ref{chernoff}}\label{appx:saproof}
\begin{repeatlemma}[Lemma \ref{chernoff}.]
	Given $J_t=\frac{4\log \frac{|N_t|}{\delta_t}}{\delta_t^2}$ independent samples from $\curn(\aalg_{t})$, we have,
	\[P\big(|\wapx_e -w_e| < \delta_t f(e)\,\, \forall e\in N_t \big)\geq %1-|N_t|\,e^{-\frac{J_t\delta_t^2}{4}}=
	1-\delta_t.\]
\end{repeatlemma}
\begin{proof}{Proof.} 	We first note that all weights $w_e$ and their approximations $w_e^{\mathrm{apx}}$ lie in the interval $[0,f(e)]$. This holds because
	\[
	0 \le f(e \mid P) \le f(e)
	\]
	for any monotone arrival-consistent submodular-order (and submodular) function $f$, for every $e \in N_t$ and every $P \subseteq \bigcup_{\tau < t} N_\tau$. 
	We use the following version of the Chernoff-Hoeffding bound (Lemma 9 in \cite{vondraktut}):
	
	\emph{(Chernoff Bound)} Let $X_1,X_2,\cdots, X_n$ be $n$ independent random variables such that $X_i\in[0,b]$ for some $b\in(0,1)$ and for all $i$. Let $X=\frac{1}{n}\sum_{i\in[n]} X_i$. Then, for all $\delta>0$,
	\begin{equation}\label{chernof}
	P\left(\left|\frac{1}{n}\sum_{i\in[n]}X_i -X\right| \geq \delta b\right)<e^{-\frac{n\delta^2}{4}}.
\end{equation}
	
	Fix an arbitrary outcome $e\in N_t$. Using inequality \eqref{chernof} with $\delta=\delta_T$ and $n=\frac{4\log \frac{|N_t|}{\delta_t}}{\delta_t^2}$, we have,
	\[P\left(|\wapx_e -w_e| > \delta_t f(e)\right)\leq e^{-\frac{4\log \frac{|N_t|}{\delta_t}\,\delta_t^2}{4\delta_t^2}}=\frac{\delta_t}{|N_t|}.\]
	Applying the union bound, we have,
	\[P\left(|\wapx_e -w_e| \geq \delta_t f(e)\,\, \forall e\in N_t \right)\leq \delta_t,\]
	as desired.

	\hfill\Halmos	\end{proof}

\subsection{\apx\grd\ in the Adversarial Model}\label{appx:sap1}
\begin{repeattheorem}[Theorem \ref{sap1}.]
	Given an $\eta$-approximate oracle for the \sap\ and $J_t=\frac{4\log \frac{N_t}{\delta_t}}{\delta_t^2}$ independent samples of the partial mapping at every arrival $t\in [T]$, the \apx\grd\ algorithm is at least $\left(\frac{\eta}{1+\eta}-\frac{3}{2}\,\sum_{t\in [T]}\delta_t\right)$-competitive for OSOW in the adversarial model.
	%	Consider the \apx\grd\ algorithm equipped with an $\eta$-approximate oracle for \sap\ and $J_t=\frac{4\log \frac{N_t}{\delta_t}}{\delta_t^2}$ independent samples of the partial mapping at every arrival $t\in [T]$. We have that,
	%	\begin{enumerate}[$(i)$]
		%		\item \apx\grd\ is at least $\left(\frac{1}{1+\eta^{-1}}-3\,\sum_{t\in [T]}\delta_t\right)$-competitive for OSOW with adversarial arrivals.
		%		\item \apx\grd\ is at least $\left(1-\frac{1}{e^{\eta}}-3\,\sum_{t\in [T]}\delta_t\right)$-competitive for OSW with UIID arrivals.
		%	\end{enumerate}
\end{repeattheorem}
\begin{proof}{Proof.}
	Consider an arbitrary instance $G$ of OSOW. Recall that \nopt\ is the offline benchmark for OSOW. Let $\opt_{T+1}$ denote the output of \nopt\ on instance $G$. Let $o_t=\opt_{T+1}\cap A_t$ denote the action selected at $t$ by \nopt\ and let $\opt_t=\{o_1,\cdots,o_{t-1}\}$ for every $t\in [T]$. Note that \apx\grd\ is a randomized algorithm due to the randomness in estimating the weights at each arrival. Let $\bm{\salg}_t$ denote the (random) action chosen by \apx\grd\ at arrival $t$ and let $\bm{R}_t=\{\bm{\salg}_1,\cdots, \bm{\salg}_{t-1}\}$ denote the set of actions chosen prior to arrival $t$. Let us fix an arbitrary sample path (by fixing all the samples used for SAA).  Let $\bm{\salg}_t=\salg_t$ and let $\bm{R}_t=\alg_t$ on this sample path. %denote the set of actions chosen prior to arrival $t$. 
	We have,
	\begin{eqnarray}
		F (\opt_{T+1})&\leq &F\left(\alg_{T+1}\cup \opt_{T+1}\right),\nonumber\\
		&\leq & F\left(\cup_{t\in [T]} \{\salg_t\}\right)+ \sum_{t\in [T]} F\left(\{o_{t}\}\backslash \{\salg_{t}\}\mid \cup_{\tau\in[t-1]} \{\salg_\tau\}\right),\nonumber\\
		&\leq & F(\alg_{T+1})+ \sum_{t\in [T]} F(o_{t}\mid \alg_t),\label{apxgrd1}
	\end{eqnarray}
	here the first and the last inequalities follow from monotonicity of $F$. The second inequality follows from the submodular order property and Lemma \ref{SOineq}. Next, we will consider the expectation over randomness in \apx\alg\ on the RHS of inequality \eqref{apxgrd1}. Let $E[\cdot]$ denote the expectation w.r.t.\ randomness in \apx\grd. Using Lemma \ref{helpapx} (which is stated subsequently), we will show that, 
	\begin{equation}\label{apxgrd1+}
		E[F(\bm{\salg}_t\mid \bm{R}_t)] \geq \eta\,E[F(o_{t}\mid \bm{R}_t)]- (1+2\eta)\,\delta_t\max_{a\in \ac_t}F(a).
	\end{equation}
	First, we prove the main claim under the assumption that inequality \eqref{apxgrd1+} holds for all $t\in[T]$. We have,
	\begin{eqnarray}
		F (\opt_{T+1})&\leq & E[F(\bm{R}_{T+1})]+ \sum_{t\in [T]} E[F(o_{t}\mid \bm{R}_t)],\nonumber\\
		&\leq& E[F(\bm{R}_{T+1})]+ \sum_{t\in [T]} \frac{1}{\eta}\left(E[F(\bm{\salg}_t\mid \bm{R}_t)]+(1+2\eta)\delta_t \max_{a\in \ac_t} F(a)\right),\nonumber\\
		&=& \left(1+\frac{1}{\eta}\right)E[F(\bm{R}_{T+1})]+\left(2+\frac{1}{\eta}\right) \sum_{t\in [T]}\delta_t\,\left( \max_{a\in \ac_t} F(a)\right)\nonumber,\\
		&\leq &\left(1+\frac{1}{\eta}\right)E[F(\bm{R}_{T+1})]+\left(2+\frac{1}{\eta}\right)\,F(\opt_{T+1})\, \sum_{t\in[T]}\delta_t. \nonumber
	\end{eqnarray}
	The first inequality follows by taking expectation over randomness in \apx\grd\ on both sides of inequality \eqref{apxgrd1}. The second inequality follows from \eqref{apxgrd1+}. The (first) equality follows from the fact that $F(\bm{R}_{T+1})=\sum_{t\in[T]}F(\bm{\salg}_t\mid \bm{R}_t)$. To see the final inequality, observe that $F(\opt_{T+1})\geq \max_{t\in[T],\,a\in \ac_t}F(a)$ because the singleton action set $\{a\}$ is a feasible solution for the offline problem. Rearranging the terms on both sides of the final inequality, we obtain,
	\begin{eqnarray}
		\left(1+\frac{1}{\eta}\right)E[F(\bm{R}_{T+1})]&\geq&\left(1-\left(2+\frac{1}{\eta}\right)\sum_{t\in[T]}\delta_t\right)		F (\opt_{T+1}) ,\nonumber\\
		E[F(\bm{R}_{T+1})]&\geq &\left(\frac{\eta}{1+\eta}-\left(\frac{2\eta+1}{\eta+1}\right)\sum_{t\in[T]}\delta_t\right)		F (\opt_{T+1}),\nonumber\\
		&\geq&\left(\frac{\eta}{1+\eta}-\frac{3}{2}\sum_{t\in[T]}\delta_t\right)		F (\opt_{T+1}).\nonumber
	\end{eqnarray}
	Here, we used the fact that $\frac{2\eta+1}{\eta+1}\leq \frac{3}{2}$ for all $\eta\in(0,1]$.

	To complete the proof, we need to establish inequality \eqref{apxgrd1+}. Fix an arbitrary arrival $t\in[T]$. Since \apx\agrd\ draws independent samples for the SAA at every arrival, it suffices to show that conditioned $\bm{R}_t=\alg_t$, we have 
	\begin{equation}\label{apxgrd4}
		E_t[F(\bm{\salg_t}\mid \alg_t)]\geq\eta\,\max_{a'\in \ac_t}F(a'\mid \alg_t)- (1+2\eta)\,\delta_t\max_{a\in \ac_t}F(a),
	\end{equation}	
	here $E_t[\cdot]$ denotes the conditional expectation w.r.t.\ randomness in \apx\grd\ at arrival $t$, given a fixed sample path prior to $t$. Observe that $\max_{a'\in \ac_t}F(a'\mid \alg_t)\geq F(o_t\mid \alg_t)$. 
	Now, the inequality \eqref{apxgrd4} follows directly from Lemma \ref{helpapx} (stated subsequently), thereby completing the proof.
	
	\hfill\Halmos	\end{proof}
%{\color{red} Need to introduce \sap\ at some arrival as something that exists independent of any online algorithm. Here the arrival simply gives the ground set of solutions and outcomes for the \sap\ instance}
\begin{lemma}\label{helpapx}
	Let $a^*$ denote an optimal solution (action) for the \sap\ with weights $w_e$ and %Consider random variables $\wapx_e\in[0,f(e)]$ such that for some $\delta\in(0,1)$,
%	\[P\left(|\wapx_e-w_e|\leq \delta f(e)\,\, \forall\,e\in N_t\right)\geq 1-\delta.\] 
let $\aapx$ denote the (random) output of an $\eta$-approximation oracle for \sap\ for the (random) instance with weights $\wapx_e$. Then, we have the following inequality,
	\[E_t\left[\sum_{e\in N(\aapx)}p_e\,\wapx_e\right]\geq \eta\,  \sum_{e\in N(a^*)}p_e\,w_e-(1+2\eta)\delta_t\,\max_{a\in \ac_t} F(a).\]
	Here, the expectation is taken over the randomness in $\wapx_e$ at arrival $t$.
\end{lemma}
\begin{proof}{Proof.}

	Consider the event that,
	\[|\wapx_e-w_e|\leq \delta_t \, f(e)\quad \forall e\in N_t.\]
	We use $\textsc{Good}$ to denote this event. From Lemma \ref{chernoff}, the probability of this event is at least $1-\delta_t$. %and let $E_t[\cdot]$ denote the expectation over the randomness in SAA at arrival $t$. 
	Conditioned on $\textsc{Good}$, we have  w.p.\ 1,
	\begin{eqnarray}
		%F(o_{t}\mid \alg_t)&=& 
		\sum_{e\in N(a^*)}p_e\, w_e
		&\leq& \sum_{e\in N(a^*)}p_e\, (\wapx_e + \delta_t\, f(e)),\nonumber\\
		&=& \sum_{e\in N(a^*)}p_e\, \wapx_e + \delta_t\sum_{e\in N(a^*)}p_e\, f(e),\nonumber\\
		&\leq& \max_{a'\in \ac_t}\sum_{e\in N(a')}p_e\, \wapx_e + \delta_t\max_{a\in \ac_t}\left(\sum_{e\in N(a)}p_e\, f(e)\right),\nonumber\\
		&\leq& \frac{1}{\eta}\sum_{e\in N(\aapx)}p_e\, \wapx_e + \delta_t\max_{a\in \ac_t}F(a),\nonumber\\
		&\leq& \frac{1}{\eta}\sum_{e\in N(\aapx)}p_e\, (w_e+\delta_t\, f(e)) + \delta_t\max_{a\in \ac_t}F(a),\nonumber\\
		&=& \frac{1}{\eta}\sum_{e\in N(\aapx)}p_e\, w_e + \left(1+\frac{1}{\eta}\right)\,\delta_t\max_{a\in \ac_t}F(a).\label{apxgrd2}
		%		&\leq& E_t[\max_{a\in \ac_t} F(a\mid \alg_t)],\nonumber\\
	\end{eqnarray}
	
	Observe that,
	\begin{eqnarray*}
		&&E\left[\sum_{e\in N(\aapx)}p_e\,\wapx_e\right]\\
		&&=  P(\textsc{Good})\,\,E\left[\sum_{e\in N(\aapx)}p_e\,\wapx_e\mid\,\, \textsc{Good}\right]+ (1-P(\textsc{Good}))\,\,	E\left[\sum_{e\in N(\aapx)}p_e\,\wapx_e\mid\,\, \neg\, \textsc{Good}\,\right],\\
		&&\geq  (1-\delta_t)\left[\eta\sum_{e\in N(a^*)}p_e\, w_e - \left(1+{\eta}\right)\,\delta_t\max_{a\in \ac_t}F(a)\right],
	\end{eqnarray*}
	here we used inequality \eqref{apxgrd2} and the fact that $P(\textsc{Good})\geq 1-\delta_t$ to lower bound the first term in the summation and we used the non-negativity of $\wapx_e$ to lower bound the second term by 0. Finally, note that,
	\begin{eqnarray*}
		(1-\delta_t)\left[\eta\sum_{e\in N(a^*)}p_e\, w_e - \left(1+{\eta}\right)\,\delta_t\max_{a\in \ac_t}F(a)\right]	&\geq& (1-\delta_t)\,\eta\sum_{e\in N(a^*)}p_e\, w_e - \left(1+{\eta}\right)\,\delta_t\max_{a\in \ac_t}F(a),\\
		&\geq& \eta\sum_{e\in N(a^*)}p_e\, w_e - \left(1+2{\eta}\right)\,\delta_t\max_{a\in \ac_t}F(a).
	\end{eqnarray*}
	The second inequality follows from the fact that $\sum_{e\in N(a)}p_e\, w_e\leq \sum_{e\in N(a)}p_e\, f(e)\leq F(a)$ for all $a\in \ac_t$.

	\hfill\Halmos\end{proof}

\subsection{Applying the Reduction Technique to \apx\grd}\label{appx:sap2}

\begin{repeattheorem}[Theorem \ref{sap2}.]
	In the adversarial, RO, and UIID models, \apx\grd\ algorithm has the same competitive ratio both with and without stochastic outcomes. % is at least $\beta(\eta,\epsilon)$-competitive for deterministic outcomes
	%	then it is at least $\beta(\eta,\epsilon)$-competitive for stochastic outcomes.
\end{repeattheorem}

	\begin{proof}{Proof.} 
As mentioned earlier in Remark \ref{beyondg}, to use the reduction technique for an algorithm \aalg, it suffices to establish the invariance property for \aalg. Specifically, for \apx\grd, we need to show that $E[\apx\grd(G)]=E[\apx\grd(\gc)]$, where the expectation is w.r.t.\ the randomness in \apx\grd. At first glance, this may seem challenging because the approximation oracle for the \sap\ and the sample average approximation of weights may lead to the selection of actions from $\h{\curac}$; potentially violating the invariance property. We show that this issue has a simple fix. The key observation is that %when the optimal solution to an instance of the \sap\ lies in a subset $S\subseteq \ac_t$, we can `hide' the actions in $\ac_t\backslash S$ from the $\eta$-approximation oracle without compromising its approximation guarantee. We also leverage the fact that 
\apx\grd\ is at least $\beta(\eta,\epsilon)$-competitive for \emph{every possible} $\eta$-approximation oracle for the \sap. 	Specifically, it suffices to show that there exists a choice of $\eta$-feasible solutions for the \sap\ instances arising at each arrival of $\gc$ such that $\apx\grd(\gc)=\apx\grd(G)$.

%{\color{red} Define \gla\ explicitly for $\gc$. In particular, \sap\ instance on $\gc$ needs to be clearly defined because technically there is a single outcome for every action in $\gc$ but we are still defining the \sap\ instance using weights are for outcomes of $G$.}

We compare $\apx\grd(G)$ and $\apx\grd(\gc)$ at the sample-path level. To this end, we define a natural coupling of the executions of \apx\grd\ on instances $G$ and $\gc$. For each $t\in[T]$, let $\mathcal{P}_t$ denote a sufficiently large collection of i.i.d.\ random samples from the set $\curn\!\left(\bigcup_{\tau<t} \ac_{\tau}\right)$ of partial mappings. Each mapping in $\curn\!\left(\bigcup_{\tau<t} \ac_{\tau}\right)$ specifies the realized outcome of every action in $\bigcup_{\tau<t} \ac_{\tau}$. We couple the two executions by using the same collection $\mathcal{P}_t$ to compute the SAA weights $w_e^{\mathrm{apx}}$ for all outcomes $e\in N_t$, for each $t\in[T]$, in both instances $G$ and $\gc$. Although the mappings in $\mathcal{P}_t$ may not realize every possible outcome prior to arrival $t$, this is not required. It suffices that the resulting estimates $w_e^{\mathrm{apx}}$ are identical across the two instances and provide sufficiently accurate approximations of $w_e$, which follows from Lemma~16.
 %It is possible that some outcomes are not realized in any of the sampled mappings; the coupling ensures that the set of realized outcomes and their sample average weights are identical across both instances.} 
		
		Now, fix an arbitrary sample path of the algorithm. For instance $G$, let $\sap_t$ denote the \sap\ instance at arrival $t$ and let $\salg_t$ denote the action selected at $t$. %We show by induction that \apx\aalg\ has the same output on $\gc$. 
		Similarly, let $\h{\sap}_t$ and $\h{\salg}_t$ denote the \sap\ instance and the action selected at arrival $t$ of $\gc$. We will use induction to show that for all $t\in[T]$, there exists an $\eta$-approximate solution to $\h{\sap}_t$ such that $\h{\salg}_t=\salg_t$.  
		
		\paragraph{Base case ($t=1$):} The instances $\h{\sap}_1$ and $\sap_1$ have identical weights but $\h{\sap}_1$ has additional feasible solutions. In the adversarial and RO models, using Lemma \ref{key}, we know that $\h{a}_1$ is not an optimal solution of $\h{\sap}_1$. Therefore, action $\salg_1$ is an $\eta$-approximate solution for $\h{\sap}_1$, and we set $\h{\salg}_1=\salg_1$. In the UIID model, we have several new actions at arrival 1. However, by applying Lemma \ref{keyuiid} -- the UIID specific extension of Lemma \ref{key} -- we reach the same conclusion.
		
		\paragraph{Inductive step:} Suppose that for some $\tau$, we have $\h{\salg}_t=\salg_t$ for all $t<\tau$. Now consider the instance $\h{\sap}_\tau$ and $\sap_\tau$, which have the same weights. Using Lemma \ref{key} with $S=\{\salg_t,\cdots,\salg_{\tau-1}\}$, we have that $\h{a}_\tau$ is not an optimal solution of $\h{\sap}_\tau$. Therefore, action $\salg_\tau$ is an $\eta$-approximate solution of $\h{\sap}_\tau$. By the induction hypothesis, we have $\h{\salg}_t=\salg_t$ for all $t$. Similar to the base case, we use Lemma \ref{keyuiid} to reach this conclusion in the UIID model.
		\hfill\Halmos	
	\end{proof}

	\section{Missing Details for Greedy-like Algorithms}
	\subsection{Extending the Reduction Technique to \wpa}\label{appx:gla}
%With a slight abuse of notation, we use $\alg_t$ to denote the set of actions selected by each online algorithm that we define below. 

Recall that Balance is a family of deterministic algorithms for \wpa-DO that favors actions generating the highest \emph{perturbed rewards}. Formally, the algorithm selects the following action at arrival $t$:
	\[\text{Balance:}\quad \argmax_{a\in \ac_t}\, \sum_{i\in I\mid a_i=1} r_i\, u\left(c_i,y_i(t)\right),\]
here $y_i(t)=\sum_{a\in\alg_t}a_i$ is the total capacity of resource $i$ allocated prior to $t$. In the presence of stochastic outcomes, $y_{P,i}(t)$ represents the total capacity of $i$ allocated prior to $t$ under a given (partial) action to outcome mapping $P_t\in \curn(\alg_t)$, and  is given by
\[y_{P_t,i}(t)=\sum_{a\in \alg_t} \sum_{e\in N(a)\cap P_t} e_i.\] 
The non-adaptive verion of Balance for \wpa\ selects the following action $t$,
	\[\text{Non-adaptive Balance:}\quad  \argmax_{a\in \ac_t}\, \sum_{P_t\in \curn(\alg_t)}\gamma(P_t)\sum_{e\in N(a)}p_e\sum_{i\in I\mid e_i=1} \,r_i\,u\left(c_i, y_{P_t,i}(t)\right).\]

	\cite{wholepage} showed that an instance of Balance (Algorithm 1 in \cite{wholepage}) is $(1-1/e)$-competitive for \wpa-DO when $c_{\min}\to +\infty$. Using the reduction technique, we show that non-adaptive Balance has the same competitive ratio for \wpa\ and \wpa-DO. Crucially, this equivalence holds in the large capacity regime because the \wpa-DO instance $\gc$ constructed by the reduction has the same minimum resource capacity $c_{\min}$ as the original \wpa\ instance $G$. As a result, non-adaptive Balance is asymptotically $(1-1/e)$-competitive for \wpa\ in the large capacity regime.
`	We restate our result for (non-adaptive) Balance.

	\begin{repeattheorem}[Theorem \ref{thm:ib}.]
		Any (non-adaptive) Balance algorithm has the same asymptotic competitive ratio for \wpa\ and \wpa\emph{-DO}. %with $u_i(x)=1-e^{x-1}$ is asymptotically $(1-1/e)$-competitive for \wpa\ in the large capacity regime.
\end{repeattheorem}	

%	\begin{theorem} \label{glacommon} Any instance of Balance has the same competitive ratio for \wpa\ and \wpa\emph{-DO}, 
%		i.e.,
%		\[\inf_{G\in \wpa}\frac{\aalg(G)}{\opt(G)}=\inf_{\gc\in \wpa\emph{-DO}}\frac{\aalg(\gc)}{\opt(\gc)}.\]
%	\end{theorem}

 %It should be possible to extend this approach to other types of \gla s for \wpa.

%	Using Theorem \ref{glacommon}, we derive the following new results for \wpa.
%	\begin{repeattheorem}[Theorem \ref{thm:ib}.]
%		The non-adaptive Balance algorithm is asymptotically $(1-1/e)$-competitive for \wpa\ in the large capacity regime.
%	\end{repeattheorem}		
%
%	\begin{repeattheorem}[Theorem \ref{thm:pg}.]
%		The competitive ratio of any Balance-with-Rounding algorithm for \wpa-DO is at most 0.5. 
%\end{repeattheorem}
%We begin by proving Theorem \ref{glacommon}. 
%\subsection{Proof of Theorem \ref{glacommon}}
 \begin{proof}{Proof.} 
 
  Let \aalg\ denote an instance of Balance with competitive ratio $\beta$ for \wpa-DO. 
 Consider an arbitrary instance $G$ of \wpa. Since \wpa-DO is a special case of \wpa, it suffices to show that,
 	\[\frac{\aalg(G)}{\opt(G)}\geq \beta.\]
 		We prove this result by extending the reduction technique. Specifically, we show that there exists an instance $\gc$ of \wpa-DO with the same minimum resource capacity such that $\opt(\gc)\geq \optc(G)$ (dominance) and $\aalg(\gc)=\aalg(G)$ (invariance). To prove invariance, we show that \aalg\ solves an instance of the \sap\ at each arrival, relying on the fact that Lemmas \ref{key1} and \ref{key2} hold for arbitrary functions, without requiring monotonicity or submodularity.
 		
 		The key difference is that we must now ensure that $\gc$ is an instance of \wpa-DO, rather than OSW or OSOW. As in the original construction, $\gc$ is defined using an augmented action set
$\no = \curac \cup \hat{\curac}$. However, the objective function requires a
slight modification: we introduce a customized objective function $\hat{F}$
tailored to the \wpa-DO setting. We begin by defining this new objective function and then specify the resulting
instance $\gc$ of \wpa-DO.

\paragraph{Objective Function $\h{F}$:} Let $r_{P_1,P_2,i}=\gamma(P_1)\,\alpha^c(P_2)\, r_i$ for all $P_1\in \curn$, $P_2\in \cal{X}$, and $i\in I$. Consider the objective function,
\begin{equation}\label{hFdef}
	\h{F}(S_1\cup S_2)= \sum_{P_1\in \curn}\sum_{P_2\in \cal{X}}\sum_{i\in I} r_{P_1,P_2,i}\,\min\left\{c_i,\sum_{e\in N(S_1)\cap P_1} e_i+\sum_{ e\in N(S_2)\cap P_2} e_i\right\} %f\left((N(S_1)\cap P_1) \cup (N(S_2)\cap P_2)\right)
	\quad \forall S_1\subseteq \curac,\,\, S_2\subseteq \h{\curac},
\end{equation}
where $N(S_2)=\cup_{t\mid \h{a}_t\in S_2} N_t$. We examine the differences between $\h{F}$ and the function $F$ defined in \eqref{def1} in Remark \ref{diff}, which follows this proof. Observe that, 
\begin{eqnarray}
	\h{F}(\h{\curac})&=&\sum_{P_1\in \curn}\sum_{P_2\in \cal{X}}\sum_{i\in I} r_{P_1,P_2,i}\,\min\left\{c_i,\sum_{ e\in P_2} e_i\right\}=\sum_{P_2\in \cal{X}}\sum_{i\in I} \alpha^c(P_2)\,f_i(P_2)=\optc(G),\label{foropt}\\
	\h{F}(S)&=&\sum_{P_1\in \curn}\sum_{P_2\in \cal{X}}\sum_{i\in I} r_{P_1,P_2,i}\,\min\left\{c_i,\sum_{ e\in N(S)\cap P_1} e_i\right\}=\sum_{P\in \curn(S)}\gamma(P)\,f(P) = F(S) \quad \forall S\subseteq \curac.\label{foraalg}
\end{eqnarray}
It remains to show that $\h{F}$ is the objective for an instance of \wpa-DO with action set $\no$.
%	Identity \eqref{foraalg} implies that the new objective is an extension of the function $F$ 

\paragraph{Instance of \wpa\emph{-DO}:}  %claim that $\h{F}$ is the objective function in an 
Consider an instance of \wpa-DO with the set of resources $\h{I}=\{(P_1,P_2,i)\mid P_1\in \curn,\, P_2\in \cal{X},\,i\in I \}$, where the resource $(P_1,P_2,i)\in \h{I}$ has capacity $c_i$ and per unit reward $r_{P_1,P_2,i}$. %, and reward function,
%		\[f_{P_1,P_2,i}(S_1\cup S_2)=r_{P_1,P_2,i}\,\min\left\{c_i,\sum_{e\in N(S_1)\cap P_1} e_i+\sum_{ e\in N(S_2)\cap P_2} e_i\right\}.\]
The (deterministic) outcome of an action $a\in \no$ in this instance is a binary vector  $\h{e}_a=(\h{e}_{a,P_1,P_2,i})_{(P_1,P_2,i)\in \h{I}}$ with one component per resource in $\h{I}$. Fix $P_1,P_2,$ and $i$, then for all $a\in \curac$, the component $\h{e}_{a,P_1,P_2,i}=e_i$, where $e_i$ is the resource $i$ component of the outcome $e\in N(a)\cap P_1$ in instance $G$. Here, $N(a)\cap P_1$ is a singleton. Similarly, for every action $\h{a}_t\in \h{\curac}$, we have $\h{e}_{\h{a}_t,P_1,P_2,i}=e_i$ where $e_i$ is the resource $i$ component of the outcome $e\in N_t\cap P_2$. Given these definitions, the reward function of resource $(P_1,P_2,i)$ in this instance of \wpa-DO is given by,
\[f_{P_1,P_2,i}(X)=r_{P_1,P_2,i}\min\left\{c_i,\sum_{a\in \no \mid \h{e}_a\in X}\h{e}_{a,P_1,P_2,i}\right\}.\]
It follows that the total reward across all resources is given by the function $\h{F}$ defined in \eqref{hFdef}. Specifically, for all action sets $S_1\subseteq \curac$ and $S_2\subseteq \h{\curac}$,
\begin{eqnarray*}
	\sum_{P_1\in \curn}\sum_{P_2\in \cal{X}}\sum_{i\in I}f_{P_1,P_2,i}(\cup_{a\in S_1\cup S_2} \h{e}_{a})&=&\sum_{P_1\in \curn}\sum_{P_2\in \cal{X}}\sum_{i\in I} r_{P_1,P_2,i}\,\min\left\{c_i,\sum_{e\in N(S_1)\cap P_1} e_i+\sum_{ e\in N(S_2)\cap P_2} e_i\right\},\\
	&=&\h{F}(S_1\cup S_2) 
\end{eqnarray*}
%\quad \forall S_1\subseteq \curac,\,\, S_2\subseteq \h{\curac},\]
%Thus, $\gc$ is an instance of \wpa-DO.

%Similar to the proof of Lemma \ref{redprop}, we have 

Now, observe that $\opt(\gc)\geq \h{F}(\h{\curac})$, because $\h{\curac}$ is a feasible solution to $\gc$. From \eqref{foropt}, we have that $\h{F}(\h{\curac})=\optc(G)$. Thus, $\opt(\gc)\geq \optc(G)$ and we have the dominance property. %Further, we claim that $\aalg(G)=\aalg(\gc)$. Assuming the 
If the invariance property is also true, then we have,
\[\frac{\aalg(G)}{\opt(G)}=\frac{\aalg(\gc)}{\opt(G)}\geq \frac{\aalg(\gc)}{\opt(\gc)}\geq \beta.\]
\paragraph{Proof of the Invariance Property:} To show that $\aalg(G)=\aalg(\gc)$, it suffices to show that \aalg\ produces the same output on both $G$ and $\gc$. This is sufficient because $\h{F}$ and $F$ are identical when restricted to the ground set $\curac$ (see \eqref{foraalg}). For the randomized PG algorithm, we assume that the random values $u_i\sim D_i$ have been fixed arbitrary for all $i\in I$, and we prove invariance on every sample path.

 Let $t\in[T]$ be an arbitrary arrival, and suppose that, prior to arrival $t$, \aalg\ selects the same set of actions $S\subseteq \curac$ on both $G$ and $\gc$. We will show that \aalg\ selects the same action at arrival $t$ in both cases. Assuming this holds, $\aalg(G)=\aalg(\gc)$ follows by induction over $t$. %The overall plan is as follows: we first show that \aalg\ solves an instance of the \sap\ at each arrival and then show that, on $\gc$, the marginal value of action $\h{a}_t$ is a convex combination of the marginal values of actions in $\ac_t$.

We begin by describing the action selected by  (non-adaptive) \aalg\ on instance $G$. At arrival $t$, the algorithm selects
\[a_G=\argmax_{a\in \ac_t} \sum_{P\in \curn(S)}\gamma(P) %\sum_{i\in I} u_{i,t}(P) 
\sum_{e\in N(a)} p_e\sum_{i\in I\mid e_i=1} \,r_i\,u\left(c_i, y_{P,i}(t)\right).\]%,f_i(e\mid P)\right). \]
%Here, the perturbation factor $u_{i,t}(P)=1-e^{\frac{f_i(P)}{r_ic_i}-1}$ for the Balance algorithm and $u_{i,t}(P)=u_i$ for the PG algorithm. 
For brevity, let
\[\tilde{f}(e\mid P)= \sum_{i\in I\mid e_i=1} \,r_i\,u\left(c_i, y_{P,i}(t)\right).\] %u_{i,t}(X) f_i(e\mid X).\] 
Using Lemma \ref{key1}, which holds for any function $\tilde{f}$, we rewrite $a_G$ as follows,
\[a_G=\argmax_{a\in \ac_t} \sum_{e\in N(a)} p_e \, \left(\sum_{P\in \curn(S)}\gamma (P) \tilde{f}(e\mid P)\right).\] 
%Thus, $a_G$ is the optimal solution of a \sap\ with weights $w_e=\sum_{P\in \curn(S)}\gamma (P) \tilde{f}(e\mid P)$.

%We want to show that \aalg\ also selects action $a_G$ on instance $\gc$. 
We now turn to the action selected by \aalg\ on the constructed instance $\gc$. At arrival $t$, \aalg\ selects
\begin{eqnarray*}
	a_{\gc}&=&\argmax_{a\in \ac_t\cup \{\h{a}_t\}}\sum_{P_1\in \curn}\sum_{P_2\in \cal{X}}\sum_{i\in I\,\mid\, \h{e}_{a,P_1,P_2,i}\,=\,1}r_{P_1,P_2,i}\,u\left(c_{P_1,P_2,i},\sum_{a'\in S}\h{e}_{a',P_1,P_2,i}\right),\\%\,\, f_{P_1,P_2,i}\left(\h{e}_a\mid \cup_{a'\in S} \h{e}_{a'}\right),\\
	&=&\argmax_{a\in \ac_t\cup \{\h{a}_t\}}\sum_{P_1\in \curn}\sum_{P_2\in \cal{X}}\gamma(P_1)\alpha^c(P_2)\sum_{i\in I\,\mid\, \h{e}_{a,P_1,P_2,i}\,=\,1}r_{i}\,u\left(c_{i},\sum_{a'\in S}\h{e}_{a',P_1,P_2,i}\right),\\
		&=&\argmax_{a\in \ac_t\cup \{\h{a}_t\}}\sum_{P_1\in \curn}\sum_{P_2\in \cal{X}}\gamma(P_1)\alpha^c(P_2)\sum_{i\in I\,\mid\, \h{e}_{a,P_1,P_2,i}\,=\,1}r_{i}\,u(c_{i},y_{P_1\cap N(S),i}(t)),
%	&=&\argmax_{a\in \ac_t\cup \{\h{a}_t\}}\sum_{P_1\in \curn}\sum_{P_2\in \cal{X}}\sum_{i\in I}u_{i,t}(N(S)\cap P_1)\,\, f_{P_1,P_2,i}\left(\h{e}_a\mid \cup_{a'\in S} \h{e}_{a'}\right).
\end{eqnarray*}
where the second equality follows from the construction of $\gc$. The third equality follows from the fact that $S\subseteq \curac$, which implies
\[\sum_{a'\in S} \h{e}_{a',P_1,P_2,i}=\sum_{a'\in S} \sum_{e\in N(a')\cap P_1} e_i=y_{P_1\cap N(S),i}(t).\]
%which is independent of $P_2$. Consequently, the value $u(c_{i},y_{P_1,P_2,i}(t))$ is  also independent of $P_2$ and we highlight this by dropping $P_2$ from the function,
%\[u(c_{i},y_{P_1,i}(t))\coloneqq u(c_{i},y_{P_1,P_2,i}(t)).\]

% Here, the perturbation factor equals $u_{i,t}(N(S)\cap P_1)$ because $S\subseteq \curac$. 
To conclude that $a_{\gc}=a_G$, it suffices to show that the marginal value of the auxiliary action $\h{a}_t$ is a convex combination of the marginal values of actions in $\ac_t$. This essentially follows from Lemma \ref{key2} and we repeat the argument below for completeness. %Mimicking the proof of Lemma \ref{}, we get,
\begin{eqnarray*}
&\sum_{P_1\in \curn}\sum_{P_2\in \cal{X}}&\gamma(P_1)\alpha^c(P_2)\sum_{i\in I\,\mid\, \h{e}_{\h{a}_t,P_1,P_2,i}\,=\,1}r_{i}\,u(c_{i},y_{P_1\cap N(S),i}(t))\\
&=&\sum_{P_1\in \curn}\sum_{P_2\in \cal{X}}\gamma(P_1)\alpha^c(P_2)\sum_{i\in I}(\h{e}_{\h{a}_t,P_1,P_2,i})\,r_{i}\,u(c_{i},y_{P_1\cap N(S),i}(t))\\
&=&\sum_{P_1\in \curn}\gamma(P_1)\sum_{i\in I}r_{i}\,u(c_{i},y_{P_1\cap N(S),i}(t))\sum_{P_2\in \cal{X}}\alpha^c(P_2)\,\h{e}_{\h{a}_t,P_1,P_2,i}\\
&=&\sum_{P_1\in \curn}\gamma(P_1)\sum_{i\in I}r_{i}\,u(c_{i},y_{P_1\cap N(S),i}(t))\sum_{P_2\in \cal{X}}\sum_{e\in N_t\cap P_2}\alpha^c(P_2)e_i\\
&=&\sum_{P\in \curn(S)}\gamma(P)\sum_{i\in I}r_{i}\,u(c_{i},y_{P,i}(t))\sum_{e\in N_t\mid e_i=1}\left(\sum_{P_2\in \cal{X}\mid P_2\ni e}\alpha^c(P_2)\right)\\
&=&\sum_{P\in \curn(S)}\gamma(P)\sum_{a\in \ac_t,\, e\in N(a)}\sum_{i\in I\mid e_i=1}r_{i}\,u(c_{i},y_{P,i}(t))\left(y^c_a p_e\right)\\
&=&\sum_{a\in \ac_t}y^c_a\sum_{e\in N(a)}p_e\left(\sum_{P\in \curn(S)}\gamma(P)\tilde{f}(e\mid P)\right)\\
&\leq &\sum_{e\in N(a_G)} p_e \, \left(\sum_{P\in \curn(S)}\gamma (P) \tilde{f}(e\mid P)\right)\\
	&=& \sum_{P\in \curn(S)}\gamma(P)\sum_{e\in N(a_G)}p_e\sum_{i\in I\mid e_i=1}r_{i}\,u(c_{i},y_{P,i}(t))\\
		&=& \sum_{P_1\in \curn}\gamma(P_1)\sum_{e\in N(a_G)\cap P_1}\sum_{i\in I\mid e_i=1}r_{i}\,u(c_{i},y_{P_1\cap N(S),i}(t))\\
	&=& \sum_{P_1\in \curn}\sum_{P_2\in \cal{X}}\gamma(P_1)\alpha^c(P_2)\sum_{e\in N(a_G)\cap P_1}\sum_{i\in I\mid e_i=1}r_{i}\,u(c_{i},y_{P_1\cap N(S),i}(t))\\
		&=&\sum_{P_1\in \curn}\sum_{P_2\in \cal{X}}\gamma(P_1)\alpha^c(P_2)\sum_{i\in I\,\mid\, \h{e}_{a_G,P_1,P_2,i}\,=\,1}r_{i}\,u(c_{i},y_{P_1\cap N(S),i}(t)).
\end{eqnarray*}
\hfill\Halmos\end{proof}
\begin{remark}[Difference between $\hat{F}$ and $F$]\label{diff}
	Recall that
	\[
	\hat{F}(S_1 \cup S_2)
	=
	\sum_{P_1 \in \curn}
	\sum_{P_2 \in \cal{X}}
	\sum_{i \in I}
	r_{P_1,P_2,i}
	\min\left\{
	c_i,\;
	\sum_{e \in N(S_1) \cap P_1} e_i
	+
	\sum_{e \in N(S_2) \cap P_2} e_i
	\right\},
	\]
	whereas, from~\eqref{def1}, we have
	\[
	F(S_1 \cup S_2)
	=
	\sum_{P_1 \in \curn}
	\sum_{P_2 \in \cal{X}}
	\sum_{i \in I}
	r_{P_1,P_2,i}
	\min\left\{
	c_i,\;
	\sum_{e \in (N(S_1) \cap P_1)\,\cup\, (N(S_2) \cap P_2)} e_i
	\right\}.
	\]
	
	Observe that
	\[
	\sum_{e \in (N(S_1) \cap P_1)\,\cup\, (N(S_2) \cap P_2)} e_i
	\;\le\;
	\sum_{e \in N(S_1) \cap P_1} e_i
	+
	\sum_{e \in N(S_2) \cap P_2} e_i.
	\]
	Consequently,
	\[
	F(S_1 \cup S_2) \;\le\; \hat{F}(S_1 \cup S_2).
	\]
	
	This distinction is crucial. While $\hat{F}$ corresponds to the objective
	function of a valid instance of \wpa-DO, the function $F$ does not, in general,
	correspond to any feasible \wpa-DO instance.
	
	To illustrate this, consider a single resource $i$ with capacity $c_i = 2$, and
	singleton sets $S_1 = \{a\}$, $S'_1=\{a'\}$ and $S_2 = \{\hat{a}\}$. Suppose there exist mappings
	$P_1$ and $P_2$ such that
	$N(S_1) \cap P_1 = N(S_2) \cap P_2 = \{e\}$ and $N(S'_1)\cap P_1=\{e'\}$. Then,
	\[
	\hat{F}(S_1 \cup S_2)
	=\hat{F}(S_1 \cup S'_1)=\hat{F}(S'_1 \cup S_2)
	=\,2r_i,
	\]
	whereas
	\[
	F(S_1 \cup S_2)
	=
	r_i, \quad \text{ and } \quad
	{F}(S_1 \cup S'_1)={F}(S'_1 \cup S_2)
	=\,2r_i.
	\]
	In a valid instance of \wpa-DO, the objective behaves linearly when there is sufficient remaining resource capacity. The subadditive behavior exhibited by $F$ in this example cannot arise from any valid instance of
	\wpa-DO, highlighting the difference between $\hat{F}$ and $F$.
\end{remark}

%\subsection{Proof of Theorem \ref{thm:ib}}
%\begin{proof}{Proof.}	\cite{wholepage} showed that Balance (Algorithm 1 in \cite{wholepage}) is asymptotically $(1-1/e)$-competitive for \wpa-DO. From Theorem \ref{glacommon}, we know that the non-adaptive Balance has the same competitive ratio for \wpa\ and \wpa-DO. Notably, this equivalence holds in the large capacity regime because the instance $\gc$ of \wpa-DO constructed in the proof of Theorem \ref{glacommon} has the same value of $c_{\min}$ as the original instance $G$. Therefore, non-adaptive Balance is asymptotically $(1-1/e)$-competitive for \wpa\ in the large capacity regime.
%	\hfill\Halmos\end{proof}
{\color{black}
\subsection{Competitive Ratio Upper Bounds Using Reduction Technique}\label{appx:upb}
Given a set $I$ of resources, let $F=\sum_{i\in I} F_i$ denote the objective function in an instance of OSW-SO, where $F_i(S)=\sum_{P\in \curn(S)} \gamma(P) f_i(P)$ is the expected total reward from resource $i\in I$. We assume that $f=\sum_{i\in I}f_i$ and $F=\sum_{i\in I} F_i$ are monotone submodular functions but make no assumptions on the component functions $f_i$ and $F_i$. 

We are interested in \gla s for OSW that are greedy w.r.t. the perturbed objective function $\tilde{F}=\sum_{i\in I}\tilde{F}_i$. Specifically, at arrival $t$, the algorithm selects	
\[\argmax_{a\in \alg_t} \sum_{i\in I}\tilde{F}_i(a\mid \alg_t), \]
where
\[\tilde{F}_i(a\mid \alg_t) = u_i\left(\{F_j(X)\}_{j\in I,X \subseteq \alg_t}\right) F_i(a\mid \alg_t),\]
and each $u_i$ is an arbitrary (possibly randomized) function of the current state of resource $i$, represented by the collection $\{F_j(X)\}_{j\in I,X\subseteq \alg_t}$. 
\begin{repeattheorem}[Theorem \ref{thm:upb}.]
	The family of \gla s defined above has competitive ratio at most 0.5 for  OSW.
	\end{repeattheorem}}
\begin{proof}{Proof.} {\color{black}
	To prove this theorem we leverage Theorem 5 from \cite{deb}, which states that when $c_{\min}=1$, no non-adaptive algorithm has a competitive ratio higher than 0.5 for the problem of online matching with stochastic rewards (OMSR), a special case of  OSW-SO. By applying the reduction technique in the reverse direction, we show that this result imposes an upper bound of 0.5 on the competitive ratio of all \gla s for  OSW.

	Formally, we argue by contradiction. Suppose that there exists a \gla\ for  OSW with competitive ratio $0.5+\epsilon$ for some $\epsilon>0$. In the first part of the proof, we use the reduction technique to show that this implies the existence of a non-adaptive \gla\ with competitive ratio $0.5+\epsilon$ for  OSW-SO against the benchmark $\optc$. Therefore, there exists a non-adaptive algorithm with competitive ratio $0.5+\epsilon$ (against $\optc$) for OMSR. At this point, one might expect an immediate contradiction with Theorem~5 of \cite{deb}, which rules out competitive ratios above 0.5 for non-adaptive OMSR algorithms when compared against an LP benchmark that upper bounds \opt. However, since the result of \cite{deb} is stated with respect to an LP benchmark, an additional step is required. In the second part of the proof, we strengthen their result by showing that the same 0.5 upper bound continues to hold even when performance is measured against $\optc$. 

\paragraph{Part 1:} In this part, we use the reduction technique to show that a non-adaptive \gla\ has the same competitive ratio for  OSW-SO and OSW. Let $G$ be an instance of  OSW-SO with objective functions $f=\sum_{i\in I} f_i$ and $F=\sum_{i\in I} F_i$. Using the original construction from Lemma \ref{redprop}, we construct an instance $\gc$ of OSW with an extended objective function defined over the enlarged ground set $\no$. 

Specifically, the extended objective function is
\[F(S)=\sum_{P_1\in \curn}\sum_{P_2\in \mathcal{X}} \gamma(P_1) \, \alpha^c(P_2)\, f\left((N(S_1)\cap P_1)\cup (N(S_2)\cap P_2)\right),\]
for all $S\subseteq \no$, where we define $S_1=S\cap \curac$ and $S_2= S\cap \h{\curac}$. We retain the same resource set $I$ and define corresponding extended component functions $F_i$ so that %over ground set $\no$ by
\[F_i(S)=\, \sum_{P_1\in \curn}\sum_{P_2\in \mathcal{X}} \gamma(P_1) \, \alpha^c(P_2)\, f_i\left((N(S_1)\cap P_1)\cup (N(S_2)\cap P_2)\right)\quad \forall i\in I.\]
This construction defines a valid extension of the original component functions: for every set $S\subseteq \curac$< the value of $F_i(S)$ remains unchanged, and the extended objective decomposes as $F=\sum_{i\in I} F_i$. The dominance property holds by construction of $\gc$. To complete the reduction, it remains to establish the invariance property holds for the family of \gla s. As in Lemma \ref{redprop}, we show that the \gla\ selects the same sequence of actions on both $G$ and $\gc$.

To this end, consider the \gla\ applied to the instance $\gc$, which operates using the extended perturbed marginal value functions of the form
\[
\tilde{F}_i(a \mid S)
=
u_i\!\left(\{F_j(X)\}_{j \in I,\; X \subseteq S}\right)\, F_i(a \mid S),
\qquad \forall\, S \subseteq \no,\; i \in I.
\]
When the functions $u_i$ are randomized, we define a natural coupling between their instantiations on the two instances. Specifically, we couple the random choices underlying $u_i$ so that, for every set $S \subseteq \curac$, the value
\[
u_i\!\left(\{F_j(X)\}_{j \in I,\; X \subseteq S}\right)
\]
is identical when evaluated on $G$ and on $\gc$. This coupling is well defined because, for every $S \subseteq \curac$, the collection of values $\{F_j(X)\}_{j \in I,\; X \subseteq S}$ coincide across the two instances. The restriction to subsets of $S$ is essential: if $u_i$ were allowed to depend on all actions revealed at (and prior to) arrival $t$, i.e., $\cup_{\tau\leq t} (\ac_\tau\cup \{\h{a}_{\tau}\})$, then the presence of additional actions in $\gc$ could change the perturbation values, breaking the coupling and preventing us from establishing invariance. %As a result, on every sample path (over the randomness in functions $u_i$) the perturbed marginal values coincide on $G$ and $\gc$ for all actions .

It suffices to show that for any set $S\subseteq \curac\backslash \ac_t$,
\begin{equation}\label{tildeconv}
	\tilde{F}(\h{a}_t\mid S)= \sum_{a\in \ac_t} y_a \tilde{F}(a\mid S).
\end{equation}
As an immediate consequence, we obtain $\tilde{F}(\h{a}_t\mid S)\leq \max_{a\in \ac_t} \tilde{F}(a\mid S)$, and the invariance property follows by induction over $t$, exactly as in the proof of Lemma \ref{redprop}. We now verify \eqref{tildeconv},
\begin{eqnarray*}
		\tilde{F}(\h{a}_t\mid S)&=& \sum_{i\in I} 	\tilde{F}_i(\h{a}_t\mid S),\\
		&=& \sum_{i\in I} u_i(\{F_j(X)\}_{j\in I,X\subseteq S}) F_i(\h{a}_t\mid S),\\
		&=& \sum_{i\in I} u_i(\{F_j(X)\}_{j\in I,X\subseteq S}) \left(\sum_{a\in \ac_t} y_a F_i(a\mid S)\right),\\
		&=& \sum_{i\in I} \sum_{a\in \ac_t} y_a u_i(\{F_j(X)\}_{j\in I,X\subseteq S})F_i(a\mid S),\\
		&=&\sum_{a\in \ac_t} y_a \sum_{i\in I} \tilde{F}_i(a\mid S).
	\end{eqnarray*}
All equalities follow directly from the definitions, except the third, which follows from Lemma \ref{key2} applied to the functions $f_i$ and $F_i$. Importantly, Lemma \ref{key2} holds for arbitrary functions that are not necessarily non-negative, monotone, or submodular. 
%Observe that we cannot apply induction without the restriction to subsets of $S$: without this assumption the algorithm could distinguish the constructed instance $\gc$ from the original instance $G$ and intentionally favor actions in $\h{\curac}$, violating invariance. 
}

  \paragraph{Part 2:} In this part, we extend Theorem 5 of \cite{deb} and show that the upper bound of 0.5 holds even when we compare non-adaptive algorithms against $\optc$. \cite{deb} showed the upper bound using the following family of `hard' instances of OMSR. Consider a bipartite graph between the set of resources, $[n]$, and arrivals $[T]$. Let $E$ denote the set of edges. Every edge in $E$ has success probability $p$ and assume that $\frac{1}{p}$ is an integer. The first $\frac{1}{p}$ arrivals have an edge to every resource. The second $\frac{1}{p}$ arrivals have an edge to every resource except resource 1. In general, arrival $t\in \{ \frac{i-1}{p}+1,  \frac{i}{p} \}$ has an edge to every resource in $\{i,\cdots,n\}$, and $T=\frac{n}{p}$. %One can generate \cite{deb} showed that for every non-adaptive online algorithm \aalg, there exists some (re)ordering of the resources such that 
 In a simplified view, they showed that as $p\to0$,  the total expected reward of a non-adaptive algorithm cannot exceed $0.5n +o(n)$. 	However, they compare online algorithms for OMSR with the following LP upper bound on \opt.
	\begin{eqnarray*}
		\max_{x_{i,t}\,\, \forall (i,t)\in E}	&&	\sum_{(i,t)\in E} p\, x_{i,t},\\
		&& \sum_{i \mid (i,t)\in E} x_{i,t} \leq 1\quad \forall t\in [T],\\
		&& \sum_{t \mid (i,t)\in E} p\, x_{i,t} \leq 1\quad \forall i\in [n],\\
		&&x_{i,t}\geq 0\quad \forall (i,t)\in E.
	\end{eqnarray*}	
	It is easy to see that, for the family of instances described above, the LP has optimal value $n$ which is achieved by the following optimal solution: for all $i\in[n]$ we can set $x_{i,t}=1$ if $t$ is one of the last $\frac{1}{p}$ arrivals that have an edge to $i$ and $x_{i,t}=0$ for all other arrivals.
	
	We show that the benchmark $\optc$ has value at least $n$ for these instances by finding a feasible solution with value $n$ for the optimization problem in $\optc$. For $k\in[\frac{1}{p}]$, let $X_k$ denote the outcome where edge $(i,\frac{i-1}{p}+k)$ succeeds for all $i\in [n]$ and all other edges fail. Consider the probability distribution $\alpha$ such that $\alpha(X)=p$ if $X=X_k$ for some $k\in [\frac{1}{p}]$ and $\alpha(X)=0$ otherwise. Observe that,
	\[\sum_{X\in \cal{X}} \alpha(X)=\sum_{k\in [\frac{1}{p}]} \alpha(X_k)=1.\]
	For all $i\in [n]$, let $y_{(i,t)}=1$ if $t\in \{ \frac{i-1}{p}+1,  \frac{i}{p} \}$ and let $y_{(i,t)}=0$ otherwise. Clearly, $(y_{(i,t)})_{(i,t)\in E}\in \Delta(E)$ and 
	\[\sum_{k\in [\frac{1}{p}] \mid (i,t)\in X_k}\alpha(X_k)=y_{(i,t)}\, p\quad \forall (i,t)\in E. \]
	We have $f(X_k)=n$ for all $k\in [\frac{1}{p}]$ because every resource is matched in $X_k$. Thus,  $\sum_{X\in \cal{X}}\alpha(X) f(X)=\frac{1}{p} \, p\, n=n$, which concludes the proof.

\hfill\Halmos\end{proof}
{\color{black}
\subsection{Perturbed Greedy (PG) Algorithm}
\subsubsection{PG Algorithms Captured by \gla.}\label{appx:pg1}
Inspired by the RANKING algorithm of \cite{kvv}, the Perturbed Greedy algorithm for vertex-weighted online bipartite matching (OBM) \citep{goel} matches each arrival $t$ to an available resource with the highest \emph{perturbed reward}, given by
\[
r_i \bigl(1 - e^{y_i}\bigr),
\]
where each $y_i$ is sampled independently and uniformly at random from $[0,1]$. This algorithm fits naturally within the family of \gla s introduced in Section~\ref{sec:upb}, with perturbation functions $u_i = 1 - e^{y_i}$ for all $i \in I$. 

Several works have studied natural generalizations of this approach to settings with non-unit inventories, stochastic rewards, and possibly non-binary bids \citep{albers,vazirani,unknown,stochrew}, all of which are captured as special cases of the \gla\ framework. More broadly, this framework also encompasses the Perturbed Greedy algorithm of \cite{jin}, which applies to settings where the reward associated with each resource is given by a matroid rank function.

\subsubsection{Perturbed Greedy (PG) Algorithm for \wpa-DO.}\label{appx:pg2}
	Consider a natural extension of the family of PG algorithms for the \wpa-DO formulation. Given the set of previously selected actions $\alg_t$, a PG algorithm selects the following action at arrival $t$:
	\begin{eqnarray}\label{pg}
		\argmax_{a\in \ac_t}\, \sum_{i\in I\mid a_i=1} u_i\,r_i\,\min\{1,c_i-y_i(t)\}.
	\end{eqnarray}
	where $\{u_i\}_{i\in I}$ are independent random perturbations drawn from a distribution $U$. We recover the Greedy algorithm when $u_i=u=1$ for each resource $i$.	
	
	A natural question is whether there exists a PG algorithm with competitive ratio exceeding 0.5 for \wpa-DO. We answer this question in the negative by presenting a simple instance showing that the family of PG algorithms defined by \eqref{pg} has competitive ratio at most 0.5. %this is not the case.

	Without loss of generality, assume $u_i\in[0,1]$ for all $i\in I$; this can always be achieved by appropriately scaling the rewards $r_i$. We further assume that the distribution $U$ is regular in the sense that $E[u_i]\coloneqq\mu_i>0$. Now,  consider an instance of \wpa-DO with two disjoint groups of resources, $I_1$ and $I_2,$ each containing $m$ resources, where $m$ is sufficiently large. All resources have unit capacity. Each resource in $I_1$ has per-unit reward $1+\epsilon$, for some arbitrarily small $\epsilon>0$, while each resource in $I_2$ has per-unit reward 1. There are two arrivals. The first arrival has two feasible actions $\{a_1,a_2\}$. Action $a_1$ uses a unit of every resource in $I_1$, and action $a_2$ uses a unit of every resource in $I_2$. The second arrival has a single feasible action $a'_1$ that uses a unit of every resource in $I_1$. The optimal offline solution selects actions $a_2$ and $a'_1$, achieving a total reward of $2m$.
		
		Under a PG algorithm, the perturbed reward of action $a_1$ is $(1+\epsilon)\sum_{i\in I_1} u_i$. By the law of large numbers, this quantity concentrates  within a factor of $(1\pm \delta)$ of $(1+\epsilon)mE[u_i]$ with high probability, where $\delta\to0$ as $m\to \infty$. Consequently, for any fixed $\epsilon$, there exists a sufficiently large $m$ such that the PG algorithm selects action $a_1$ with high probability. As a result, the algorithm obtains total reward of $(1+\epsilon)m$, yielding a competitive ratio of at most $(1+\epsilon)\, 0.5$, which can be made arbitrarily close to $0.5$.

Finally, we note that applying the reduction technique for \wpa\ to provide upper bounds on the performance of the general family of \gla s for \wpa-DO leads to a violation of the invariance property. Specifically, the reduction for \wpa\ introduces new resources (see Appendix \ref{appx:gla}), and without more restrictive assumptions on the perturbation functions $u_i$, a \gla\ may no longer select the same sequence of actions on the original instance $G$ and the reduced instance $\gc$. As a result, the extended reduction technique developed in Appendix \ref{appx:gla} cannot be directly used to transfer performance guarantees for general \gla s from \wpa\ to \wpa-DO.

	}
		%	\begin{theorem}
			%	The family of PG algorithms defined by \eqref{pg} has competitive ratio at most 0.5.
			%	\end{theorem}
		%	\begin{proof}{Proof.} 
			
	%	\hfill\Halmos\end{proof}
	%The reduction technique does not directly apply to PG algorithm because the invariance property is violated. Following the construction in Theorem \ref{}, we observe that the PG algorithm on $\gc$ is as follows,
	
%	, here $u_{i,P_1,P_2}$ is a random variables drawn independently from the distribution $U_i$. For any continuous distribution $U_i$, the random variables $\{u_{i,P_1,P_2}\}_{P_2\mathcal{X}}$ are distinct w.h.p.\ and therefore PG on $\gc$ may select actions $\h{a}\in \h{\curac}$ with some probability.}
	%We show that this is not the case: no algorithm in the PG family can achieve a competitive ratio strictly greater than 0.5 for \wpa-DO, and consequently for the whole-page optimization problem.

\section{Missing Details for Non-monotone Functions}\label{appx:onsw}

\subsection{Non-Adaptive Algorithm: Challenges and Potential Approaches}\label{appx:onswnonadap}
In this section, we discuss the challenges of extending the reduction technique to analyze a non-adaptive algorithm for ONSW-SO. First, as noted in Remark \ref{redfail}, Cascade Sampling does not satisfy the invariance property. To address this issue, we consider a simpler algorithm called Greedy Sampling (Algorithm \ref{sinsam}).
\smallskip

\begin{algorithm}[H]
	\SetAlgoNoLine
	%	\KwIn{Node $\alpha$'s ID ($ID_{\alpha}$), and node $\alpha$'s
		%		neighbors' IDs within two communication hops.}
	%	\KwOut{The frequency number ($FreNum_{\alpha}$) node $\alpha$ gets assigned.}
	\KwIn{Parameter $p\in[0,1]$;}
	Set $\alg_1 = \emptyset$\;
	\For{every arrival $t\in[T]$}{
		%$\salg_t=0_t$\;
		Find element with maximum marginal value, $a_t=\argmax_{a\in \ac_t}\, F(a\mid \alg_t)$\;
		%set $\salg_t=a_t$\;			
		Choose action $a_t$ w.p.\ $p$ and choose the null action w.p.\ $1-p$\;
		Let $\salg_t$ denote the chosen action and set $R_{t+1}=R_{t}\cup \{\salg_t\}$\;
	}
	\caption{Greedy Sampling (Non-adaptive)}
	\label{sinsam}
\end{algorithm}
\smallskip

At every arrival, Algorithm \ref{sinsam} selects the greedy action $a_t$ w.p.\ $p$ and selects the null action ($0_t$) w.p.\ $1-p$. The action $a_t$ has non-negative marginal value because $F(a_t\mid R_t)\geq F(0_t\mid \alg_t)=0$. The sampling probability $p$ is a parameter that influences the competitive ratio and can be set to any value in $[0,1]$.   
% choosing an action at some arrival can result in negative marginal values of actions at future arrivals. 
Choosing $p=1$ gives us the deterministic Greedy algorithm which has a competitive ratio of 0 for ONSW % because choosing an action at some arrival can result in negative marginal values for all actions at future arrivals 
\citep{nuti}. 

\begin{remark}
	Algorithm \ref{sinsam} was introduced by \cite{harshaw2022power}, who showed that it is $\frac{p(1-p)}{1+p}$-competitive for ONSW. Note that $\max_{p\in[0,1]}\frac{p(1-p)}{1+p}=3-2\sqrt{2}$. In Appendix \ref{appx:sinsam}, we provide a short and simplified proof of this result by extending the proof template for Theorem \ref{advmain}, which may be of independent interest. 
\end{remark}

\subsubsection*{Challenges with Extending the Reduction Technique:} 
Given an instance $G$ of ONSW-SO, consider the instance $\gc$ as defined in Section \ref{sec:pas}. It can be verified that Greedy Sampling satisfies the invariance property. Specifically, when it selects an action with a positive value, it does so greedily. By extending Lemma \ref{key}, we can show that the greedy action $a_t$ lies in the set $\curac$ for each arrival $t\in[T]$.  

However, as mentioned earlier, another challenge in extending the reduction technique to ONSW-SO is that the objective $F$ (as defined in \eqref{def1}) may not be submodular when $f$ is non-monotone. Specifically,  in the proof of Lemma \ref{gcprop1} (included in Appendix \ref{appx:gcprop1}), we rely on both the \emph{monotonicity} and submodularity of $f$ to establish that $F$ is submodular. 

We believe that this a fundamental issue. To pinpoint the source of the problem, consider the extreme case where $G$ has deterministic outcomes. In this case, the optimal offline solution, \opt, is a deterministic subset of $\curac$. Let $o_t$ denote the action selected at arrival $t\in[T]$ in $\opt$.  To define instance $\gc$, we extend $F$ over $\curac\cup \h{\curac}$ such that, for each arrival $t$, the marginal values of the new action $\h{a}_t$ mimic those of action $o_t$. This ensures that \grd\ (and Greedy-like algorithms) do not select actions in $\h{\curac}$, while also guaranteeing that $F(\h{\curac})=\opt(G)$. 

When $F$ is monotone and submodular on the ground set $\curac$, we can accomplish this extension while preserving the monotonicity and submodularity properties by setting,
\[F(\h{a}_t\mid X)=F(o_t\mid X)\quad \forall X\subseteq \curac.\]
In particular, we have $F(\h{a}_t\mid X\cup o_t)=0$. However, when $F$ is non-monotone, we must have $F(\h{a}_t\mid X\cup o_t)\leq F(\h{a}_t\mid X)$ in order to preserve submodularity over the expanded ground set. This presents a problem: if $F(\h{a}_t\mid X)$ is negative and $F(X\cup o_t)=0$, we cannot guarantee the non-negativity of $F$ on the expanded ground set (see Example \ref{fineg}). In other words, for non-monotone functions, it is unclear whether we can extend $F$ over the new ground set in a way that preserves both its non-negativity and submodularity, while also ensuing that the marginal values of the new actions mimic those of certain existing actions.
\begin{example}\label{fineg}
	
	Consider a ground set $\curac=\{1,2\}$ and let $F(\emptyset)=F(1)=0$, $F(2)=1$, and $F(\{1,2\})=0$. %This is not a cut function of any graph. 
	We add a new element $\h{1}$ and extend $F$ such that, $F(\h{1}\mid 2)=F(1\mid 2)=-1$. To ensure submodularity of $F$, we need that, $F(\h{1}\mid \{1,2\})\leq F(\h{1}\mid 2)=-1$. Observe that the resulting function has a negative value, $F(\{1,\h{1},2\})\leq F(2)+2F(1\mid 2)=-1$.
	
\end{example}

\subsection{Simplified Analysis of Greedy Sampling}\label{appx:sinsam}
In the analysis of many of the algorithms in this paper, we use monotonicity to argue that the (set) union of the optimal solution and the solution of an online algorithm has at least as much value as the optimal solution alone. This is not true for non-monotone functions and % and we use the following result to show a weaker version of the property for the randomized algorithms that we analyze in this section. 
the following lemma is the main ingredient that fills the technical gap created in the absence of monotonicity. 
\begin{lemma}[Lemma 2.2 in \cite{buchbinder2014submodular}]\label{subsample}%{\color{red} other citation \cite{feige2011maximizing}?} 
	Consider a non-negative submodular function $w$ over ground set $W$. Let $W(p)$ denote a random subset of $W$ where each element appears with probability at most $p$ (not necessarily independently). Then, 
	\[E[w(W(p)\cup S)]\geq (1-p)\,w(S)\quad \forall S\subseteq W.\]
\end{lemma}
%\begin{remark}
	%It would be interesting to consider the setting of online \emph{non-monotone} SO welfare maximization (with and without \pas). While we believe that our reduction from settings with \pas\ to settings without \pas\ can be generalized for SO functions, 
	Lemma \ref{subsample} does not hold for submodular order functions and finding a suitable substitute for this lemma appears to be a challenging technical problem.
%\end{remark}
\begin{theorem}\label{onsw1}
	Algorithm \ref{sinsam} with sampling probability $p\in[0,1]$ is $\frac{p(1-p)}{1+p}$-competitive for ONSW with adversarial arrivals. 
\end{theorem}
%Choosing $p=\sqrt{2}-1$ leads to the optimal competitive ratio of $3-2\sqrt2$. We give a very short and simple proof of Theorem \ref{onsw1} in Appendix \ref{} that may be of interest to some readers. Our proof is based on the same template as the proof of Theorem \ref{advmain}.

\begin{proof}{Proof.}
	%Let $Q_{T+1}$ denote the set of all possible actions that Algorithm \ref{sinsam} chooses with non-zero probability. 
	Let $\alg_{T+1}$ denote the final (random) set of actions output by Algorithm \ref{sinsam}. Observe that each action $a\in \curac$ appears in $\alg_{T+1}$ with probability at most $p$. Let $\opt_{T+1}$ denote the optimal offline solution, i.e., $\opt_{T+1}=\argmax_{a_t\in \ac_t\,\, \forall t\in [T]} F(\cup_{t\in [T]} \{a_t\} )$. Finally, for $t\in [T]$, let $\{\salg_t\}=\alg_{T+1}\cap \ac_t$,\, $\alg_t=\{\salg_1,\cdots,\salg_{t-1}\}$, and $\{o_t\}=\opt_{T+1}\cap \ac_t$. Let $E[\cdot]$ denote expectation over the randomness in Algorithm \ref{sinsam}. %Thus, Every element in $Q_{T+1}$ is realized w.p.\ $p$, we have,
	%	\[E[F(\alg_{T+1}\cup \opt_{T+1})]\geq (1-p)F(\opt_{T+1}).\]
	%	Now, 
	\begin{eqnarray*}
		(1-p)\,F (\opt_{T+1})&\leq &E[F\left(\alg_{T+1}\cup \opt_{T+1}\right)]\\
		&=&	E\left[F\left(\alg_{T+1} \cup \left( \cup_{t\in [T]} \{o_{t}\}\backslash \{\salg_{t}\}\right)\right)\right],\\
		&\leq & E[F(\alg_{T+1})]+ \sum_{t\in [T]} E[F\left(\{o_{t}\}\backslash \{\salg_{t}\}\mid \alg_{t}\right)],\\
		%		&\leq & F(\alg_{T+1})+ \sum_{t\in [T]} F(o_t\mid \alg_t),\\
		&\leq &E[F(\alg_{T+1})]+ \sum_{t\in [T]}  \frac{1}{p}E[F(\salg_t\mid \alg_{t})],\\
		&=& \left(1+\frac{1}{p}\right)\,E[F(\alg_{T+1})].
	\end{eqnarray*}
	Assuming correctness, this completes the proof. Now, the first inequality follows from Lemma \ref{subsample} with $w=F$, $W(p)=\alg_{T+1}$ and $S=\opt_{T+1}$. The second inequality follows by linearity of expectation and submodularity of $F$. The third inequality follows from the following claim,
	\[E[F(\salg_t\mid \alg_{t})]\geq \,p\,E[F\left(\{o_{t}\}\backslash \{\salg_{t}\}\mid \alg_{t}\right)]\quad \forall t\in [T].\]
	%If $o_t=\salg_t$, this follows from the fact that Algorithm \ref{sinsam} only selects actions with non-zero marginal value. Let $o_t\neq \salg_t$, and 
	To show this claim, we fix $t\in [T]$ and let $a_t=\argmax_{a\in \ac_t} F(a\mid \alg_t)$. Note that $F(a\mid \alg_t)\geq \max\{0,\, F\left(\{o_{t}\}\backslash \{\salg_{t}\}\mid \alg_{t}\right)\}$ because $0_t\in A_t$. Let $E_{t}[\cdot]$ denote expectation w.r.t.\ the random sampling of action $a_t$. We have, 
	\[E_{t}[F(\salg_t\mid \alg_t)]=\, p\,F(a_t\mid \alg_t) + (1-p) F(0_t\mid \alg_t)=\, p\, F(a_t\mid \alg_t)\, \geq\, p\,F\left(\{o_{t}\}\backslash \{\salg_{t}\}\mid \alg_{t}\right). \]
%	here the inequality follows by definition of $a_t$. 
	%here the expectation $E_{t}[\cdot]$ in the first inequality is w.r.t.\ the random subsampling of action $a_t$. 
	\hfill\Halmos	\end{proof}

\subsection{Analysis of Cascade Sampling} \label{appx:cassam}
\begin{repeattheorem}[Theorem \ref{nonmadapt}.]
	Cascade Sampling (Algorithm \ref{cassam}) is $0.25$-competitive for ONSW-SO in the adversarial model. %Furthermore, the algorithm is $\frac{1}{2}$-competitive for OSOW-SO.
\end{repeattheorem}
\begin{proof}{Proof.} 
	Since Cascade Sampling is both randomized and adaptive, we need substantial new notation for this proof that will be introduced at various points as needed to clarify the argument. 
	
	Let \aalg\ denote the Cascade Sampling algorithm and let \opt\ denote the optimal adaptive offline benchmark. Fix an arbitrary realized mapping $P\in \curn$ for both \aalg\ and \opt. %and consider the actions chosen by \aalg\ and \opt\ for this mapping. 
	With $P$ fixed, every action maps to a unique outcome and \opt\ outputs a deterministic set of actions. %Let $o_t\in \ac_t$ denote the action chosen by \opt\ at arrival $t$. Let $a_t\in \ac_t$ denote the (random) action chosen by \aalg\ at arrival $t$. %Let $\opt_t=\{o_1,\cdots,o_{t-1}\}$ and $\aalg_t=\{a_1,\cdots,a_{t-1}\}$. 
	Let $e^{O}_t$ and $e^{A}_t$ denote the realized outcomes of the actions chosen by \aalg\ and \opt\ at arrival $t$. Let $\cal{E}^{A}=\{e^A_1,e^A_2,\cdots,e^A_T\}$ and $\cal{E}^{O}=\{e^O_1,e^O_2,\cdots,e^O_T\}$ denote the set of all outcomes in \aalg\ and \opt\ respectively. Let $\cal{E}^A_t=\{e^A_1,\cdots, e^A_{t-1}\}$ denote the set of realized outcomes in \aalg\ prior to arrival $t$. %Note that $\cal{E}^A$ and $\cal{E}^O$ depend on the realized mapping $P$.
 
	Let $E[\cdot]$ denote expectation w.r.t.\ the randomness in \aalg. Note that every outcome $e\in P$ appears in $\cal{E}^A$ with probability at most 0.5 (not necessarily independently) because (excluding the null action) $\aalg$ does not choose any action w.p.\ more than 0.5. Thus,
	\begin{eqnarray}
		0.5\,f (\cal{E}^O)&\leq &E[f\left(\cal{E}^A\cup \cal{E}^O\right)],\nonumber\\
		&=&	E\left[f\left(\cal{E}^A \cup \left( \cup_{t\in [T]} \{e^O_{t}\}\backslash \{e^A_{t}\}\right)\right)\right],\nonumber\\
		&\leq & E[f(\cal{E}^A)]+ \sum_{t\in [T]} E\left[f\left(\{e^O_{t}\}\backslash \{e^A_{t}\}\mid \cal{E}^A_t\right)\right]\label{onswinterim}
	\end{eqnarray}
	We use Lemma \ref{subsample} with $w=f$, $W(0.5)=\cal{E}^A$, and $S=\cal{E}^O$ to get the first inequality. The second inequality follows by linearity of expectation and submodularity of $f$.
	
	Thus far, we fixed $P$ and followed the proof template of Theorem \ref{advmain} to derive upper and lower bounds on $f(\cal{E}^A\cup \cal{E}^O)$. %\ref{onsw1} for the setting without stochastic outcomes. 
	Now, we consider inequality \eqref{onswinterim} in expectation w.r.t.\ the randomness in the mapping $P$. We begin with some notation. Let $\bm{P}$ denote a random mapping from actions to outcomes. We decompose $\bm{P}$ into $\bm{P}_t$ and $\bm{P}_{-t}$, where $\bm{P}_t$ is the random mapping from $\ac_t$ to $N_t$ and $\bm{P}_{-t}$ is the random mapping from $\curac\backslash \ac_t$ to $N\backslash N_t$. Let $E_{\bm{P}}[\cdot], E_{\bm{P}_t}[\cdot], E_{\bm{P}_{-t}}[\cdot]$ denote expectation w.r.t.\ the randomness in $\bm{P}$, $\bm{P}_t$, and $\bm{P}_{-t}$ respectively. Note that the sets $\cal{E}^A$ and $\cal{E}^O$ are now random sets but but we continue to use the original notation for brevity. Also, the set of outcomes $\cal{E}^A_{t}$ is independent of the mapping $\bm{P}_{t}$.  Taking expectation w.r.t.\ $\bm{P}$ on both sides of inequality \eqref{onswinterim}, we have,
	\begin{eqnarray}
		0.5 E_{\bm{P}}[f(\cal{E}^O)]	&\leq & E_{\bm{P}}[E[f(\cal{E}^A)]]+ \sum_{t\in [T]} E_{\bm{P}}\left[E\left[f\left(\{e^O_{t}\}\backslash \{e^A_{t}\}\mid \cal{E}^A_t\right)\right]\right],\nonumber\\%\label{onswinterim2}
		&\leq & E_{\bm{P}}[E[f(\cal{E}^A)]]+ \sum_{t\in [T]} E_{\bm{P}_{-t}}\left[E_{\bm{P}_t}\left[E\left[f\left(\{e^O_{t}\}\backslash \{e^A_{t}\}\mid \cal{E}^A_t\right)\right]\right] \big|\, \bm{P}_{-t}=P_{-t}\right],\label{lawtot1}\\
		&\leq & E_{\bm{P}}[E[f(\cal{E}^A)]]+ \sum_{t\in [T]} E_{\bm{P}_{-t}}\left[E_{\bm{P}_t}\left[E\left[f\left( e^A_{t}\mid \cal{E}^A_t\right)\right]\right] \big|\, \bm{P}_{-t}=P_{-t}\right],\label{onswinterim2}\\
		&= & E_{\bm{P}}[E[f(\cal{E}^A)]]+ \sum_{t\in [T]} E_{\bm{P}}\left[E\left[f\left( e^A_{t}\mid \cal{E}^A_t\right)\right]\right] ,\label{lawtot2}\\
		&=& 2\, E_{\bm{P}}[E[f(\cal{E}^A)]].\nonumber
	\end{eqnarray}
	First, note that the final inequality gives us the desired lower bound on competitive ratio of \aalg\ because $E_{\bm{P}}[f(\cal{E}^O)]$ is the objective value of the offline benchmark and $E_{\bm{P}}[E[f(\cal{E}^A)]]$ is the objective value of \aalg. Inequality \eqref{lawtot1} and identity \eqref{lawtot2} follow from the law of total expectation (also called the tower rule). To complete the proof, we need to prove inequality \eqref{onswinterim2}. To this end, it suffices to show that the following inequality holds for every $t\in [T]$, conditioned on $\bm{P}_{-t}=P_{-t}$:
	\begin{equation}\label{onswinterim3}
		E_{\bm{P}_t}\left[E\left[f\left(\{e^O_{t}\}\backslash \{e^A_{t}\}\mid \cal{E}^A_t\right)\right]\right] \leq E_{\bm{P}_t}\left[E\left[f\left( e^A_{t}\mid \cal{E}^A_t\right)\right]\right].
	\end{equation}
	\paragraph{Proof of \eqref{onswinterim3}:} Let $o_t$ denote the action chosen by $\opt$ at arrival $t$. Since \opt\ is non-anticipatory, the selection of $o_t$ is independent of $\bm{P}_t$. Let $\{{a}_1,{a}_2,\cdot,{a}_{m_t}\}$ denote the ordered set of actions with non-negative marginal value at arrival $t$ in \aalg, sorted in descending order of marginal values. Note that inequality \eqref{onswinterim3} holds trivially when $o_t\notin \{a_1,\cdots, a_{m_t}\}$, since  the left-hand-side of \eqref{onswinterim3} is at most zero, while the right-hand-side is non-negative. Similarly, if $m_t=0$, i.e., there are no actions with non-negative marginal value at $t$, then the right-hand-side is zero whereas the left-hand-side is at most zero. So we assume that $m_t\geq 1$ and  $o_t={a}_\ell$ for some $\ell\in[m_t]$.
	
	Recall that \aalg\ chooses action ${a}_\ell$ with probability $2^{-\ell}$. Interchanging the order of expectations on the right-hand-side of inequality \eqref{onswinterim3} we get,
	\[E_{\bm{P}_t}\left[E\left[f\left(e^A_{t}\mid \cal{E}^A_t\right)\right]\right] =E\left[E_{\bm{P}_t}\left[f\left(e^A_{t}\mid \cal{E}^A_t\right)\right]\right] =\,\sum_{j\in[m_t]} 2^{-j}\sum_{e\in N({a}_j)}p_e\, f(e\mid \cal{E}^A_t),\]
	here we ignored the null action because it contributes no value. Now, we make two important observations. First, with probability $1-1/2^{\ell}$, \aalg\ does not choose ${a}_\ell$ (which is the same as $o_t$), and the set $\{e^O_t\}\backslash \{e^A_t\}=\{e^O_t\}= \bm{P}_t\cap N({a}_\ell)$. With probability $1/2^{\ell}$, \aalg\ chooses ${a}_\ell$ and we have $\{e^O_t\}\backslash \{e^A_t\}=\emptyset$. Thus,
	\[E_{\bm{P}_t}\left[E\left[f\left(\{e^O_{t}\}\backslash \{e^A_{t}\}\mid \cal{E}^A_t\right)\right]\right]= (1-2^{-\ell}) \sum_{e\in N({a}_\ell)}p_e\, f(e\mid \cal{E}^A_t).\]
	The second observation is simply that, % \aalg\ chooses an action from the set $\{{a}_1,\cdots, {a}_{\ell}\}$ with probability,
	\[\sum_{j\leq \ell}2^{-j}=1-2^{-\ell}.\]
	Using these two observations along with the fact that every action in $\{a_1,a_2,\cdots,a_{m_t}\}$ has non-negative marginal value we obtain, 
	\begin{eqnarray*}
		E_{\bm{P}_t}\left[E\left[f\left(\{e^O_{t}\}\backslash \{e^A_{t}\}\mid \cal{E}^A_t\right)\right]\right]&=& (1-2^{-\ell}) \sum_{e\in N({a}_\ell)}p_e\, f(e\mid \cal{E}^A_t),\\
		&=&  \sum_{j\in[\ell]} \left[2^{-j} \sum_{e\in N({a}_\ell)}p_e\, f(e\mid \cal{E}^A_t)\right],\\
		&\leq &  \sum_{j\in[\ell]} \left[2^{-j} \sum_{e\in N({a}_j)}p_e\, f(e\mid \cal{E}^A_t)\right],\\
		&\leq &  \sum_{j\in[m_t]} \left[2^{-j} \sum_{e\in N({a}_j)}p_e\, f(e\mid \cal{E}^A_t)\right],\\
		&=& E_{\bm{P}_t}\left[E\left[f\left(\{e^A_{t}\}\mid \cal{E}^A_t\right)\right]\right].
	\end{eqnarray*}
 The two inequalities follow from the fact that, $\sum_{e\in N({a}_1)}p_e\, f(e\mid \cal{E}^A_t) \geq\sum_{e\in N({a}_2)}p_e\, f(e\mid \cal{E}^A_t)\geq \cdots\geq \sum_{e\in N({a}_\ell)}p_e\, f(e\mid \cal{E}^A_t)\geq \cdots \geq \sum_{e\in N({a}_{m_t})}p_e\, f(e\mid \cal{E}^A_t)\geq 0.$
	\hfill\Halmos	\end{proof}
\begin{remark}[\agrd\ is 0.5-competitive for OSOW-SO]\label{agrd0.5}
 For several special cases of OSOW-SO, it is known that \agrd\ achieves a competitive ratio of 0.5. However, in Example~\ref{algequiv}, we presented an instance of OSW-SO in which \agrd\ attains only half the expected reward of \grd. In fact, a refinement of the analysis of Cascade Sampling shows that the competitive ratio of \agrd\ is exactly $0.5$. Specifically, when $f$ is  monotone and admits an arrival-consistent submodular order, we obtain the following strengthened version of inequality \eqref{onswinterim}:
\begin{equation}\label{agrdinterim}
	f (\cal{E}^O)\leq  f(\cal{E}^A)+ \sum_{t\in [T]} f\left(\{e^O_{t}\}\backslash \{e^A_{t}\}\mid \cal{E}^A_t\right).
\end{equation}
Moreover, by the definition of \agrd, we have
\[E_{\bm{P}_t}\left[f\left(\{e^O_{t}\}\backslash \{e^A_{t}\}\mid \cal{E}^A_t\right)\right]\leq  E_{\bm{P}_t}\left[f\left(\{e^A_{t}\}\mid \cal{E}^A_t\right)\right].\]
Taking expectation over $\bm{P}$ on both sides of \eqref{agrdinterim}, we conclude that \[E_{\bm{P}}[f(\cal{E}^O)]\leq 2\, E_{\bm{P}}[f(\cal{E}^A)],\] 
as desired.
	\end{remark}
	%	\[E[f\left(\{o_{t}\}\backslash \{\salg_{t}\}\mid E_{t}\right)]=(1-2^{-\ell})E[f(o_t\mid E_t)]= \sum_{k=1}^{\ell}2^{-k} E[f(o_t\mid E_t)]\leq \sum_{k=1}^{\ell} E[f(\salg^k_t\mid E_t)]\leq E[f(\salg_t\mid E_t)].\]
	%	Imagine the above was on some sample fixed sample path of action to outcome mappings. Taking expectation over the mappings,
	%	\begin{eqnarray*}
		%		E[0.5\,f (\opt(T+1))]&\leq & E[f(\alg(T+1))]+ \sum_{t\in [T]} E[f(\opt_t\mid \alg(t))],\\
		%		&= & E[f(\alg(T+1))]+ \sum_{t\in [T]} E_{\bm{\Omega}_t}\left[E_{\bm{\omega}_t}[f(\opt_t\mid \alg(t))\,\mid\, \bm{\Omega}_t=\Omega_t]\right],\\
		%		&\leq & E[f(\alg(T+1))]+ \sum_{t\in [T]} E_{\bm{\Omega}_t}\left[E_{\bm{\omega}_t}[f(\alg_t\mid \alg(t))\,\mid\, \bm{\Omega}_t=\Omega_t]\right],\\
		%		&= & E[f(\alg(T+1))]+ \sum_{t\in [T]} E[f(\alg_t\mid \alg(t))],\\
		%		&=& 2E[f(\alg)],
		%		% 	&\leq & \theta_i + \sum_{t\in [T]} \lambda_{i,t},
		%	\end{eqnarray*}

%\subsection{Missing Example from Section \ref{sec:onswnonadap}}\label{appx:onsweg}

%

\end{APPENDICES}

\end{document}